\DeclareMathOperator*{\argmax}{arg\,max}
\newtheorem{assumption}{Assumption}
\newtheorem{corollary}{Corollary}
\newtheorem{definition}{Definition}
\newtheorem{proposition}{Proposition}
\newtheorem{theorem}{Theorem}
\newtheorem{lemma}{Lemma}
\newtheorem{claim}{Claim}
\newtheorem{fact}{Fact}
\newtheorem{remark}{Remark}
\let\originaleqref\eqref
\renewcommand{\eqref}{\originaleqref}
\newcommand{\bb}{\bm{b}}
\newcommand{\bx}{\bm{x}}
\newcommand{\bp}{\bm{p}}
\newcommand{\bv}{\bm{v}}
\newcommand{\bh}{\bm{h}}
\begin{document}

\title{Fair Allocation in Dynamic Mechanism Design}
\author{Alireza Fallah \thanks{University of California, Berkeley} \and 
Michael I. Jordan$^*$ \and Annie Ulichney$^*$}

\date{October 2024}
\maketitle

\sloppy

\begin{abstract}
We consider a dynamic mechanism design problem where an auctioneer sells an indivisible good to groups of buyers in every round, for a total of $T$ rounds. The auctioneer aims to maximize their discounted overall revenue while adhering to a fairness constraint that guarantees a minimum average allocation for each group. We begin by studying the static case ($T=1$) and establish that the optimal mechanism involves two types of subsidization: one that increases the overall probability of allocation to all buyers, and another that favors the groups which otherwise have a lower probability of winning the item. We then extend our results to the dynamic case by characterizing a set of recursive functions that determine the optimal allocation and payments in each round. Notably, our results establish that in the dynamic case, the seller, on the one hand, commits to a participation bonus to incentivize truth-telling, and on the other hand, charges an entry fee for every round. Moreover, the optimal allocation once more involves subsidization, which its extent depends on the difference in future utilities for both the seller and buyers when allocating the item to one group versus the others. Finally, we present an approximation scheme to solve the recursive equations and determine an approximately optimal and fair allocation efficiently.
\end{abstract}

\section{Introduction}
Auctions play a pivotal role in allocation decisions across various domains, serving as a structured methodology for determining the distribution of resources based on bids. Housing, government contracts, electromagnetic spectrum rights, and advertising slots are a few of the many domains where allocations are determined by auctions. In many of the real-world applications of auction-based allocations, the resource in question is both valuable and limited, which raises concerns about fairness. 

For instance, housing auctions are often governed by policies addressing the needs of low- and moderate-income households, such as Affordable Housing Quotas in the United Kingdom to tax credits and vouchers in the United States \citep{gallent2000planning, keightley2014introduction}. Government procurement contracts for goods and services in the United States are subject to the United States Small Business Act which outlines target contracting rates across categories (e.g., woman-owned businesses, veteran-owned businesses, historically underutilized zones) \citep{cravero2017socially}. The United States Federal Communications Commission drives initiatives ensuring , for instance, that small and regional companies maintain reasonable access to connectivity \citep{rachfal2019broadband}. 

The importance of incorporating a consideration of fairness in important real-world use cases motivates recent work that extends classical work on auction design emphasizing economic efficiency to ensure outcomes do not disproportionately favor some bidders over others. Integrating fairness into auction designs is, however, a source of significant complexity, given that buyers often have different valuations, which can lead to strategic bidding behavior that may skew fairness. 

In this paper, we focus on studying the question of fairness in a dynamic mechanism design setting. We consider a setting in which a seller allocates an indivisible item through an auction at each of $T$ rounds, to two groups of buyers, with each group comprising $n$ buyers. At each round, the buyers’ values are realized from an underlying distribution (which may differ between the two groups), and the buyers then submit their bids. The seller's goal is to decide on an allocation rule as a function of the submitted bids that maximize their total discounted revenue with a discount factor $\delta$, while ensuring that each group receives a minimum allocation, or more specifically, that group $i$'s average discounted allocation is greater than or equal to some $\alpha_i$.

% This fairness constraint enforces a proportional notion of fairness pertaining to the \emph{number} of items each group wins as opposed to, for instance, the \emph{total value} won by each group. 
% This notion of fairness is relevant in settings where bids reflect a buyer’s \emph{willingness to pay} as well as the intrinsic value of the item. In particular, we address cases where one group’s willingness to pay is systematically lower than another’s. For example, in the setting of housing allocation, a house in a desirable location might have a high value for both low- and high-income groups, but each group’s willingness to pay is determined by their wealth. In such scenarios, the allocation ratio is a more effective approach to mitigate allocation inequality compared to using buyers’ value. This is why most regulations regarding fair housing allocation focus on the percentage of housing allocated to low-income groups as a measure of success.

This fairness constraint enforces a proportional notion of fairness regarding the number of items each group wins. There are several advantages to adopting such a fairness notion. First, it is typically more straightforward from a policy perspective compared to, for instance, focusing on the total value won by each group. For example, most regulations concerning fair housing allocation emphasize the percentage of housing allocated to low-income groups as a measure of success \citep{mccarty2014overview}. Moreover, a buyer’s value in private-value auctions is not solely determined by the intrinsic value of the item but also depends on the financial constraints of the buyers or asymmetric information about the item’s value. Therefore, a proportional fairness constraint is a suitable choice in cases where one group’s values are systematically lower than another’s. For example, in the context of housing allocation, low-income individuals may assign lower monetary value to homes compared to wealthier individuals, as they are more concerned with basic shelter than with investment or resale value. In such scenarios, the allocation ratio is a more effective approach to mitigate allocation inequality compared to using buyers’ value.

We begin by considering the static case with $T=1$. In the absence of constraints, it is well known that the optimal allocation in this case is the second-price auction with a reserve price, which allocates the item to the buyer with the highest bid, conditioned on the bid being above a certain reserve price. Here, we establish that the optimal allocation under the fairness constraint has two additional features: first, the optimal fair mechanism subsidizes one group that would otherwise not meet a target fairness constraint. Second, the optimal allocation may increase the chance of allocating the item to both groups in general by reducing the reserve price (again disproportionately in favor of one group).

We next focus on the case of general $T > 1$. Here, we characterize the optimal allocation through a set of recursive functions. At any round $t$, we define $\text{residual minimum allocation}$, which, roughly speaking, given the allocation so far, updates the fairness constraint as if we were to start at this round. We establish that, given the residual minimum allocation, we may have to allocate the item to one of the two groups to be on track to satisfy the fairness constraint overall, and in this case, we would simply allocate it to the highest bidder in the group and charge them at the second-highest bid within their group.  That said, when we have the option to allocate to both groups, the optimal allocation rule is similar to the static case at a high level: when allocating an item to a group, it is assigned to the buyer with the highest bid. Second, the decision regarding which group receives the item takes the form of a subsidy. 
However, determining the exact amount of this subsidy, as well as the payment functions, is more challenging in the dynamic case because buyers in each group may consider underreporting their value in the current round to increase their chances of winning the item in future rounds, especially if they expect their value to be higher in those future rounds.

In particular, we establish that, as the above reasoning suggests, there is a utility associated with not winning the item. This utility is the difference between a buyer’s expected utility in future rounds if their group does not win the item this round, compared to the scenario where their group wins the item at the current round. We show that, at the optimal allocation, and to incentivize buyers to report their values truthfully, the seller commits to paying them a participation bonus if their group wins the item, and this bonus is equal to the aforementioned net utility of not winning the item. On the other hand, the seller charges them an entry fee, which is equal to the expected bonus payment that they would miss if they did not participate. 
Finally, the amount of the subsidy in optimal allocation also depends on the above net utility of not winning the item as well as the seller's future utility from allocating the item to one group or the other.

In summary, our results show that the fairness constraint in the dynamic case introduces two new features in contrast to the classical auction: a subsidy term in the allocation rule and two new payment transfers, specifically the entry fee paid by all participants to the seller and the participation bonus paid by the seller to the winning group. We further examine the impact of these new payment terms and show that, while the participation bonus the seller must pay could exceed the entry fee, its excess decreases as the number of buyers grows, meaning that for large auctions, the entry fee offsets the participation bonus.

As this overview suggests, finding the optimal allocation requires solving a set of recursive functions, and as we will see, the computational complexity may accordingly grow exponentially with the number of rounds. Our next contribution is to provide an efficient approximation scheme. In particular, we establish that, for any $\varepsilon > 0$, we can find an allocation that guarantees the same utility to the seller and a fairness constraint of $(1-\varepsilon)(\alpha_i-\varepsilon)$ for group $i$. This approximation algorithm requires $\mathcal{O}(\varepsilon^{-1/(1-\delta))})$ calls to an oracle that computes integrals. We further provide a constant approximation scheme which requires $\text{Poly}(1/(1-\delta))$ calls to the oracle, which is more applicable to scenarios where $\delta$ is close to one. 

 We conclude our paper by demonstrating that all our results and insights in both the static and dynamic case extend to the case with more than two groups. We also present a set of a numerical experiments that illustrates the impact of the fairness constraint on the utility of both the seller and the buyers.

\subsection{Related Work}
Our work is related to a growing body of literature on fair resource allocation and fairness in mechanism design. Here we highlight particularly relevant segments of the rich set of works that address normative, methodological, and algorithmic aspects of fair division. For a broader review of the literature, we refer to an extended set of related works \Cref{sec:related_work} as well as several recent surveys ~\citep[see, e.g.,][]{bouveret2016fair, walsh2020fair, amanatidis2023fair}.

\paragraph{Notions of Fairness} Normative questions surrounding allocative fairness are longstanding, where foundational works established prominent notions of fairness for the allocation of divisible goods, notably envy-freeness \citep{foley1966resource} and proportionality \citep{steinhaus1948problem}.  However, classical notions of fairness often cannot be feasibly implemented for indivisible goods \citep{procaccia2014fair}. The complexities of the indivisible setting necessitate new notions and methodologies of fair allocation that introduce randomness \citep{babaioff2022best} as well as those that are approximations or relaxations of earlier concepts \citep{budish2011combinatorial, caragiannis2023new, conitzer2017fair, amanatidis2018comparing}. Where classical notions of fairness assumed agents are entitled to an equal allocate share, our work joins those that generalize to allow for weighted or asymmetric notions of fairness ~\citep[see, e.g.,][]{babaioff2023fair, chakraborty2022weighted}. 

\paragraph{Truthful Fair Allocation Mechanisms} Where foundational works in fair division largely concern the non-strategic allocation setting, our work joins those addressing the setting of strategic users and incentive compatibility. 
The closest setting to ours is that of \citet{pai2012auction}, who examine a similar notion of fairness in the single-round (static) case, but only for one group. We depart from their work in our consideration of the dynamic case, multiple groups, and an an approximation scheme. Additionally, even in the static case, we consider a minimum allocation constraint for all groups, which introduces a second type of subsidization. 

The works \cite{lipton2004approximately}, \cite{babichenko2023fair} and \cite{caragiannis2009low} consider truthful fair allocations with respect to variants of envy-freeness, \cite{sinha2015mechanism} take the social welfare maximization approach, and \cite{amanatidis2016truthful},  \cite{amanatidis2017truthful}, \cite{amanatidis2023allocating}, and  \cite{cole2013mechanism} consider allocation to strategic users under variants of proportional notions of fairness. The works \cite{babaioff2022fair} and \cite{babaioff2024share} explore more generally the feasibility of allocations across classes of fairness constraints in the strategic setting. Our work considers the same strategic setting and provides the efficient mechanism satisfying a given possibly-asymmetric proportional notion of fairness (if one exists).

\paragraph{Dynamic Fairness} Our work is also a departure from the dominating static focus of much of the existing fair division literature; we present a mechanism that satisfies a fairness constraint over numerous rounds. We build upon foundational works in dynamic mechanism design surveyed in \cite{vohra2012dynamic} and \cite{bergemann2019dynamic} by additionally considering fairness in the dynamic strategic setting. Other works that consider allocative fairness in a dynamic setting vary in their characterization of what changes from round to round. Where some works suppose that resources are fixed while buyers change in time \citep{sinclair2022sequential, vardi2022dynamic}, our work joins those that consider sequentially allocating a stream of items \citep[]{aleksandrov2015online, aleksandrov2017pure, zeng2020fairness, gkatzelis2021fair, bogomolnaia2022fair}. However, our mechanism is suited for a (possibly asymmetric) proportional rather than envy-free notion of fairness, and we also allow for valuations that are not necessarily binary-valued. Other works in the dynamic setting consider fairness amidst uncertainty in, for instance, supply \citep[]{bateni2022fair, benade2024fair}. Our work addresses the same sequential setting where only buyer values are stochastic, however, our mechanism incentivizes truthfulness rather than assuming it. Similarly, several works consider a dynamic notion of fairness under uncertain demand with a focus on effectively meeting demand \citep[]{lien2014sequential, elzayn2019fair, donahue2020fairness, manshadi2021fair}. These works share our ex ante notion of fairness, but depart from our setting by specifying a decision rule that is committed to ex ante relative to the realization of buyer values each round.

\paragraph{Fairness and Efficiency} Our work also connects with literature on market-based redistribution and the relationship of fairness and efficiency. This line of research explores optimal mechanisms that balance efficient allocation and redistribution in response to inequality, with applications in, for instance, housing markets, vaccine distribution, and nonprofit operations \citep{dworczak2021redistribution, akbarpour2024economic, ma2023fairness, alexsandrov2015foodbank}.

\paragraph{Empirical Perspectives} A separate set of related works presents empirical studies of the impact on efficiency of subsidizing targeted groups of bidders in procurement auctions \citep{krasnokutskaya2011bid,athey2013subsidy}. These works complement ours by pointing to bid subsidies as the cost-optimal mechanism to enforce fairness.

%%%%%%%%%%%%%%%%%%%%%%%%%%
%%%%%%%%%%%%%%%%%%%%%%%%%%
%%%%%%%%%%%%%%%%%%%%%%%%%%
\section{Model} \label{sec:model}
We consider a monopolist seller that interacts with two groups of buyers over $T$ rounds, with $n$ buyers in each group.\footnote{Here, for simplicity in notation, we assume that the sizes of two groups are equal, each being $n$. Our analysis and results are, however, applicable to the general case with unequal group sizes.} At each round $t \in [T]$, the seller conducts an auction to sell a single indivisible item. The private value of buyer $k \in [n]$ from group $i \in [2]$ (or simply, buyer $(i,k)$) for this item is denoted by $v_{i,k}^t$, which is drawn from the publicly known and full-support continuous distribution with density $f_{i}^t: [\underline{v}_i^t, \bar{v}_i^t] \to \mathbb{R}$. We denote its cumulative distribution function by $F_{i}^t(\cdot)$. We assume the buyers' values are independent within and between rounds.

\paragraph{Timeline of the auction:} At each time $t$, buyers' private values are realized, and then each buyer $(i,k)$ submits a bid $b_{i,k}^t$. We denote the vector of bids at round $t$ by $\bb^t$. The seller then conducts the auction and decides on the allocation $\bx^t(\bb^t, \bh^t)$ where $\bh^t$ denotes the public history of allocations up to (but not including) time $t$, i.e., which buyers received the items in the previous $t-1$ rounds. In particular, $x_{i,k}^t(\bb^t, \bh^t) \in \{0,1\}$ denotes whether the $k$th buyer in group $i$ gets the item in round $t$. Notice that since we allocate a single item at each round, we should have $\sum_{i=1}^2 \sum_{k=1}^n x_{i,k}^t(\bb^t, \bh^t) \leq 1$ for every $t \in [T]$. Finally, buyer $(i,k)$ makes the payment $p_{i,k}^t(\bb^t, \bh^t)$ to the seller. The vector of payments at time $t$ is denoted by $\bp^t(\bb^t, \bh^t)$. Notice that the mechanism is determined by the allocation and payment functions $(\bx^{1:T}, \bp^{1:T})$. 

\paragraph{Utility functions:} The utility function of buyer $(i,k)$ at time $t$ is given by 
\begin{equation} \label{eqn:buyer_k_utility_t}
\mathcal{U}_{i,k}^t(v_{i,k}^t; \bb^t, \bh^t) := x_{i,k}^t(\bb^t, \bh^t) v_{i,k}^t - p_{i,k}^t(\bb^t, \bh^t).    
\end{equation}
The buyer $(i,k)$'s overall utility over $T$ rounds is given by
\begin{equation} \label{eqn:buyer_k_utility}
\mathcal{U}_{i,k}(\bv_{i,k}^{1:T}; \bb^{1:T}, \bh^{1:T}) := \sum_{t=1}^T \delta^{t-1} \mathcal{U}_{i,k}^t(v_{i,k}^t; \bb^t, \bh^t),   
\end{equation}
where $\delta \in (0,1]$ is the discount factor. 
The seller's utility over all rounds is given by
\begin{equation} \label{eqn:seller_utility}
\mathcal{U}_{0}(\bb^{1:T}, \bh^{1:T}) := \sum_{t=1}^T \delta^{t-1} \sum_{i=1}^2 \sum_{k=1}^n p_{i,k}^t(\bb^t, \bh^t).    
\end{equation}
%%%%%%
Our goal is to identify mechanisms that maximize the seller's expected utility, subject to certain fairness constraints ensuring a minimum allocation is guaranteed to each group. We next formally define these fairness constraints.  
%%%%%
\paragraph{The fairness constraint:} The seller, acting as the auctioneer, aims to ensure that a minimum allocation is guaranteed to each group in this dynamic setting. More specifically, let $\alpha_i$ represent the minimum (discounted) average allocation promised to group $i$, with the condition that $\alpha_1 + \alpha_2 \leq 1$. Then, at each round $t$, the seller ensures that the discounted average of the items allocated to group $i$ thus far, combined with the expected allocation in the remaining rounds, is at least $\alpha_i$:
\begin{equation} \label{eqn:fairness_constraint}
\frac{1-\delta}{1-\delta^T} \left( \sum_{\tau=1}^{t-1} \delta^{\tau-1} \sum_{k=1}^n x_{i,k}^\tau + \mathbb{E} \left [ \sum_{\tau=t}^T \delta^{\tau-1} \sum_{k=1}^n x_{i,k}^\tau \right ] \right ) \geq \alpha_i,   
\end{equation}
where the expectation is taken over the realization of the private values at time $t$ and the future rounds. We call the auction \emph{infeasible} at round $t$ if it is not possible to satisfy the fairness constraint over the remaining rounds. 
%%%%

It is straightforward to verify that if the condition \eqref{eqn:fairness_constraint} holds for $t=T$, then it would hold for any $t \leq T$. However, as we will elaborate later, this condition determines the set of feasible allocations for any $t \leq T-1$. For instance, suppose we have $T=4$ and $\alpha_1 = \alpha_2 = {1}/{3}$. Then, if we allocate the item to the first group in the first three rounds, we cannot satisfy the condition \eqref{eqn:fairness_constraint} for the second group in the last round. 

Notice that this condition is ex ante with respect to the allocation at the last round. We find this condition more compatible compared to an ex post fairness in a setting with indivisible goods and for small $T$. For instance, with $T=2$ and $\delta = 1$, the ex post average allocation to each group is either $0$, ${1}/{2}$, or $1$, and hence, setting $\alpha_1 = \alpha_2 = \alpha > 0$ for any $\alpha \leq 1/2$ is essentially no different from setting $\alpha_1 = \alpha_2 = {1}/{2}$ in that case. However, as $T$ grows, this constraint becomes increasingly similar to an ex post fairness constraint, differing only in the very last round, which has a diminishing weight of $\delta^T$. We will further discuss the ex post fairness constraint in Section \ref{sec:dynamic} when we characterize the optimal dynamic allocation.
%%%%%%%%%%%%%%%%%%%%%%%%%%
\paragraph{Direct truthful mechanisms:} The dynamic revelation principle states that, without loss of generality, we can focus on direct, truthful mechanisms in which buyers bidding truthfully is a Nash equilibrium \citep{myerson1986multistage}. It is important to note that the dynamic revelation principle requires truthful reporting only on the equilibrium path; that is, assuming all buyers have been truthful in the past. However, in our setting, this would imply truthful reporting under any history, even if a buyer has previously been dishonest (and we are off the equilibrium path). To see why this is the case, note that the utility of each buyer depends on their current (private) value and all past bids (but not on the past true values). Thus, when it is optimal for the buyer to bid truthfully when past reports have been truthful, it remains optimal for them to bid truthfully even if some buyers have lied in the past. This argument generally holds true for Markovian environments. See \cite{pavan2014dynamic} for a detailed discussion on this topic. We also refer readers to \cite{sugaya2021revelation} for a broader treatment of the conditions under which the revelation principle holds in dynamic mechanism design and for different solution concepts. 

In this paper, we use the \textit{periodic ex post incentive compatibility} (see \cite{bergemann2019dynamic} for a discussion on this definition and its comparison with the weaker notion of Bayesian incentive compatibility in dynamic mechanism design). More specifically, for any mechanism $(\bx^{1:T}, \bp^{1:T})$, the following Markovian decision problem finds the best bid for buyer $k$ from group $i$, assuming that all other buyers are reporting their values truthfully:
\begin{equation} \label{eqn:recursive_utility}
U_{i,k}^t(\bv^t, \bh^t) := \max_{b_{i,k}^t} \left \{
\mathcal{U}_{i,k}^t \left (v_{i,k}^t; b_{i,k}^t, \bv_{-(i,k)}^t, \bh^t \right )
+ \delta~ \mathbb{E} \left [ U_{i,k}^{t+1}(\bv^{t+1}, \bh^{t+1}) \Big \vert \bh^t
\right ] \right \},
\end{equation}
with the convention that $U_{i,k}^{T+1}(\cdot, \cdot) := 0$.
This is a recursive formula in which the buyer identifies the optimal bid by maximizing the sum of their utility at the current moment and the maximum of what they could achieve in the future, while they assume that all other buyers are bidding truthfully. A mechanism is called ex post incentive compatible if the maximum is achieved through truthful bidding. The formal definition is provided below.
\begin{definition}
A mechanism $(\bx^{1:T}, \bp^{1:T})$ is \emph{ex post incentive compatible} (EPIC) if for all $i, k, t, \bv^t,$ and $\bh^t$:
\begin{equation} \label{eqn:EPIC}
v_{i,k}^t \in \argmax_{b_{i,k}^t} \left \{
\mathcal{U}_{i,k}^t \left (v_{i,k}^t; b_{i,k}^t, \bv_{-(i,k)}^t, \bh^t \right )
+ \delta~ \mathbb{E} \left [ U_{i,k}^{t+1}(\bv^{t+1}, \bh^{t+1}) \Big \vert \bh^t
\right ] \right \}.    
\end{equation}
\end{definition}
%%%%%%%%%%%%%%%%%
Note that the EPIC mechanism can be identified through backward induction, starting at the last round and progressing back to the first. By doing so, at time $t$, we see that maximum utility in future rounds is achieved through truthful bidding. Consequently, the term $U_{i,k}^{t+1}(\bv^{t+1}, \bh^{t+1})$ could be replaced with the utility assuming that everyone reports truthfully. In other words, to identify the EPIC mechanism, it is sufficient to find a mechanism that satisfies the following condition:
\begin{equation} \label{eqn:EPIC_2}
v_{i,k}^t \in \argmax_{b_{i,k}^t} \left \{
\mathcal{U}_{i,k}^t \left (v_{i,k}^t; b_{i,k}^t, \bv_{-(i,k)}^t, \bh^t \right )
+ \delta~ \mathbb{E} \left [ \sum_{\tau=t+1}^T \mathcal{U}_{i,k}^\tau(v_{i,k}^\tau; \bv^\tau, \bh^\tau) \Big \vert \bh^t
\right ] \right \}.     
\end{equation}
Finally, we state the individual rationality assumption or the participation condition. This assumption ensures that, at each round, each buyer (knowing only their value) would weakly prefer to participate in the auction. In other words, each buyer's expected utility is weakly higher if they participate rather than choosing their outside option, which is skipping that round.
%%%%%%%%%%%%%%%%%%%%%%%%%%
\begin{definition}
A mechanism $(\bx^{1:T}, \bp^{1:T})$ is \emph{individually rational} (IR) if for all $i, k, t, v_{i,k}^t,$ and $\bh^t$: 
\begin{equation} \label{eqn:IR}
\mathbb{E}\left[ U_{i,k}^t(\bv^t, \bh^t) \right] 
\geq \delta~ \mathbb{E}\left [ U_{i,k}^{t+1}(\bv^{t+1}, (\bh')^{t+1}) \Big \vert \bh^t
\right ] ,
\end{equation}
where $(\bh')^{t+1}$ denotes the history under which buyer $(i,k)$ has not participated at round $t$, and the expectations are taken over $\bv^t_{-(i,k)}$ and $\bv^{t+1:T}$.\footnote{In the case of $n=1$, a buyer's non-participation could lead to the infeasibility of the auction as the item cannot be allocated to their group in that round. To avoid such special cases, in the case of $n=1$, we assume that we may still allocate the item to their group even if they do not participate, but this potential allocation does not count towards their outside option utility.}
\end{definition}
%%%%%%%%%%%%%%%%%%%%%%%%%%
The left-hand side of \eqref{eqn:IR} captures the expected utility of buyer $(i,k)$ at time $t$ and thereafter, assuming their participation in round $t$. The right-hand side of \eqref{eqn:IR},. on the other hand, shows the expected utility of buyer $(i,k)$ when they skip the $t$-th round (and receive no utility in the $t$-th round).

Throughout our analysis, we adopt the following regularity assumption on the distribution of values that is common in the mechanism design literature. This assumption applies to a variety of distributions, including uniform, exponential, and normal distributions. 
\begin{assumption} \label{assumption:regularity}
For any $i$ and $t$, the distribution $F_i^t$ is regular, i.e., the virtual value function
\begin{equation} \label{eqn:virtual_value}
\phi_i^t(v) := v - \frac{1-F_i^t(v)}{f_i^t(v)}
\end{equation}
is increasing in $v$. 
\end{assumption}
%%%%%%%%%%%%%%%%%%%%%%%%%%
While we work with this assumption in the main body of the paper, in Appendix \ref{proof:ironing}, we demonstrate that this assumption can be relaxed through an \emph{ironing} procedure. For the rest of the paper, our goal is to find an EPIC and IR mechanism $(\bx^{1:T}, \bp^{1:T})$ that maximizes the expected seller utility subject to the fairness constraint \eqref{eqn:fairness_constraint}. We start our analysis by discussing the static case, i.e., $T=1$, which will subsequently be used as a building block in our treatment of the dynamic setting.
%%%%%%%%%%%%%%%%%%%%%%%%%%
%%%%%%%%%%%%%%%%%%%%%%%%%%
%%%%%%%%%%%%%%%%%%%%%%%%%%
\section{The Static Case}
To simplify the notation, we drop the dependence of the parameters on time $t$ in this section. We first revisit the following well-known result for EPIC and IR mechanisms in the static scenario (see Chapters 9 and 13 of \cite{AGT} for the proof):
%%%%%%%%%%%%%%%%%%%%%%%%%%
\begin{proposition}[\cite{myerson1981optimal}] \label{proposition:static_EPIC}
A mechanism $(\bx, \bp)$ is EPIC if and only if for every $i$ and $k$ we have that
\begin{itemize}
\item the allocation function $x_{i,k}(v, \bv_{-(i,k)})$ is weakly increasing in $v$, and
%%%
\item the payment function is given by 
\begin{equation} \label{eqn:payment_static}
p_{i,k}(v, \bv_{-(i,k)}) = v~x_{i,k}(v, \bv_{-(i,k)}) - \int_{\underline{v}_i}^v x_{i,k}(z, \bv_{-(i,k)}) dz + c(\bv_{-(i,k)}),   
\end{equation}
\end{itemize}
with $c(\bv_{-(i,k)})=0$ for IR mechanisms. Moreover, the seller's expected utility for any EPIC and IR mechanism is given by
\begin{equation}\label{eqn:seller_utility_static} 
\mathbb{E}_{\bv} \left [ \sum_{i \in [2],k \in [n]} \phi_{i}(v_{i,k}) x_{i,k}(\bv) \right]. 
\end{equation}
\end{proposition}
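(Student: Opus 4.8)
The plan is to prove the three claims — monotonicity of the allocation, the payment formula, and the revenue formula — by a standard Myerson-style argument applied to a single fixed round. Because we condition throughout on the competitors' reports $\bv_{-(i,k)}$, the problem for buyer $(i,k)$ reduces to a one-dimensional single-agent screening problem: with $\bv_{-(i,k)}$ held fixed, buyer $(i,k)$'s allocation $x_{i,k}(v, \bv_{-(i,k)})$ and payment $p_{i,k}(v, \bv_{-(i,k)})$ are functions of the single scalar report $v$, and EPIC says truthful reporting maximizes $v\, x_{i,k}(b,\bv_{-(i,k)}) - p_{i,k}(b,\bv_{-(i,k)})$ over $b$.

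First I would establish the two-way equivalence. For the ``only if'' direction, fix $\bv_{-(i,k)}$, write $g(v) := \max_b\{v\, x_{i,k}(b) - p_{i,k}(b)\}$ for the equilibrium utility (suppressing $\bv_{-(i,k)}$), and observe that $g$ is a supremum of affine functions of $v$, hence convex, with $g'(v) = x_{i,k}(v)$ at points of differentiability by the envelope theorem; convexity of $g$ then forces $x_{i,k}(\cdot)$ to be weakly increasing. Integrating $g'(v) = x_{i,k}(v)$ from $\underline v_i$ to $v$ gives $g(v) = g(\underline v_i) + \int_{\underline v_i}^v x_{i,k}(z)\,dz$, and since $g(v) = v\, x_{i,k}(v) - p_{i,k}(v)$ on the equilibrium path, rearranging yields \eqref{eqn:payment_static} with $c(\bv_{-(i,k)}) = -g(\underline v_i)$; IR at $v = \underline v_i$ (which, in the static case, is just $g(\underline v_i) \ge 0$ since the outside option has value $0$) combined with the seller's incentive to extract as much as possible pins down $c = 0$ — strictly, optimality of the mechanism sets $c = 0$, while $c \le 0$ is what IR alone gives, so I would be slightly careful to state that the proposition characterizes EPIC mechanisms with the payment written in this form and IR forcing $c = 0$. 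For the ``if'' direction, one checks directly that when $x_{i,k}$ is monotone and $p_{i,k}$ is given by \eqref{eqn:payment_static}, the function $b \mapsto v\,x_{i,k}(b) - p_{i,k}(b) = v\,x_{i,k}(b) - b\,x_{i,k}(b) + \int_{\underline v_i}^b x_{i,k}(z)\,dz - c$ has derivative $(v - b)\,x_{i,k}'(b)$ in the distributional sense, which is $\ge 0$ for $b < v$ and $\le 0$ for $b > v$, so $b = v$ is a maximizer — this is the classic single-crossing check.

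For the revenue formula, I would take the payment expression \eqref{eqn:payment_static} with $c = 0$, take expectations over $v_{i,k} \sim F_i$ holding $\bv_{-(i,k)}$ fixed, and apply the standard integration-by-parts identity
\begin{equation*}
\mathbb{E}_{v}\left[ \int_{\underline v_i}^{v} x_{i,k}(z, \bv_{-(i,k)})\, dz \right] = \mathbb{E}_{v}\left[ \frac{1 - F_i(v)}{f_i(v)}\, x_{i,k}(v, \bv_{-(i,k)}) \right],
\end{equation*}
which follows from Fubini and the definition of $F_i$. Subtracting, the per-buyer expected payment becomes $\mathbb{E}_v[(v - \frac{1-F_i(v)}{f_i(v)})\, x_{i,k}(v,\bv_{-(i,k)})] = \mathbb{E}_v[\phi_i(v)\, x_{i,k}(v, \bv_{-(i,k)})]$; then summing over $i \in [2], k \in [n]$, taking the outer expectation over $\bv_{-(i,k)}$ (i.e., over the full vector $\bv$ using independence), and recalling $\mathcal{U}_0 = \sum_{i,k} p_{i,k}$ gives \eqref{eqn:seller_utility_static}.

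The main obstacle is not any single calculation — each step is routine — but rather handling the technical regularity needed to invoke the envelope theorem and differentiate $g$: one must justify that $g$ is absolutely continuous (it is, being convex and finite on the interval), that its derivative equals $x_{i,k}$ almost everywhere, and that this suffices to recover $g$ by integration. Since the paper cites Chapters 9 and 13 of \cite{AGT} for the proof, I would lean on that reference for these measure-theoretic details and present the argument at the level of the envelope/integration-by-parts identities, flagging only that Assumption~\ref{assumption:regularity} is not actually needed for this proposition (it enters later, when one optimizes \eqref{eqn:seller_utility_static} over monotone allocation rules and needs $\phi_i$ increasing for the pointwise maximizer to be monotone).
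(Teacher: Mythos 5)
Your proposal is correct: the paper gives no proof of this proposition, deferring to Myerson (1981) and Chapters 9 and 13 of the cited textbook, and your argument (envelope/convexity of the indirect utility for monotonicity and the payment identity, the single-crossing check for sufficiency, and the Fubini integration-by-parts step yielding the virtual-value revenue expression) is exactly that standard proof. Your side remarks are also right — IR alone gives $c(\bv_{-(i,k)}) \le 0$ with seller optimality pinning $c = 0$, and Assumption~\ref{assumption:regularity} is not needed for this statement — so nothing further is required.
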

%%%%%%%%%%%%%%%%%%%%%%%%%%
%%%%%%%%%%%%%%%%%%%%%%%%%%
\paragraph{Static fairness constraint:}
In the static setting where a single item is auctioned in one round, to satisfy the ex ante fairness constraint, the seller ensures that the expected total allocation for group $i$ is as least $\alpha_i$. In other words, in the static setting, our fairness constraint (\ref{eqn:fairness_constraint}) reduces to
\begin{equation} \label{eqn:static_fairness_constraint}
\mathbb{E} \left [ \sum_{k=1}^n x_{i,k} \right ] \geq \alpha_i,   
\end{equation}
where the expectation is taken over a single realization of the private values. Notice that the definition of our fairness constraint allows for the simultaneous control of two notions of allocative fairness. The first is to control the relative probabilities of allocation between groups one and two which we will see takes the form of a subsidy-like modification to the unconstrained mechanism. The second is to control the overall rates of allocation which we will see effectively reduces the seller's reserve price. The relative values of $\alpha_i$ and the allocations of the unconstrained efficient mechanism determine which of these notions of fairness are relevant in a given setting. 
%%%%%%%%%%%%%%%%%%%%%%%%%%

It is well known that in the absence of the fairness constraint, and under Assumption \ref{assumption:regularity}, the item is allocated to the buyer who has the highest virtual value, provided that at least one buyer's virtual value is nonnegative. Conversely, the seller does not allocate the item if all virtual values are negative. This mechanism is known as the second-price auction (or Vickrey auction) with reserve pricing, as detailed in \cite{AGT}.
This allocation is depicted in Figure \ref{fig:allocation_static}. 

\begin{figure}
\centering
\begin{subfigure}{.5\textwidth}
  \centering
  \resizebox{\linewidth}{!}{\begin{tikzpicture}[>=stealth]

\newcommand\leneps{1.25}
\newcommand\xmax{4}
\newcommand\xmin{-2}
\newcommand\ymax{4}
\newcommand\ymin{-2}

    \draw[->] (\xmin,0) -- (\xmax,0) node[right]{$\phi_1(v_1)$};
    \draw[->] (0,\ymin) -- (0,\ymax) node[above]{$\phi_2(v_2)$};

    \draw[-, white, name path = boundary1] (\xmin,\ymin) -- (\xmin,-\leneps);
    \draw[-, white, name path = boundary2] (\xmin,-\leneps)--(\xmin,0);
    \draw[-, white, name path = boundary3] (\xmin,0)--(\xmin,\ymax);
    \draw[-, white, name path = boundary4] (\xmin,\ymax)--(\xmax,\ymax);
    \draw[-, white, name path = boundary5] (\xmax,\ymax)--(\xmax,\ymax-\leneps);
    \draw[-, white, name path = boundary6] (\xmax,\ymax-\leneps)--(\xmax,\ymin)--(0,\ymin);
    \draw[-, white, name path = boundary7] (0,\ymin)--(\xmin,\ymin);

    \draw[dashed, ,->, black, name path = grayline1] (0,0) -- (4,4) node[right]{$\phi_2(v_2) = \phi_1(v_1)$};
    \draw[dashed, ,->, black, name path = grayline2] (0,0) -- (-2,0);

    \tikzfillbetween[of=grayline2 and boundary4]{blue, opacity=0.2};
    \tikzfillbetween[of=grayline1 and boundary6]{yellow, opacity=0.2};

\end{tikzpicture}}
  \caption{The optimal allocation without the fairness \\ constraint.}
  \label{fig:allocation_static}
\end{subfigure}%
\begin{subfigure}{.5\textwidth}
  \centering
  \resizebox{\linewidth}{!}{\begin{tikzpicture}[>=stealth]
\def\leneps{1.25}
\def\xmax{4}
\def\xmin{-4}
\def\ymax{4}
\def\ymin{-4}
\def\etaone{1}
\pgfmathsetmacro\etatwo{\etaone + \leneps}
\definecolor{group1_color1}{HTML}{ECD9ED}

    \draw[->] (\xmin,0) -- (\xmax,0) node[right]{$\phi_1(v_1)$};
    \draw[->] (0,\ymin) -- (0,\ymax) node[above]{$\phi_2(v_2)$};

    \draw[-, white, name path = boundary1] (\xmin,\ymin) -- (\xmin,-\leneps);
    \draw[-, white, name path = boundary2] (\xmin,-\leneps)--(\xmin,0);
    \draw[-, white, name path = boundary3] (\xmin,0)--(\xmin,\ymax);
    \draw[-, white, name path = boundary4] (\xmin,\ymax)--(\xmax,\ymax);
    \draw[-, white, name path = boundary5] (\xmax,\ymax)--(\xmax,\ymax-\leneps);
    \draw[-, white, name path = boundary6] (\xmax,\ymax-\leneps)--(\xmax,\ymin)--(0,\ymin);
    \draw[-, white, name path = boundary7] (0,\ymin)--(\xmin,\ymin);

    % Draw and label point at x = -deltaone, y = -deltatwo
    \filldraw (-\etaone,0) circle (1pt);
    \node[above] at (-\etaone, 0) {$-\eta_1$};
    \filldraw (0, -\etatwo) circle (1pt);
    \node[right] at (0, -\etatwo) {$-\eta_2$};
    \filldraw (0, -\leneps) circle (1pt);
    \node[right] at (0, -\leneps) {$-\gamma$};

    \draw[dashed, ,->, black, name path = x_eq_y_pos] (0,0) -- (4,4) node[right]{$\phi_2(v_2) = \phi_1(v_1)$};
    \draw[dashed, ,->, black, name path = y_eq_0_neg] (0,0) -- (-\ymin,0);

    \draw[dashed, ,->, black, name path = x_eq_y_mineps_pos] (-\etaone,-\etatwo) -- (4,4-\leneps) node[right]{$\phi_1(v_1) = \phi_2(v_2) + \gamma$};

    \draw[dashed, ,->, lightgray, name path = x_eq_y_mineps_pos_neg] (-\etaone,-\etatwo) -- (-3,-3-\leneps);

    \draw[dashed, ,->, black, name path = y_eq_min_eps_neg] (0,-\etatwo) -- (\xmin,-\etatwo);

    \draw[dashed, ,->, black, name path = x_eq_neg_delta1] (-\etaone,0) -- (-\etaone,-4);

    \draw[dashed, ,->, black, name path = y_eq_neg_eps] (0, -\leneps) -- (-\xmax,-\leneps);

    \fill[green, opacity=0.2] (0, 0) -- (\xmax, \ymax) -- (\xmax, \ymax - \leneps) -- (0, -\leneps) -- (-\xmax, -\leneps) -- (-\xmax, 0) -- cycle;

    \fill[blue, opacity=0.2] (0, 0) -- (-\xmax, 0)-- (-\xmax, \ymax) -- (0, \ymax) -- (\xmax, \ymax) -- (0, 0) -- cycle;

    \fill[orange, opacity=0.2] (-\etaone, -\etatwo) -- (-\etaone, -\ymax) -- (0, -\ymax)-- (0, -\leneps) -- cycle;

    \fill[yellow, opacity=0.2](0, -\leneps) -- (\xmax, \ymax - \leneps) -- (\xmax, -\ymax) -- (0, -\ymax)-- cycle;

    \fill[green, opacity=0.2](0, -\leneps) -- (-\xmax, -\leneps) -- (-\xmax, -\etatwo) -- (-\etaone, -\etatwo) -- cycle;

\end{tikzpicture}}
  \caption{The optimal allocation under the fairness \\ constraint.}
  \label{fig:fair_allocation_static}
\end{subfigure}
\caption{Optimal allocation in the static case.}
\label{fig:Static}
\end{figure}
%%%%%%%%%%%%%%%%%%%%%%%%%%
We next present the main result of this section which establishes the optimal allocation under the fairness constraint.
\begin{theorem}\label{thrm:fair_allocation_static} 
Suppose Assumption \ref{assumption:regularity} holds.\footnote{In \cref{proof:ironing}, we demonstrate that the result can be extended to the case that the regularity assumption is weakened using the ironing technique.} For any $i \in [2]$, let $v_i := \max_{k} v_{i,k}$ be the maximum value among buyers in group $i$. Then, the following results hold for the optimal allocation:
\begin{enumerate}[label=(\roman*)]
\item If the item is allocated to group $i$, then it is allocated to the buyer in group $i$ with the highest value, i.e., if $x_{i,k} = 1$, then $v_{i,k} = v_i$.
%%%%%%%%
\item The allocation decision at the group level depends only on the maximum value of the two groups, $v_1$ and $v_2$. For any $i$, let $G_i$ denote the pairs $(v_1, v_2)$ for which the item is allocated to group $i$. Then,
there exists $\eta_1, \eta_2 \geq 0$ and $\gamma$ with $\eta_2 = \eta_1 + \gamma$ such that (up to a measure-zero set)
\begin{subequations} \label{eqn:fair_allocation_rule_static}
\begin{align}
G_1 &= \left \{ (v_1, v_2) \mid \phi_1(v_1) \geq \phi_2(v_2) + \gamma \text{ and }  \phi_1(v_1) \geq -\eta_1 \right \},  \\
G_2 &= \left \{ (v_1, v_2) \mid \phi_2(v_2) \geq \phi_1(v_1) - \gamma \text{ and } \phi_2(v_2) \geq -\eta_2 \right \}.
\end{align}
\end{subequations}
\end{enumerate}
\end{theorem}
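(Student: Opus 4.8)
The plan is to set up the seller's problem as a linear program over allocation rules and exploit Proposition~\ref{proposition:static_EPIC} to reduce everything to a pointwise optimization in the profile of virtual values. By Proposition~\ref{proposition:static_EPIC}, any EPIC and IR mechanism yields seller revenue $\mathbb{E}_{\bv}[\sum_{i,k}\phi_i(v_{i,k})x_{i,k}(\bv)]$, the payments are pinned down, and the only constraint left on the allocation (besides the single-item constraint $\sum_{i,k}x_{i,k}\le 1$ and $x_{i,k}\in[0,1]$) is monotonicity of each $x_{i,k}$ in its own argument. So the problem becomes: maximize $\mathbb{E}_{\bv}[\sum_{i,k}\phi_i(v_{i,k})x_{i,k}]$ over feasible allocation rules subject to the two fairness constraints $\mathbb{E}[\sum_k x_{i,k}]\ge\alpha_i$, $i=1,2$. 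First I would argue part (i): within a group, if the item is awarded to group $i$, it must go to the bidder with the highest value. This is immediate from the fact that $\phi_i$ is increasing (Assumption~\ref{assumption:regularity}): reassigning the within-group allocation to $\arg\max_k v_{i,k}$ weakly raises the objective (higher virtual value) and does not affect either the single-item constraint or the fairness constraint, which only sees $\sum_k x_{i,k}$. Moreover monotonicity is preserved. Hence we may take $x_{i,k}=\mathbbm{1}\{k=\arg\max_j v_{i,j}\}\cdot y_i(v_1,v_2)$ where $y_i$ is the group-level allocation probability and $v_i=\max_k v_{i,k}$, and the problem collapses to choosing $(y_1,y_2)$ with $y_1+y_2\le 1$ as a function of $(v_1,v_2)$ only.

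Next, for the group-level problem, I would dualize the two fairness constraints with multipliers $\lambda_1,\lambda_2\ge 0$. The Lagrangian is $\mathbb{E}[\sum_i(\phi_i(v_i)+\lambda_i)y_i(v_1,v_2)] - \lambda_1\alpha_1-\lambda_2\alpha_2$ (after pushing the constants around and using that the induced densities of $v_i=\max_k v_{i,k}$ are well-defined). For fixed $(\lambda_1,\lambda_2)$ the inner maximization is pointwise in $(v_1,v_2)$: over the simplex $\{y_1,y_2\ge 0,\ y_1+y_2\le 1\}$ we put all mass on whichever of $\phi_1(v_1)+\lambda_1$, $\phi_2(v_2)+\lambda_2$, $0$ is largest. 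Writing this out: allocate to group $1$ iff $\phi_1(v_1)+\lambda_1\ge\phi_2(v_2)+\lambda_2$ and $\phi_1(v_1)+\lambda_1\ge 0$; allocate to group $2$ iff the symmetric condition holds; otherwise withhold. Setting $\eta_1=\lambda_1$, $\eta_2=\lambda_2$, and $\gamma=\lambda_2-\lambda_1$ gives exactly the regions $G_1,G_2$ in \eqref{eqn:fair_allocation_rule_static}, with $\eta_2=\eta_1+\gamma$ automatic. One then checks that this threshold rule is monotone in each $v_{i,k}$ (because $\phi_i$ is increasing, each region is an up-set in $v_i$), so it is implementable; and the boundary where the optimal $y$ is non-unique is a measure-zero set in $(v_1,v_2)$ under the full-support continuous densities, which is why the characterization holds only up to a measure-zero set.

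The remaining — and main — step is to justify that strong duality holds and that the optimal multipliers are nonnegative, i.e.\ that solving the relaxed (Lagrangian) problem and choosing $\lambda_i\ge 0$ to make the fairness constraints bind (or leaving $\lambda_i=0$ if slack) actually produces the optimum of the constrained problem. The clean way is to note the constrained problem is a linear program in the (infinite-dimensional) decision variable $y$: linear objective, linear inequality constraints, and a convex (indeed, a product of simplices) feasible region; as long as the feasible set is nonempty — which is exactly the non-infeasibility hypothesis, guaranteed by $\alpha_1+\alpha_2\le 1$ together with the fact that one can always allocate the item somewhere — Slater-type / LP strong-duality arguments give a saddle point $(y^\star,\lambda^\star)$ with $\lambda^\star\ge 0$, and complementary slackness says $\lambda_i^\star>0$ only when constraint $i$ binds. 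I would need to handle the infinite-dimensionality carefully (e.g.\ invoke a general minimax/strong-duality theorem for LPs over measures, or reduce to a finite-dimensional parametrization since the optimal rule is a two-parameter threshold family and argue directly that the two scalars $\eta_1,\eta_2$ can be tuned to meet the two scalar constraints by a continuity/intermediate-value argument). A subtlety worth flagging: the signs. The constraint pushes allocation \emph{up}, so the multipliers enter with a $+\lambda_i$ on $\phi_i$ — this is what makes both $\eta_1,\eta_2\ge 0$ and simultaneously lowers the effective reserve (the "$\ge-\eta_i$'' threshold is below the unconstrained "$\ge 0$''), capturing both the cross-group subsidy ($\gamma\neq 0$) and the overall reserve reduction described in the theorem's preamble. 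I expect the bookkeeping around which constraints bind (three cases: both bind, one binds, neither binds) and the measure-zero boundary claim to be routine once the LP-duality scaffold is in place.
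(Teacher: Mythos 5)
Your proposal is essentially correct but follows a genuinely different route from the paper. The paper argues directly in the primal: it fixes the set $\mathcal{S}$ on which the item is allocated, constructs an affine-boundary allocation $x^*$ (boundary $\phi_1(v_1)=\phi_2(v_2)+\gamma$ with $|\gamma|$ minimal subject to fairness), and then uses an exchange/measure-comparison argument (comparing the ``misallocated'' regions $\tilde A\cap Q_1$ and $A^*$ and their losses relative to the unconstrained rule) to show any optimum coincides with $x^*$ up to measure zero; the shape of $\mathcal{S}$ and the identity $\eta_2=\eta_1+\gamma$ are then obtained by explicit geometric swap/perturbation arguments. You instead dualize the two fairness constraints, reduce to a pointwise maximization over $(v_1,v_2)$, and read off the regions with $\eta_i=\lambda_i$, $\gamma=\lambda_2-\lambda_1$, so that $\eta_2=\eta_1+\gamma$ and the nonnegativity of $\eta_1,\eta_2$ come for free from dual feasibility, and Proposition~\ref{proposition:optimal_static} (which constraints bind) falls out as complementary slackness rather than needing the paper's separate perturbation proof. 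What your approach buys is a shorter, unified derivation of both the boundary structure and the binding conditions; what it costs is the one step you flag: you must actually establish existence of multipliers $\lambda^*\ge 0$ for which the induced threshold rule satisfies the constraints with complementary slackness. Note that you do not need full infinite-dimensional strong duality for this: weak duality plus a verification argument suffices, provided you prove existence of $(\eta_1,\eta_2)$ (equivalently $(\gamma,\eta_1)$) meeting the two scalar conditions, e.g.\ by a continuity/monotonicity argument on $\mathbb{P}(G_1)$ and $\mathbb{P}(G_2)$ as functions of the two parameters (continuity uses the full-support continuous densities; boundedness of the supports handles the corner case $\alpha_1+\alpha_2=1$, where the effective reserves disappear, and Slater-type conditions would fail). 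That two-parameter existence argument is the substantive remaining work, and also where the measure-zero caveat for ties ($\phi_1(v_1)+\lambda_1=\phi_2(v_2)+\lambda_2$) should be made explicit; with it filled in, your proof is complete and arguably cleaner than the paper's.
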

%%%%%%%%%%%%%%%%%%%%%%%%%%
An example of the optimal allocation rule with $\gamma > 0$ is depicted in Figure \ref{fig:fair_allocation_static} where group one is allocated the item in the yellow and orange regions and group two is allocated the item in the blue and green regions. 
%%%%%%%%%%%%%%%%%%%%%%%%%%

The first part of the theorem is intuitive. The second part follows from the fact that the loss in the seller's utility increases in Euclidean distance from the boundary of the unconstrained optimal allocation $\phi_1(v_1) = \phi_2(v_2)$. Therefore, virtual valuations of the under-allocated group that are closest to those of the over-allocated group, specifically within a distance $\gamma$, are the set that minimize the cost of reallocating to the target group to achieve the desired balance between groups. In the case of insufficient unconstrained allocation, the further modification of allocating in cases where the item remains unallocated in the unconstrained mechanism is necessary. Again, the loss-minimizing set that is reallocated to achieve overall sufficiency in allocation is the set that is within a distance $\eta_1$ from the $\gamma$-shifted boundary in the third quadrant. The optimal mechanism should inefficiently reallocate only as much as necessary, so $\eta_1, \eta_2$, and $\gamma$ are chosen so that the fairness constraint binds. A full proof of this result can be found in the appendix.

As \cref{thrm:fair_allocation_static} shows, the allocation that maximizes expected seller utility while satisfying the fairness constraint \eqref{eqn:static_fairness_constraint} may not allocate the good to the highest bidder. In particular, the optimal fair mechanism subsidizes one group that would otherwise not meet target allocation levels. To maximize revenue while satisfying a given fairness constraint, we deviate from the unconstrained allocation in the most cost efficient way, that is, we reassign the good to a lower valuation bidder in cases that least impact expected seller revenue.

Also, notice that, when there is no fairness constraint, the seller does not allocate the item if the maximum bid is below their \emph{reserve price} $r = \min\{\phi_1^{-1}(0), \phi_2^{-1}(0)\}$. Therefore, it is possible that, in the unconstrained optimal mechanism, the total probability of allocation across both groups be less than the sum of the target allocation levels $\alpha_i$. This is the intuition behind why $\eta_1, \eta_2$ may both be greater than zero in some cases. In other words, we must simultaneously enforce that the item is allocated more often in general and that the relative rates of allocation between groups is sufficiently balanced.

We conclude this section with a result which helps us to identify $\gamma$, $\eta_1$, and $\eta_2$.
\begin{proposition} \label{proposition:optimal_static}
Let $\gamma$, $\eta_1$, and $\eta_2$ correspond to the optimal allocation given by Theorem \ref{thrm:fair_allocation_static}. Then,
\begin{enumerate}
\item If $\gamma > 0$, i.e., the seller strictly subsidizes in favor of the second group, we have $\mathbb{P}(G_2) = \alpha_2$. Moreover, if $\eta_1 >0$, we have $\mathbb{P}(G_1) = \alpha_1$.
\item On the other hand, if $\gamma < 0$, i.e., the seller strictly subsidizes in favor of the first group, we have $\mathbb{P}(G_1) = \alpha_1$. Moreover, if $\eta_2 >0$, we have $\mathbb{P}(G_2) = \alpha_2$.
\end{enumerate}  
\end{proposition}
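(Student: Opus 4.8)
The plan is to derive the result from the Lagrangian structure behind \cref{thrm:fair_allocation_static}. By \cref{proposition:static_EPIC} together with the reduction of the fairness constraint to \eqref{eqn:static_fairness_constraint}, the optimal fair mechanism is the allocation rule $\bx$ maximizing $R(\bx):=\mathbb{E}_{\bv}[\sum_{i,k}\phi_i(v_{i,k})x_{i,k}]$ subject to the pointwise feasibility constraint $\sum_{i,k}x_{i,k}\le 1$ and the two group constraints $\mathbb{E}[\sum_k x_{i,k}]\ge\alpha_i$, $i\in[2]$ (relaxing $x_{i,k}\in\{0,1\}$ to $[0,1]$ is harmless, since this linear program attains its optimum at a $\{0,1\}$-valued rule almost surely). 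I would attach multipliers $\lambda_1,\lambda_2\ge 0$ to the two group constraints, so that the Lagrangian integrand is $\sum_{i,k}(\phi_i(v_{i,k})+\lambda_i)x_{i,k}$; maximizing it pointwise subject to $\sum_{i,k}x_{i,k}\le 1$, $\bx\ge 0$ allocates the item to the buyer with the largest value of $\phi_i(v_{i,k})+\lambda_i$ when that quantity is nonnegative and withholds the item otherwise. Since each $\phi_i$ is increasing, within a group this selects the highest value, and at the group level it is precisely the rule \eqref{eqn:fair_allocation_rule_static} with $\eta_i=\lambda_i$ and $\gamma=\lambda_2-\lambda_1$. Thus the parameters $(\eta_1,\eta_2,\gamma)$ from \cref{thrm:fair_allocation_static} are a pair of optimal dual multipliers for the two fairness constraints, which is how they are produced in the proof of that theorem.

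Given this, I would invoke complementary slackness: an optimal primal rule $\bx^\star$ and optimal dual multipliers $(\eta_1,\eta_2)$ satisfy $\eta_i\,(\mathbb{P}(G_i)-\alpha_i)=0$ for $i\in[2]$, because $\bx^\star$ maximizes the Lagrangian at $(\eta_1,\eta_2)$ and strong duality holds (feasibility is ensured by $\alpha_1+\alpha_2\le 1$, Slater's condition holds when $\alpha_1+\alpha_2<1$, and the boundary case $\alpha_1+\alpha_2=1$ is handled separately). The conclusion is then immediate: if $\gamma>0$ then $\eta_2=\eta_1+\gamma>\eta_1\ge 0$, so $\eta_2>0$ and hence $\mathbb{P}(G_2)=\alpha_2$; if moreover $\eta_1>0$ then likewise $\mathbb{P}(G_1)=\alpha_1$; the case $\gamma<0$ follows by interchanging the two groups (equivalently, $\eta_1\leftrightarrow\eta_2$).

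The main obstacle is making complementary slackness rigorous, and the cleanest self-contained route is a direct perturbation argument rather than an appeal to infinite-dimensional LP duality. For the first claim, I would suppose $\gamma>0$ but $\mathbb{P}(G_2)>\alpha_2$ and replace $(\eta_1,\eta_2,\gamma)$ by $(\eta_1,\eta_2-\varepsilon,\gamma-\varepsilon)$, which is still of the form in \cref{thrm:fair_allocation_static} and has $\eta_2-\varepsilon\ge 0$ for small $\varepsilon$ since $\gamma>0$ forces $\eta_2>0$. Then $G_1$ can only grow, so the first group's constraint still holds; $G_2$ shrinks but stays strictly above $\alpha_2$ by continuity; and a short case analysis shows every point leaving $G_2$ either joins $G_1$, where its contribution rises from $\phi_2(v_2)$ to some $\phi_1(v_1)>\phi_2(v_2)$, or becomes unallocated after having a strictly negative virtual value — so $R$ strictly increases (the set that switches carries positive probability by full support), contradicting optimality. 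The delicate part is the ``moreover'' clause, where the already-binding constraint $\mathbb{P}(G_2)=\alpha_2$ rules out the obvious one-parameter perturbation; there one must perturb $(\eta_1,\eta_2)$ jointly along the tangent direction of the binding constraint and show, via the signs of the boundary-density terms (equivalently, via Farkas' lemma), that a strictly improving feasible direction exists unless $\mathbb{P}(G_1)=\alpha_1$ as well. Verifying that the relevant boundary sets carry positive probability — which uses the full-support assumption and the continuity and monotonicity of the $\phi_i$ — is the one remaining technical point.
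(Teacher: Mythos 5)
Your overall route---contradicting optimality by an explicit improving modification---is the same in spirit as the paper's, and your treatment of the first claim (shifting the diagonal boundary by $\varepsilon$ when $\gamma>0$ and $\mathbb{P}(G_2)>\alpha_2$) matches the paper's exchange argument, which removes a positive-measure subset of $G_2$ lying in the strip $0<\phi_1(v_1)-\phi_2(v_2)<\gamma$ and reassigns it to group~1. The preliminary Lagrangian paragraph, however, cannot be used as stated: Theorem~\ref{thrm:fair_allocation_static} does not produce $(\eta_1,\eta_2)$ as optimal dual multipliers, so asserting complementary slackness for them is precisely the content of the proposition; you recognize this and retreat to the perturbation argument, which is the right move.

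The genuine gap is the ``moreover'' clause. You claim that the binding constraint $\mathbb{P}(G_2)=\alpha_2$ rules out the obvious one-parameter perturbation, that one must perturb $(\eta_1,\eta_2)$ jointly along the tangent direction of the binding constraint and argue via Farkas' lemma---and you leave that step unexecuted. The diagnosis is wrong and the machinery is unnecessary: $G_2=\{\phi_2(v_2)\ge\phi_1(v_1)-\gamma,\ \phi_2(v_2)\ge-\eta_2\}$ does not involve $\eta_1$, so shrinking $\eta_1$ to $\eta_1-\epsilon$ (equivalently, deallocating the thin band $\{-\eta_1\le\phi_1(v_1)\le-\eta_1+\epsilon,\ \phi_2(v_2)\le\phi_1(v_1)-\gamma\}$ of the region where group~1 wins at negative virtual value) leaves $G_2$, and hence group~2's constraint, completely untouched; the removed points simply become unallocated and the allocation stays monotone. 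If $\mathbb{P}(G_1)>\alpha_1$ strictly, continuity lets you pick $\epsilon$ small enough that group~1's constraint still holds, while $\phi_1(v_1)<0$ on the removed band, so the seller's revenue strictly increases (the band has positive probability by full support whenever $\eta_1>0$ is non-vacuously determined), contradicting optimality. This is exactly what the paper does: it unallocates a positive-measure subset of $\{-\eta_1\le\phi_1(v_1)\le 0,\ \phi_2(v_2)\le\phi_1(v_1)-\gamma\}$ and notes the loss strictly decreases. Until you replace the deferred Farkas/tangent-direction step with this simple perturbation (or actually carry that computation out), the ``moreover'' halves of both bullets of the proposition remain unproven in your write-up.
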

It is worth noting that the probability $\mathbb{P}(\cdot)$ above is defined with respect to the distribution of the pair of maximum values $(v_1, v_2)$ and its CDF is given by 
\begin{equation} \label{eqn:CDF_max}
F_\text{max}(v_1,v_2) = F_1(v_1)^n F_2(v_2)^n.    
\end{equation}
The intuition behind this proposition is that the seller satisfies the fairness constraint while maximizing their utility by deviating from the optimal unconstrained only as much as is necessary to satisfy the fairness constraint with equality. A full proof can be found in the appendix.
%%%%%%%%%%%%%%%%%%%%%%%%%%
%%%%%%%%%%%%%%%%%%%%%%%%%%
%%%%%%%%%%%%%%%%%%%%%%%%%%
%%%%%%%%%%%%%%%%%%%%%%%%%%
\section{The Dynamic Case}\label{sec:dynamic}
We next turn our attention to the dynamic case. Here, our main approach involves using backward induction. The analysis from the previous section serves as the basis for the optimal allocation at the last round, i.e., $t=T$. In this section, we describe how we proceed backward to find the optimal dynamic auction under the fairness constraint. We make the following assumption which simplifies the characterization of the optimal mechanism and allows us to focus on the insights. In Appendix \ref{sec:relax_assumption_allocation}, and after stating the proof of the optimal allocation result in the dynamic case, we illustrate how our analysis extends to cases where this assumption does not hold.
%%%%%%%%%%%%%%%
\begin{assumption} \label{assumption:no_reserve_price}
We assume the seller must allocate the item in any round, except possibly the last one.
\end{assumption}
%%%%%%%%%%%%%%%
We start our analysis by making a number of definitions and observations. Suppose we are at the beginning of round $t$. Given \eqref{eqn:fairness_constraint}, our allocation at round $t$ and the future rounds should satisfy the following constraint:
\begin{equation} \label{eqn:residual_fairness_constraint}
\sum_{\tau=t}^{T-1} \delta^{\tau-t} \sum_{k=1}^n x_{i,k}^\tau + \delta^{T-t}  \mathbb{E} \left [ \sum_{k=1}^n x_{i,k}^T \right ] \geq 
R_i^t :=  \frac{1}{\delta^{t-1}} \left ( \frac{1-\delta^T}{1-\delta} \alpha_i - \sum_{\tau=1}^{t-1} \delta^{\tau-1} \sum_{k=1}^n x_{i,k}^\tau \right),   
\end{equation}
We refer to the right-hand side as the \textit{residual minimum allocation} for group $i$. It is noteworthy that equation \eqref{eqn:residual_fairness_constraint} can be interpreted as if the auction is starting at time $t$, with the fairness constraint given by $R_i^t(1-\delta)/(1-\delta^{T-t})$. We next make the following simple observation:
\begin{fact}\label{fact:resid_min_alloc_update}
Given the allocation at time $t$, the residual minimum allocation at time $t+1$ is given by:
\begin{equation*}
R_{i}^{t+1} = 
\begin{cases}
\frac{1}{\delta} (R_i^t - 1) & \text{if the item is allocated to group } i, \\
\frac{1}{\delta}R_i^t & \text{if the item is not allocated to group } i.
\end{cases}
\end{equation*}
\end{fact}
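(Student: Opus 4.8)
The statement to prove is Fact~\ref{fact:resid_min_alloc_update}, which asks for the update rule for the residual minimum allocation $R_i^t$ as we move from round $t$ to round $t+1$. This is a purely algebraic identity, so the plan is simply to write down the definition of $R_i^{t+1}$ from \eqref{eqn:residual_fairness_constraint} and substitute in the definition of $R_i^t$, splitting into the two cases according to whether the item went to group $i$ at round $t$.

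\medskip

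\textbf{Plan.} Starting from \eqref{eqn:residual_fairness_constraint} applied at round $t+1$, we have
\[
R_i^{t+1} = \frac{1}{\delta^{t}} \left( \frac{1-\delta^T}{1-\delta}\alpha_i - \sum_{\tau=1}^{t} \delta^{\tau-1} \sum_{k=1}^n x_{i,k}^\tau \right).
\]
First I would isolate the round-$t$ term in the sum, writing $\sum_{\tau=1}^{t} \delta^{\tau-1} \sum_k x_{i,k}^\tau = \sum_{\tau=1}^{t-1} \delta^{\tau-1}\sum_k x_{i,k}^\tau + \delta^{t-1} \sum_k x_{i,k}^t$. Substituting and factoring out $1/\delta$, this becomes
\[
R_i^{t+1} = \frac{1}{\delta} \cdot \frac{1}{\delta^{t-1}} \left( \frac{1-\delta^T}{1-\delta}\alpha_i - \sum_{\tau=1}^{t-1} \delta^{\tau-1}\sum_k x_{i,k}^\tau \right) - \frac{1}{\delta} \sum_{k=1}^n x_{i,k}^t = \frac{1}{\delta} R_i^t - \frac{1}{\delta}\sum_{k=1}^n x_{i,k}^t,
\]
where the first equality uses $\delta^{t} = \delta \cdot \delta^{t-1}$ and the second uses the definition of $R_i^t$ from \eqref{eqn:residual_fairness_constraint}.

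\medskip

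\textbf{Case split and conclusion.} Since a single indivisible item is allocated per round, $\sum_{k=1}^n x_{i,k}^t \in \{0,1\}$: it equals $1$ if the item is allocated to some buyer in group $i$ at round $t$, and $0$ otherwise (recall $\sum_{i}\sum_k x_{i,k}^t \le 1$, and by the first part of Theorem~\ref{thrm:fair_allocation_static}-style reasoning at most one buyer in group $i$ can receive it). Plugging these two values into the expression $R_i^{t+1} = \frac{1}{\delta}R_i^t - \frac{1}{\delta}\sum_k x_{i,k}^t$ yields exactly the two cases in the statement: $R_i^{t+1} = \frac{1}{\delta}(R_i^t - 1)$ if the item goes to group $i$, and $R_i^{t+1} = \frac{1}{\delta}R_i^t$ otherwise.

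\medskip

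\textbf{Main obstacle.} There is essentially no obstacle here; the only points requiring a word of care are (a) correctly tracking the exponent shift from $\delta^{t-1}$ to $\delta^{t}$ when reindexing, and (b) noting that $\sum_k x_{i,k}^t$ is a single indicator (value in $\{0,1\}$) rather than something that could exceed $1$, which is immediate from the single-item constraint $\sum_{i=1}^2\sum_{k=1}^n x_{i,k}^t \le 1$. The result holds deterministically given the realized allocation at round $t$, so no expectation is involved in the update itself.
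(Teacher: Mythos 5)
Your proof is correct and is exactly the verification the paper has in mind: the paper states this fact without proof as an immediate algebraic consequence of the definition of $R_i^t$ in \eqref{eqn:residual_fairness_constraint}, and your reindexing plus the observation that $\sum_{k=1}^n x_{i,k}^t \in \{0,1\}$ (which follows directly from the single-item constraint, with no need to invoke Theorem \ref{thrm:fair_allocation_static}) is precisely that argument.
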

We next introduce two interim functions. For the sake of these two definitions, assume the auction starts at time $t$, i.e., the utility at any time $\tau$, for $\tau \geq t$, is discounted by a factor of $\delta^{\tau-t}$. With this assumption, and for a given pair of residual minimum allocations $(R_1^t, R_2^t)$, we define $\mu^t(R_1^t, R_2^t)$ as the (discounted) sum of the expected utility of the seller at time $t$ and thereafter, under the optimal mechanism satisfying equation \eqref{eqn:residual_fairness_constraint}, prior to the realization of values. Similarly, $\nu_i^t(R_1^t, R_2^t)$ represents the expected utility of a buyer in group $i$. It is worth noting that, given the distribution of values within each group is similar, $\nu_i^t$ does not depend on $k$ (the buyer's index within the group). We also set the these functions equal to $-\infty$ if there is no feasible auction satisfying \eqref{eqn:residual_fairness_constraint}. The following result regarding $\nu_i^T(\cdot, \cdot)$ and $\mu^T(\cdot, \cdot)$ is a corollary of Theorem \ref{thrm:fair_allocation_static}. Notice that the last round can be seen as a static auction with fairness constraint for group $i$ given by $R_i^T$.
\begin{corollary} \label{corollary:last_round}
At round $T$, and for a given pair of residual minimum allocations $(R_1^T, R_2^T)$, we have
\begin{subequations} \label{eqn:mu_nu_T}
\begin{align}
\nu_i^T(R_1^T, R_2^T) &= \frac{1}{n} \int_{G_i} \frac{1-F_i^T(v_i)}{f_i^T(v_i)} d F_\text{max}^T(v_1,v_2), \\
\mu^T(R_1^T, R_2^T) &= \int_{G_1} \phi_1^T(v_1)d F_\text{max}^T(v_1,v_2) +    
\int_{G_2} \phi_2^T(v_2)d F_\text{max}^T(v_1,v_2),
\end{align}    
\end{subequations}
where $F_\text{max}^T(v_1,v_2)$ is given by \eqref{eqn:CDF_max} and $G_i$'s are defined in Theorem \ref{thrm:fair_allocation_static}.
\end{corollary}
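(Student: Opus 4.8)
The plan is to observe that the last round is literally an instance of the static problem of Section~3 with the target allocations $\alpha_i$ replaced by $R_i^T$, and then to read the two value functions off the characterization already in hand. Setting $t=T$ in \eqref{eqn:residual_fairness_constraint} collapses the left-hand side to $\mathbb{E}[\sum_{k}x_{i,k}^T]$ and the right-hand side to $R_i^T$, so the round-$T$ fairness requirement coincides with the static constraint \eqref{eqn:static_fairness_constraint} for $\alpha_i=R_i^T$; moreover Assumption~\ref{assumption:no_reserve_price} places no restriction on the last round. Hence, whenever the round-$T$ problem is feasible, the revenue-maximizing EPIC and IR mechanism is exactly the one described by Theorem~\ref{thrm:fair_allocation_static} (with payments from Proposition~\ref{proposition:static_EPIC}), its regions $G_1,G_2$ determined by some $\eta_1,\eta_2\ge 0$ and $\gamma=\eta_2-\eta_1$; when it is infeasible both $\mu^T$ and $\nu_i^T$ equal $-\infty$ by definition, and there is nothing to prove. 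I assume feasibility below.

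For $\mu^T$ I would start from the revenue identity \eqref{eqn:seller_utility_static}, which expresses the seller's expected utility as $\mathbb{E}_{\bv}\big[\sum_{i,k}\phi_i^T(v_{i,k})\,x_{i,k}(\bv)\big]$. By part (i) of Theorem~\ref{thrm:fair_allocation_static}, at most one indicator $x_{i,k}$ is one, and if group $i$ is served it is the buyer attaining $v_i=\max_k v_{i,k}$; since $\phi_i^T$ is increasing by Assumption~\ref{assumption:regularity}, that buyer also has the largest virtual value in the group, so the integrand equals $\phi_i^T(v_i)$ on the event $(v_1,v_2)\in G_i$ and $0$ elsewhere. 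Because the allocation depends on $\bv$ only through $(v_1,v_2)$ by part (ii), taking the expectation and pushing forward to the law of $(v_1,v_2)$, whose CDF is $F_{\text{max}}^T$ as in \eqref{eqn:CDF_max}, gives $\mu^T=\int_{G_1}\phi_1^T(v_1)\,dF_{\text{max}}^T+\int_{G_2}\phi_2^T(v_2)\,dF_{\text{max}}^T$.

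For $\nu_i^T$ I would use the IR payment rule ($c\equiv 0$ in \eqref{eqn:payment_static}), under which buyer $(i,k)$'s realized utility is $\mathcal{U}_{i,k}^T=v_{i,k}x_{i,k}-p_{i,k}=\int_{\underline{v}_i^T}^{v_{i,k}}x_{i,k}(z,\bv_{-(i,k)})\,dz$. Taking the expectation over $v_{i,k}$ and interchanging the order of integration (valid by Tonelli, the integrand being nonnegative) yields the standard information-rent identity $\mathbb{E}[\mathcal{U}_{i,k}^T]=\mathbb{E}_{\bv}\big[x_{i,k}(\bv)\,(1-F_i^T(v_{i,k}))/f_i^T(v_{i,k})\big]$. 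Summing over $k\in[n]$ and using that on $(v_1,v_2)\in G_i$ the unique served buyer has value $v_i$, the bracketed sum reduces to $\big((1-F_i^T(v_i))/f_i^T(v_i)\big)\,\mathbf 1\{(v_1,v_2)\in G_i\}$; since the value distribution within group $i$ is common to all $k$, each $\mathbb{E}[\mathcal{U}_{i,k}^T]$ equals the per-buyer value $\nu_i^T$, so dividing the sum by $n$ gives $\nu_i^T=\tfrac1n\int_{G_i}(1-F_i^T(v_i))/f_i^T(v_i)\,dF_{\text{max}}^T$, as claimed.

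The corollary is therefore essentially bookkeeping on top of Theorem~\ref{thrm:fair_allocation_static} and Proposition~\ref{proposition:static_EPIC}; the only points deserving care are verifying that the reduction to the static constraint is exact (in particular that Assumption~\ref{assumption:no_reserve_price} does not bind at $t=T$) and that the interchange of integrals in the rent computation is justified, for which nonnegativity of the allocation suffices.
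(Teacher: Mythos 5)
Your proposal is correct and follows essentially the same route as the paper: reduce round $T$ to the static problem with targets $R_i^T$, use the Myerson payment/rent identity to write each buyer's expected utility as $\mathbb{E}[x_{i,k}(1-F_i^T(v_{i,k}))/f_i^T(v_{i,k})]$ and the seller's as $\mathbb{E}[\sum_{i,k}\phi_i^T(v_{i,k})x_{i,k}]$, then exploit within-group symmetry (the served buyer is the group maximum, each buyer is the maximum with probability $1/n$) to push everything forward to $F_{\text{max}}^T$ over the regions $G_i$. The only cosmetic difference is that you derive the rent identity directly via Tonelli, whereas the paper invokes the textbook lemma $\mathbb{E}[p_{i,k}]=\mathbb{E}[x_{i,k}\phi_i^T(v_{i,k})]$; the substance is identical.
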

This corollary is proved in the appendix. Now, suppose we are in the $t$-th round, conducting the backward induction. Consequently, we have access to $\nu_i^{t+1}(\cdot, \cdot)$ (for $i \in [2]$) and $\mu^{t+1}(\cdot, \cdot)$. Next, we will establish the optimal allocation for round $t$ and describe how to compute $\nu_i^t(\cdot, \cdot)$ and $\mu^t(\cdot, \cdot)$ accordingly. We have three regimes:
%%%%%%%%%%%%%%%%%%%%%%%%%%
\paragraph{(I) When the auction is infeasible:} Suppose that no matter whether the seller gives the item to the first or second group, they would not be able to satisfy the residual minimum allocation constraint in the remaining rounds. In this case, we declare the auction is infeasible at round $t$. 
\begin{fact}
For a given pair of residual minimum allocations $(R_1^t, R_2^t)$, suppose we have 
\begin{equation}
\mu^{t+1}((R_1^t - 1)/\delta, R_2^t/\delta) \text{ and } 
\mu^{t+1}(R_1^t/\delta, (R_2^t-1)/\delta)= 
-\infty.         
\end{equation}
Then, there is no feasible auction satisfying the residual minimum allocation constraints, and hence, we set $\mu^t(R_1^t, R_2^t)$ and $\nu_i^t(R_1^t, R_2^t)$, for both $i \in [2]$, equal to $-\infty$.
\end{fact}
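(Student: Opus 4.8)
The plan is to derive the claim directly from the recursive (Bellman-type) structure of feasibility, using three ingredients already established: Assumption~\ref{assumption:no_reserve_price} (the item must be allocated in every non-terminal round), Fact~\ref{fact:resid_min_alloc_update} (the deterministic update rule for the residual minimum allocation), and the defining convention for $\mu^{t+1}$, namely that $\mu^{t+1}(\rho_1,\rho_2)=-\infty$ holds precisely when no mechanism starting at round $t+1$ satisfies the residual fairness constraint~\eqref{eqn:residual_fairness_constraint} with residual pair $(\rho_1,\rho_2)$.

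Concretely, I would take any EPIC and IR mechanism that is feasible from round $t$ onward for the pair $(R_1^t,R_2^t)$ — i.e., one that satisfies~\eqref{eqn:residual_fairness_constraint} at round $t$ and hence, by the equivalence with~\eqref{eqn:fairness_constraint}, at every subsequent round — and show none can exist. Since round $t$ is not the last round, Assumption~\ref{assumption:no_reserve_price} forces the item to be allocated at round $t$ for every realization of the values $\bv^t$; conditional on any such realization the item goes either to group $1$ or to group $2$. By Fact~\ref{fact:resid_min_alloc_update}, in the first case the residual minimum allocations entering round $t+1$ are $((R_1^t-1)/\delta,\,R_2^t/\delta)$, and in the second case they are $(R_1^t/\delta,\,(R_2^t-1)/\delta)$. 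On each such path the continuation from round $t+1$ is itself a mechanism required to satisfy~\eqref{eqn:residual_fairness_constraint} at round $t+1$ with the corresponding residual pair. But the hypothesis says both $\mu^{t+1}((R_1^t-1)/\delta,R_2^t/\delta)$ and $\mu^{t+1}(R_1^t/\delta,(R_2^t-1)/\delta)$ equal $-\infty$, so by the definition of $\mu^{t+1}$ no feasible continuation exists in either case; hence no feasible mechanism from round $t$ exists. Therefore the auction is infeasible at round $t$, and by the convention fixed when $\mu^t$ and $\nu_i^t$ were introduced we set $\mu^t(R_1^t,R_2^t)=-\infty$ and $\nu_i^t(R_1^t,R_2^t)=-\infty$ for both $i\in[2]$.

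I do not expect any real obstacle: the statement is essentially a restatement of the fact that feasibility of a continuation problem is a necessary condition for feasibility of the full problem. The only point needing a sentence of care is that the round-$t$ allocation may depend on the realized values (and may be randomized), so ``the item goes to group $1$ or to group $2$'' must be read realization-by-realization; since every realization lands in one of exactly the two residual configurations appearing in the hypothesis, and both are infeasible, infeasibility of the whole mechanism follows regardless.
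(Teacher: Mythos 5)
Your proposal is correct and matches the paper's intent: the paper states this as an unproved observation, and the reasoning it relies on is exactly yours — Assumption~\ref{assumption:no_reserve_price} forces an allocation at round $t<T$, Fact~\ref{fact:resid_min_alloc_update} pins down the only two possible residual pairs at $t+1$, and the $-\infty$ convention for $\mu^{t+1}$ rules out both continuations, realization by realization. Your added remark about value-dependent (or randomized) round-$t$ allocations is the right point of care and does not change the conclusion.
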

To better highlight this proposition, let us consider a simple example. Suppose $T=2$, $\delta=0.5$, and $\alpha_1=\alpha_2 = 0.4$. In this scenario, $R_1^1 = R_2^1 = 0.6$. Now, the residual minimum allocation for the second round for the group that doesn't receive the item in the first round would be $1.2$, which is not feasible. Hence, there is no feasible auction that satisfies the fairness constraint.
%%%%%%%%%%%%%%%%%%%%%%%%%%
\paragraph{(II) When the item must be allocated to a certain group:} Suppose, for instance, that if the seller allocated the item to the second group, they would not be able to satisfy the resulting residual minimum allocation constraints in the remaining rounds. However, allocating the item to the first group would result in a feasible auction. Therefore, in this case, the seller must allocate the item (if keeping it were feasible, then allocating it to the second group would also have been feasible), and moreover, the item must be allocated to the first group. The following result formalizes the optimal allocation for this scenario.
\begin{proposition}\label{proposition:dynamic_one_group}
Suppose Assumption \ref{assumption:regularity} holds. For a given pair of residual minimum allocations $(R_1^t, R_2^t)$ and some $i \in [2]$, suppose we have
\begin{equation}
\mu^{t+1}(R_i^t/\delta, (R_{-i}^t-1)/\delta) = -\infty
\text{ but } 
\mu^{t+1}((R_i^t - 1)/\delta, R_{-i}^t/\delta) > -\infty .
\end{equation}    
Then, the seller allocates the item to the buyer with the highest value in group $i$, with the payment being equal to the second-highest value within the same group. Furthermore, we have 
\begin{subequations}\label{eqn:dynamic_one_group_updates}
\begin{align}
\nu_{i}^t(R_1^t, R_2^t) &= \int (1-F_{i}^t(v))(F_{i}^t(v))^{n-1} dv  + \delta \nu_{i}^{t+1}((R_{i}^t - 1)/\delta, R_{-i}^t/\delta) , \label{eqn:dynamic_one_group_updates_nu_i}\\
\nu_{-i}^t(R_1^t, R_2^t) &= \delta \nu_{-i}^{t+1}((R_i^t - 1)/\delta, R_{-i}^t/\delta), \\
\mu^t(R_1^t, R_2^t) &= \int \phi_i^t(v) d F_i^t(v)^n + \delta \mu^{t+1}((R_i^t - 1)/\delta, R_{-i}^t/\delta).
\end{align}
\end{subequations}
\end{proposition}
Intuitively, if the item must be allocated to the first group, the question of who within that group receives the item essentially boils down to a static Vickrey auction: the buyer offering the highest value acquires the item and pays the price equal to the second highest value in the group. It is important to note that, in this scenario, there is no reserve price (and we do not need to explicitly impose Assumption \ref{assumption:no_reserve_price}) because the seller is obliged to allocate the item to meet the fairness constraint.
%%%%%%%%%%%%%%%%%%%%%%%%%%
\paragraph{(III) When allocation to both groups is feasible:} Suppose that whether the seller allocates the item to the first or second group, they still have the opportunity to meet the allocation constraint in future rounds. Now, what should be the seller's allocation strategy to maximize their utility? 

Note that the expected utility of buyers in each group changes based on whether or not they receive the item in the current round. The difference between these two scenarios can be interpreted as the expected utility of not receiving the item. 
More formally, for a given pair of residual minimum allocations $(R_1^t, R_2^t)$, and when allocation to both groups is feasible, the net utility of not receiving the item for group $i$ is defined as
\begin{equation} \label{eqn:gain_not_receiving}
\Delta_i^t(R_1^t, R_2^t) := \nu_i^{t+1}(R_i^t/\delta, (R_{-i}^t-1)/\delta)  
- \nu_i^{t+1}((R_i^t - 1)/\delta, R_{-i}^t/\delta).
\end{equation}
%%%%%%%%%%%%%%%%%%%%%%%%%%
\begin{remark}\label{remark:pos_delta}
One might expect the net utility of not receiving the item to be nonnegative for each group, given that not receiving the item at the current time implies a higher chance of receiving it in future rounds. However, interestingly, this argument is not necessarily true; the expected utility of a buyer may decrease even as the fairness constraint on their group increases.
Let us elaborate this matter by a simple example. Consider a scenario with \(T=2\), \(n=1\), and \(\delta=0.8\). Suppose the distributions of values are identical across both groups, i.e., \(F_1^t = F_2^t\) for \(t \in [2]\), while the values from the first round are significantly lower than those from the second round, i.e., \(\bar{v}_i^1 \ll \underline{v}_i^2\). Initially, let \(\alpha_1=\alpha_2=0.2\). In this scenario, each group has some positive probability of receiving the item in the second round. Now, let us increase \(\alpha_1\) to \(0.5\). It is straightforward to verify that the only feasible allocation is to give the item to the first group in the first round and to the second group in the second round. Given that the values in the first round are considerably lower, this allocation actually decreases the expected utility for the first group.
\end{remark}
%%%%%%%%%%%%%%%%%%%%%%%%%%  
%%%%%%%%%%%%%%%%%%%%%%%%%%
Lastly, we define $\Delta_0^t(R_1^t, R_2^t)$ as the difference in seller's future utility by allocating the item to the first group, compared to allocating it to the second group, i.e.,
\begin{equation}
\Delta_0^t(R_1^t, R_2^t) := \mu^{t+1}((R_{1}^t-1)/\delta, R_{2}^t/\delta) - \mu^{t+1}((R_{2}^t-1)/\delta, R_{1}^t/\delta). 
\end{equation}
%%%%%%
\begin{theorem} \label{theorem:dynamic_both_feasible}
Suppose Assumptions \ref{assumption:regularity} and \ref{assumption:no_reserve_price} hold. For a given pair of residual minimum allocations $(R_1^t, R_2^t)$, suppose both $\mu^{t+1}(R_1^t/\delta, (R_2^t-1)/\delta)$ and $\mu^{t+1}((R_1^t - 1)/\delta, R_2^t/\delta)$ are finite. Then, 
the seller allocates the item to the buyer with the highest value in group $i$ if $(v_1^t, v_2^t) \in G_i^t$, where $v_j^t := \max_{k} v_{j,k}^t$ is the maximum value among buyers in group $j$ and $G_i^t$ is given by 
\begin{align}\label{eqn:G_i_t}
G_i^t := \Big \{ (v_1, v_2) ~\Big \vert~  \phi_i^t(v_i) - \phi_{-i}^t(v_{-i}) \geq n \delta \left ( \Delta_i^t(R_1^t, R_2^t) - \Delta_{-i}^t(R_1^t, R_2^t)\right ) + (-1)^i \delta \Delta_0^t(R_1^t, R_2^t) \Big \}.
\end{align}
Moreover, let $i^*$ denote the index of the winner's group. Then, the payment of buyer $(i,k)$ is given by
\begin{align}
v_{i,k}^t~x_{i,k}^t(v_{i,k}^t, \bv_{-(i,k)}^t) - \int_{\underline{v}_i^t}^{v_{i,k}^t} x_{i,k}^t(z, \bv_{-(i,k)}^t) dz 
+ \delta \Delta_i^t(R_1^t, R_2^t) (\zeta_i^t - \mathbbm{1}(i^*=i)),    
\end{align}
where $\zeta_i^t$ denotes the probability of group $i$ winning the item if the auction runs with $n-1$ buyers from their group instead of $n$ buyers. 
\end{theorem}
%%%%%%%%%%%%%%%%%%%%%%%%%%
The explicit characterization of $\zeta_i^t$'s is provided in Appendix \ref{proof:theorem:dynamic_both_feasible}. Similar to \cref{thrm:fair_allocation_static}, our derivation in Appendix \ref{proof:ironing} shows that we can relax Assumption \ref{assumption:regularity} through ironing. It is worth emphasizing that while the term $\zeta_i^t$ depends on the allocation rule of an auction with $2n-1$ buyers, we can derive it in a non-recursive manner. The reason is that finding the allocation, similar to $G_i^t$ above, does not require running the smaller-size auction. We draw a few insights from this result:
\begin{itemize}[leftmargin=*]
\item Notice that, similar to the static case, the boundary of the allocation rule is a linear function of the maximum virtual values of both groups.
%%%%%%%%%%%%%%%%%%%%%%%%%%
\item Note that the payment function consists of three terms. The first term, $v_{i,k}^t~x_{i,k}^t(v_{i,k}^t, \bv_{-(i,k)}^t) - \int_{\underline{v}_i^t}^{v_{i,k}^t} x_{i,k}^t(z, \bv_{-(i,k)}^t) dz$, represents what only the winner pays and is equal to the minimum bid they could have made to still win the item. This is indeed similar to the second-price auction in which the winner's payment is equal to the second highest bid.

To better understand the other two terms, and for the sake of discussion, suppose $\Delta_i^t(R_1^t, R_2^t)$'s are nonnegative, indicating a nonnegative net future utility associated with not receiving the item at the current time (because it means your chance of receiving it in the future increases). The second term of the payment function is $- \delta \Delta_i^t(R_1^t, R_2^t) \mathbbm{1}(i^*=i)$, which is negative and thus represents a transfer from the seller to the buyers (we call this \textit{participation bonus}). Only the buyers from the group that wins the item receive this payment. To understand why such a reward is needed from the seller, notice that when a buyer's group wins the item, they are in fact losing an amount of $\delta \Delta_i^t(R_1^t, R_2^t)$ in their future utility because their group has a lower chance of winning in the future. Hence, in some sense, this reduces the value of the item in the current round, and the buyers may find it profitable to underreport their value. That said, this payment is there to incentivize truthful reporting.

Finally, the last term in the payment function is $\delta \Delta_i^t(R_1^t, R_2^t) \zeta_i^t$, which is a payment that every buyer should make to the seller. The rationale here is that if the buyer skips the current round, then if their group wins, which happens with probability $\zeta_i^t$, their group would receive a payment equal to $\delta \Delta_i^t(R_1^t, R_2^t)$ from the seller, as we elaborated above. Hence, the seller charges the same amount as the \textit{entry fee}.
%%%%%%%%%%%%%%%%%%%%%%%%%%
\item Notice that in the allocation rule $G_i^t$, there is a threshold of $n \delta \left ( \Delta_i^t(R_1^t, R_2^t) - \Delta_{-i}^t(R_1^t, R_2^t)\right ) + (-1)^i \delta \Delta_0^t(R_1^t, R_2^t)$ which determines which group receives the item. Notice that this threshold increases as $\Delta_i^t(R_1^t, R_2^t)$ increases, meaning that the seller is less willing to give the item to group $i$. This is because a higher $\Delta_i^t(R_1^t, R_2^t)$ means that the seller has to make a higher payment to group $i$'s buyers if they win. Additionally, this threshold decreases for the first group when $\Delta_0^t(R_1^t, R_2^t)$ increases, indicating that the seller is more willing to allocate the item to the first group when $\Delta_0^t(R_1^t, R_2^t)$ is higher. This is intuitive as this term represents the seller's future net utility by allocating the item to the first group compared to allocating it to the second group.
\end{itemize}
%%%%%
We also derive the update of the interim functions under the premise of Theorem \ref{theorem:dynamic_both_feasible}:
\begin{subequations}\label{eqn:dynamic_two_groups_updates}
\begin{align}
& \nu_{i}^t(R_1^t, R_2^t) = \frac{1}{n} \int_{G_i^t} \frac{1-F_i^t(v_i)}{f_i^t(v_i)} d F_\text{max}^t(v_1,v_2)
+\delta \nu_i^{t+1}(R_i^t/\delta, (R_{-i}^t-1)/\delta) - \delta \Delta_i^t(R_1^t, R_2^t) \zeta_i^t, \label{eqn:dynamic_two_groups_updates_nu}\\
& \mu^t(R_1^t, R_2^t) = \\
& \sum_{i=1}^2 \Bigg ( \int_{G_i^t} \phi_i^t(v_i^t)~dF_\text{max}^t(v_1,v_2)
+ \delta \mathbb{P}(G_i^t) \mu^{t+1}((R_{i}^t-1)/\delta, R_{-i}^t/\delta) 
+ \delta n \Delta_i^t(R_1^t, R_2^t) (\zeta_i^t- \mathbb{P}(G_i^t)) \Bigg ), \nonumber
\end{align}
\end{subequations}
where the probability measure $\mathbb{P}(\cdot)$ is taken with respect to the distribution $F_\text{max}^t(v_1,v_2) := F_1^t(v_1)^n F_2^t(v_2)^n$.
%%%%%%%%%%%%%%%%%%%%%%%%%%
%%%%%%%%%%%%%%%%%%%%%%%%%%

\begin{remark}[\textbf{The ex post fairness constraint}] \label{remark:ex_post}
Note that if we were to consider the ex post fairness constraint, the recursive nature of our analysis would remain the same, with the only difference being the basis for the backward induction, i.e., the case of $t=T$, or the static case. In that case, it is not feasible to impose an ex post fairness constraint unless the requirement applies to only one of the groups, i.e., $R_1^T = 0$ or $R_2^T = 0$, in which case we must always allocate the item to the group with the positive minimum allocation requirement. In other words, under the ex post fairness constraint, our recursive derivations \eqref{eqn:dynamic_two_groups_updates} would still hold, but \cref{corollary:last_round} would need to be modified to reflect the solution of the static case under the ex post fairness constraint, as described above. 
\end{remark}

%%%%%%%%%%%%%%%%%%%%%%%%%%
%%%%%%%%%%%%%%%%%%%%%%%%%%
\subsection{Do the entry fees cover the participation bonuses?}  

As we have seen so far, the seller's utility consists of three components:  payment from the winning buyer, entry fees, and participation bonuses. The first of these components matches the classical second price auction setting; The latter two components, however,  extend the classical setting and raise questions about their net effect. In other words, since the optimal fair mechanism includes additional transfers both from the seller to the buyer and vice versa, does the seller end up paying the buyers out of pocket?  One can indeed verify that this could happen by considering the case where $n=1$, where the entry fee is zero but the participation bonus is positive. However, we show that, as the number of buyers increases, the entry fee offsets the participation bonus the seller has to pay.

%Here, we explore the behavior of the expected difference between the entry fees and participation bonuses in terms of the number of buyers under reasonable assumptions: 
 %%%%%%%%%%%%%%%%%%%%%%%%%%
 
For this analysis, we assume the setting where the expected future utility of a buyer increases if they do not win in the current round. In other words, we suppose that we are not in the type of corner case described in \cref{remark:pos_delta}.
\begin{assumption}\label{assumption:pos_Delta}
For any $i$ and $t$, a buyer in group $i$'s expected future utility over the remaining rounds increases if they do not win the item in round $t$, i.e., 
 \begin{equation*}
     \begin{aligned}
         \Delta_i^t(R_1^t, R_2^t) \geq 0.
     \end{aligned}
 \end{equation*}
 \end{assumption}
  %%%%%%%%%%%%%%%%%%%%%%%%%%
 Next we present a result that characterizes the behavior of the expected difference between the entry fees and participation bonuses in terms of the number of buyers participating in the auction.
  %%%%%%%%%%%%%%%%%%%%%%%%%%
\begin{proposition}\label{prop:seller_payment}
Suppose all buyers' values are bounded and Assumption \ref{assumption:pos_Delta} holds. Then, the total participation bonuses exceed the total entry fees by an amount of $\mathcal{O}(1/n)$ in expectation, i.e.,  
\begin{equation*}
    \begin{aligned}
        \mathbb{E}\left[ n\delta \sum_{i=1}^2 \Delta_i^t(R_1^t, R_2^t) (\zeta_i^t - \mathbbm{1}(i^*=i))\right] \geq \frac{-C}{n},
    \end{aligned}
\end{equation*} 
for a constant $C$ depending on $\delta $ and the maximal buyer value.
\end{proposition} 
%%%%%%%%%%%%%%%%%%%%%%%%%%
This result is built upon two observations. The first is that the difference between the probability that group $i$ wins in round $t$  with $n-1$ buyers and the probability that group $i$ wins in round $t$ with $n$ buyers is at least $-1/n$, i.e, for all $i$ and $t$, 
%%%%%%%%%%%%%%%%%%%%%%%%%%
\begin{equation*}
    \begin{aligned}
        \zeta_i^t - \mathbb{P}(G_i^t) \geq \frac{-1}{n}
    \end{aligned}
\end{equation*}
%%%%%%%%%%%%%%%%%%%%%%%%%%
The second step is to see that the expected utility of buyer $i$ at round $t$ is bounded by $\mathcal{O}\left(\frac{1}{n}\right)$, i.e., 
\begin{equation*}
    \begin{aligned}
        \nu_i^t(R_1^t, R_2^t) \leq \mathcal{O}\left(\frac{1}{n}\right)
    \end{aligned}
\end{equation*}
For a full proof, see Appendix \ref{proof:seller_payment}. This result shows that, as the size of the auction grows, the entry fee paid to the seller by the buyers covers the participation bonus earned by buyers in the winning group. In other words, the seller pays, at worst, only a negligible amount when there are many buyers. 
%%%%%%%%%%%%%%%%%%%%%%%%%%
%%%%%%%%%%%%%%%%%%%%%%%%%%
\subsection{An Approximation Scheme}
Our results so far offer an exact characterization of the optimal dynamic allocation. However, due to the recursive nature of the analysis, the computational complexity of this allocation increases exponentially with the number of rounds, $T$. We conclude this section by discussing efficient methods to find an approximately optimal allocation. Since our focus is not on approximating the integrals here, we assume we have access to an oracle which can compute integrals and solve integral equations. 
\begin{assumption}\label{assumption:oracle}
We have access to an oracle that, for any $i \in [n]$ and $t \in [T]$, can compute the integrals $\int_G \phi_i^t(v_i^t) , dF_\text{max}^t(v_1, v_2)$ and $\int_G \frac{1 - F_i^t(v_i)}{f_i^t(v_i)} , dF_\text{max}^t(v_1, v_2)$ over any affine subspace $G$. Furthermore, the oracle can find the optimal allocation in the static case by solving the integral equations in Proposition \ref{proposition:optimal_static}.
\end{assumption}
We start by making the following observation regarding the case $\delta=1$.
\begin{fact}\label{fact:delta_equal_1_complexity}
Suppose $\delta=1$ and Assumptions \ref{assumption:regularity}, \ref{assumption:no_reserve_price} , and \ref{assumption:oracle} hold.  Then, we can find the exact optimal allocation by calling the oracle $\mathcal{O}(T^2)$ times.     
\end{fact}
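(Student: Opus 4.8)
The plan is to exploit the key simplification that occurs when $\delta = 1$: the residual minimum allocation updates in Fact \ref{fact:resid_min_alloc_update} become $R_i^{t+1} = R_i^t - 1$ (if group $i$ wins) or $R_i^{t+1} = R_i^t$ (otherwise), with \emph{no division by $\delta$}. Consequently, starting from $R_i^1 = (1-\delta^T)/(1-\delta)\,\alpha_i = T\alpha_i$ (interpreting the limit), at any round $t$ the residual $R_i^t$ equals $T\alpha_i$ minus an integer (the number of times group $i$ has won so far). Since the number of wins for group $i$ by round $t$ is an integer between $0$ and $t-1$, the pair $(R_1^t, R_2^t)$ can take at most $t$ distinct values at round $t$ — in fact, because exactly one item is allocated per round under Assumption \ref{assumption:no_reserve_price} (except possibly the last), if group $1$ has won $j$ times then group $2$ has won $(t-1)-j$ times, so the state is fully described by a single integer $j \in \{0, 1, \dots, t-1\}$.

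First I would set up this state space formally: define $s_j^t := (T\alpha_1 - j,\ T\alpha_2 - (t-1-j))$ for $j = 0, \dots, t-1$, and observe via Fact \ref{fact:resid_min_alloc_update} that these are exactly the reachable residual-allocation pairs at the start of round $t$. Then I would run the backward induction of the previous subsection over this finite state space. At round $T$, Corollary \ref{corollary:last_round} together with the oracle (Assumption \ref{assumption:oracle}, which can solve the static integral equations of Proposition \ref{proposition:optimal_static}) gives $\mu^T(s_j^T)$ and $\nu_i^T(s_j^T)$ for each of the $T$ states with a constant number of oracle calls each. Proceeding from round $t+1$ to round $t$: for each of the $t$ states $s_j^t$, determining which regime (I), (II), or (III) applies only requires checking finiteness of already-computed values $\mu^{t+1}(\cdot)$ at the successor states $s_j^{t+1}$ and $s_{j+1}^{t+1}$, and then applying the closed-form updates \eqref{eqn:dynamic_one_group_updates} or \eqref{eqn:dynamic_two_groups_updates} — each of which, by the discussion after Theorem \ref{theorem:dynamic_both_feasible}, involves only finitely many integrals over affine regions $G_i^t$ (including the $\zeta_i^t$ terms, which the text notes are computable non-recursively) plus finitely many arithmetic combinations of stored values. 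Thus each state at each round costs $\mathcal{O}(1)$ oracle calls, and summing $\sum_{t=1}^T t = \mathcal{O}(T^2)$ gives the bound.

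The main obstacle — really the only non-bookkeeping point — is verifying that the state space genuinely stays of size $\mathcal{O}(t)$ rather than blowing up. This rests on two facts that must be invoked carefully: (i) when $\delta=1$ there is no $1/\delta$ scaling in the residual update, so residuals only ever shift by integers and hence lie in a fixed $\mathbb{Z}$-coset, and (ii) Assumption \ref{assumption:no_reserve_price} forces exactly one allocation per round through round $T-1$, collapsing the two-dimensional residual state to a one-dimensional count. One should also check the round-$T$ boundary: there the item may be withheld, but that only adds the single "no allocation" branch, which does not change the $\mathcal{O}(T)$ count of states at round $T$, and Corollary \ref{corollary:last_round} already accounts for it. With these observations in place, the backward induction is a finite computation of the claimed size, and the efficiency of each step is immediate from the closed-form updates and the oracle assumption.
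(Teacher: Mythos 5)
Your proposal is correct and follows essentially the same argument as the paper: since $\delta=1$ removes the $1/\delta$ rescaling, the residuals shift by integers only and (by Assumption \ref{assumption:no_reserve_price}) exactly one allocation per round collapses the reachable pairs $(R_1^t,R_2^t)$ to at most $t$ states at round $t$, so the backward induction with $\mathcal{O}(1)$ oracle calls per state totals $\mathcal{O}(T^2)$. You simply make explicit the state-counting and per-state cost details that the paper's brief justification leaves implicit.
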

To see why this holds, notice that, for the case of $\delta=1$, the pair $(R_1^t, R_2^t)$ at most takes $T$ different values for any $t$, and so computing all of $\{\mu^t(R_1^t, R_2^t), \nu_1^t(R_1^t, R_2^t), \nu_2^t(R_1^t, R_2^t)\}_t$ would  require $\mathcal{O}(T^2)$ rounds of computation using the recursive derivations we stated earlier. However, for the case $\delta<1$, this is no longer the case, as the pair $(R_1^t, R_2^t)$ can take on as many as $2^T$ different values. In this case, we can technically choose some $T_0 < T$ and aim to achieve the fairness constraint approximately in the first $T_0$ rounds, and then run the regular second price auction in the remaining rounds. The following result formalizes this.
%%%%%
% \begin{proposition}
% Suppose Assumptions \ref{assumption:regularity}-\ref{assumption:oracle} hold.     
% \end{proposition}
%%%%%%%%%%%%%%%%%%%%%%%%%%%%%%%%%
\begin{proposition}\label{proposition:approx_delta_leq1}
Suppose Assumptions \ref{assumption:regularity}, \ref{assumption:no_reserve_price} , and \ref{assumption:oracle} hold and that $\delta < 1$. Then, for any $\varepsilon \in [0, \min_i \alpha_i]$, there exists an allocation $\bx'$ with the following properties: 
\begin{enumerate}
    \item $\bx'$ guarantees that the fairness constraint for group $i$ is satisfied at a level of at least $(1-\varepsilon)(\alpha_i - \varepsilon)$.
    \item $\bx'$ guarantees that the seller's total utility is at least that of the optimal allocation.
    \item $\bx'$ can be computed by calling the oracle $\mathcal{O}\left( \frac{1}{\varepsilon}^{\frac{1}{\log(1/\delta)}}\right)$ times.
\end{enumerate}
\end{proposition}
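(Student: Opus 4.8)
The plan is to truncate the horizon: enforce the fairness constraint ``actively'' only over a short prefix of $T_0 := \min\{T,\ \lceil \log(1/\varepsilon)/\log(1/\delta)\rceil\}$ rounds, chosen so that $\delta^{T_0}\le\varepsilon$, and run the unconstrained revenue-optimal auction afterwards. Concretely, define $\bx'$ as follows. Over rounds $1,\dots,T_0$, run the optimal dynamic fair mechanism characterized above (Proposition~\ref{proposition:dynamic_one_group} and Theorem~\ref{theorem:dynamic_both_feasible}), but for a $T_0$-round horizon and with the slackened targets $\alpha_i-\delta^{T_0}$ in place of $\alpha_i$; since here round $T_0$ is not the last round of the real game, Assumption~\ref{assumption:no_reserve_price} is imposed on all of rounds $1,\dots,T_0$, so the static building block at round $T_0$ is the must-allocate variant of Theorem~\ref{thrm:fair_allocation_static} (solved through Proposition~\ref{proposition:optimal_static} with no reserve). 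Over rounds $T_0+1,\dots,T$, run the standard unconstrained revenue-optimal auction (allocate to the highest virtual value, with a reserve only at $t=T$). If $T_0=T$ this is just the exact optimal mechanism and there is nothing further to argue, so assume $T_0<T$. Since the post-$T_0$ mechanism ignores the public history, the continuation utility it offers any buyer entering round $T_0+1$ is a constant; hence it leaves the incentive and participation constraints of every round $\tau\le T_0$ untouched, and $\bx'$ is EPIC and IR by combining the two pieces.

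For Part~1, split the left-hand side of the fairness constraint \eqref{eqn:fairness_constraint} at $t=T$ into a prefix and a tail part. The tail part is nonnegative and at most $\tfrac{1-\delta}{1-\delta^{T}}\sum_{\tau=T_0+1}^{T}\delta^{\tau-1}=\tfrac{\delta^{T_0}-\delta^{T}}{1-\delta^{T}}\le\delta^{T_0}$, so it can only help. The prefix part equals $\tfrac{1-\delta^{T_0}}{1-\delta^{T}}\,\beta_i$, where $\beta_i$ is the $T_0$-round fairness value that $\bx'$ guarantees to be $\ge\alpha_i-\delta^{T_0}$ by construction; thus the prefix contributes at least $(1-\delta^{T_0})(\alpha_i-\delta^{T_0})\ge(1-\varepsilon)(\alpha_i-\varepsilon)$, using $\delta^{T_0}\le\varepsilon\le\alpha_i$. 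Adding the two parts gives Part~1.

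For Part~2, I would argue via a \emph{relaxation}. Let $P'$ be the problem of maximizing the seller's expected utility over EPIC, IR, Assumption~\ref{assumption:no_reserve_price} mechanisms subject only to the prefix $T_0$-round fairness value obeying $\beta_i\ge\alpha_i-\delta^{T_0}$ (with no requirement on rounds $>T_0$). Any mechanism feasible for the original problem is feasible for $P'$: by the tail bound above its prefix contribution to \eqref{eqn:fairness_constraint} is at least $\alpha_i-\delta^{T_0}$, and since that contribution equals $\tfrac{1-\delta^{T_0}}{1-\delta^{T}}\beta_i\le\beta_i$, we get $\beta_i\ge\alpha_i-\delta^{T_0}$. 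Hence $\mathrm{OPT}(P')\ge\mathrm{OPT}$. Next, an optimal $P'$ mechanism may be taken to run the unconstrained revenue-optimal auction in rounds $T_0+1,\dots,T$: for every realized history this is the revenue-maximal EPIC/IR continuation among mechanisms with no remaining fairness requirement (the relevant class under $P'$), and the $P'$ constraint does not see those rounds, so this replacement weakly raises the objective while preserving feasibility. Once the tail is history-independent it contributes only an additive constant to the seller's objective and to every buyer's continuation value, so rounds $1,\dots,T_0$ decouple into a stand-alone $T_0$-round fair mechanism design problem with targets $\alpha_i-\delta^{T_0}$ and unchanged EPIC/IR constraints --- exactly the problem $\bx'$ solves to optimality. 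Therefore $\bx'$ attains $\mathrm{OPT}(P')\ge\mathrm{OPT}$, which is Part~2.

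For Part~3, the backward recursion (Corollary~\ref{corollary:last_round} at the top round, then \eqref{eqn:dynamic_two_groups_updates} and the analogous one-group updates), run over the $T_0$-round prefix, visits at round $t$ at most $2^{t-1}$ distinct pairs $(R_1^t,R_2^t)$: under Assumption~\ref{assumption:no_reserve_price} each reachable pair has exactly two successors (the item goes to group $1$ or group $2$), starting from a single pair at $t=1$. This is $\sum_{t=1}^{T_0}2^{t-1}=\mathcal{O}(2^{T_0})$ pairs, and each pair costs $\mathcal{O}(1)$ oracle calls --- evaluating $\mu^t,\nu_1^t,\nu_2^t$ via the static characterization (Theorem~\ref{thrm:fair_allocation_static}, Proposition~\ref{proposition:optimal_static}) at $t=T_0$, and via \eqref{eqn:dynamic_two_groups_updates} together with the non-recursive quantities $\zeta_i^t$ for $t<T_0$ --- while the unconstrained tail requires no integral-oracle calls. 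Since $\delta^{T_0}\le\varepsilon$ forces $T_0\le 1+\log(1/\varepsilon)/\log(1/\delta)$, we get $2^{T_0}=\mathcal{O}\big((1/\varepsilon)^{1/\log(1/\delta)}\big)$, and in the remaining case $T_0=T$ one has $\varepsilon\le\delta^{T-1}$, so the direct $\mathcal{O}(2^{T})$ cost is already within this budget. The step I expect to be the main obstacle is the decomposition inside Part~2: rigorously justifying that an optimal $P'$ mechanism can be taken history-independent after round $T_0$ and that this genuinely reduces rounds $1,\dots,T_0$ to a self-contained problem with the same EPIC/IR constraints --- truncating a dynamic mechanism generally perturbs earlier-round incentives, and the argument must lean precisely on the fact that both the $P'$ constraint and the revenue-optimal tail are history-independent, so that the tail enters every buyer's optimization and participation decision only as an additive constant.
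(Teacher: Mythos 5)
Your proposal is correct and takes essentially the same route as the paper: truncate at $T_0$ with $\delta^{T_0}\le\varepsilon$, run the exact recursive fair mechanism on the prefix with slackened targets, run the unconstrained auction on the tail, and use the early-stopping observation that the tail contributes at most $\delta^{T_0}$ to the fairness constraint, which yields both the $(1-\varepsilon)(\alpha_i-\varepsilon)$ guarantee and the $\mathcal{O}(2^{T_0})=\mathcal{O}\bigl((1/\varepsilon)^{1/\log(1/\delta)}\bigr)$ oracle count. Your relaxation-$P'$ decoupling argument for Part 2 is simply a more explicit rendering of the paper's brief piecewise revenue comparison (the prefix optimum over the weaker constraint set dominates the truncated optimal allocation, and the unconstrained tail dominates any constrained continuation), so it does not constitute a different approach.
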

%%%%%%%%%%%%%%%%%%%%%%%%%%%%%%%%%
Roughly speaking, our approximation works in the following way: We first assume that the auction runs only for $T_0 = \log(\varepsilon)/\log(\delta)$ rounds. We use the recursive interim functions defined earlier to find the exact fair allocation under the ex post fairness constraint at level $\alpha_i - \varepsilon$ for group $i$ over $T_0$ rounds. This would require $\mathcal{O}(2^{T_0})$ calls of the oracle. Furthermore, we establish that the slight reduction in the fairness constraint from $\alpha_i$ to $\alpha_i - \varepsilon$, along with the choice of $T_0$, ensures that the allocation we find delivers a seller utility at least at the level of the global optimal allocation. Lastly, for the remaining $T - T_0$ rounds, we conduct the standard second-price auction. A complete proof can be found in \cref{proof:approximation1}. 
%%%%%%%%%%%%%%%%%%%%%%%%%%%%%%%%%

Notice that the complexity of the approximation given by \cref{proposition:approx_delta_leq1} grows large for $\delta$ close to 1. Next, we present a second approximation scheme for the case where $\delta = 1 - \delta'$ and $\delta' \simeq 0$.
%%%%%%%%%%%%%%%%%%%%%%%%%%
\begin{proposition}\label{proposition:approx_delta_approx1}
Suppose Assumptions \ref{assumption:regularity}, \ref{assumption:no_reserve_price} , and \ref{assumption:oracle} hold and that $\delta < 1$. Then, for a fixed constant $c < \delta^2$, there exists an allocation $\bx''$ with the following properties: 
\begin{enumerate}
    \item $\bx''$ satisfies the fairness constraint for group $i$ at a level of at least $c \cdot \alpha_i$.
    \item $\bx''$ guarantees that the seller's total utility is at least $c$ times that of that of the optimal allocation.
    \item $\bx''$ can be computed by calling the oracle $\text{Poly}\left( \frac{1}{1 - \delta} \right)$ times where the polynomial's degree and coefficients depend on $c$.
\end{enumerate}
\end{proposition}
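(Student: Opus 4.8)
The plan is to reduce the infinite-horizon-like difficulty (the fact that $(R_1^t,R_2^t)$ can branch into $2^T$ values) to a bounded-horizon computation, but this time choosing the truncation horizon $T_0$ as a function of $1/(1-\delta)$ rather than of $\varepsilon$, and settling for a multiplicative (constant-factor) guarantee instead of an additive one. Concretely, I would fix $T_0 = \Theta\!\big(\log(1/c)/(1-\delta)\big)$ — more precisely large enough that $\delta^{T_0}$ is bounded above by a suitable function of $c$ and $\delta$ — and consider the auxiliary mechanism that runs the exact optimal fair dynamic auction of Theorems \ref{proposition:dynamic_one_group}–\ref{theorem:dynamic_both_feasible} for the first $T_0$ rounds under a rescaled fairness target, and then runs the unconstrained (standard second-price) auction for the remaining $T-T_0$ rounds. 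Since over the first $T_0$ rounds the residual minimum allocations take at most $2^{T_0} = \text{Poly}(1/(1-\delta))$ distinct values (with degree and coefficients controlled by $c$), computing all the interim functions $\mu^t,\nu_1^t,\nu_2^t$ for $t\le T_0$ via the stated recursions costs $\text{Poly}(1/(1-\delta))$ oracle calls, which gives item 3.

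For item 1, I would track how much discounted allocation mass group $i$ can be guaranteed in the first $T_0$ rounds alone: the tail $\sum_{\tau > T_0}\delta^{\tau-1}$ contributes a $\delta^{T_0}$-fraction of the normalizing factor $(1-\delta^T)/(1-\delta)$, so if we aim for fairness level $\alpha_i$ over the full horizon but only allocate "fairly" in the first block, the shortfall is at most an $O(\delta^{T_0})$ fraction. Choosing $T_0$ so that $\delta^{T_0} \le 1 - c/\delta^2$ (feasible since $c<\delta^2$), the residual target becomes at least $c\,\alpha_i$; here I must be a bit careful that the standard second-price auction in the tail rounds still contributes nonnegative expected allocation to each group, which it does generically, and that the first-block subproblem with the reduced target is itself feasible — feasibility follows because $c\alpha_1 + c\alpha_2 \le \alpha_1+\alpha_2 \le 1$ and $T_0\ge 1$. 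For item 2, the argument is that the optimal global mechanism's seller utility decomposes across rounds with discount weights; the first $T_0$ rounds carry a $(1-\delta^{T_0})/(1-\delta^T) \ge$ (constant depending on $c$) fraction of the total weight, and on those rounds our mechanism is optimal for a *relaxed* constraint, hence weakly better than the global optimum restricted there; meanwhile the unconstrained second-price tail earns at least as much as any constrained mechanism would on the tail (since dropping the fairness constraint only increases revenue). Summing, our mechanism's revenue is at least $c$ times the global optimum, provided $c$ is chosen below the constant coming from $(1-\delta^{T_0})/(1-\delta^T)$ combined with the $\delta^{T_0}$ slack — which is exactly the role of the hypothesis $c<\delta^2$ and the freedom to let the polynomial degree grow with $c$.

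The main obstacle I anticipate is making the two requirements — the fairness shortfall bound (item 1) and the revenue lower bound (item 2) — compatible under a \emph{single} choice of $T_0$ while keeping $2^{T_0}$ polynomial in $1/(1-\delta)$. The fairness side wants $T_0$ large (so $\delta^{T_0}$ is small), the complexity side wants $T_0$ small; since $\delta = 1-\delta'$ with $\delta'\approx 0$ we have $\delta^{T_0}\approx e^{-\delta' T_0}$, so $T_0 = \Theta(1/\delta') = \Theta(1/(1-\delta))$ gives $\delta^{T_0}$ bounded by a constant and $2^{T_0}$ polynomial in $1/(1-\delta)$ — the exponent of that polynomial being $\Theta(\log(1/c))$, which is why the statement says the degree depends on $c$. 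The second delicate point is that the relaxed first-block problem must be shown to have a well-defined optimum whose interim functions are finite (not $-\infty$) along every reachable $(R_1^t,R_2^t)$; I would handle this by verifying that with target $c\alpha_i$ and $T_0$ rounds the residual constraints stay within $[0,1]$ throughout, using Fact \ref{fact:resid_min_alloc_update} and the bound $c\alpha_1+c\alpha_2\le 1$, so that regime (I) never triggers. The remaining steps — plugging into the recursions, bounding discounted sums of geometric series, and invoking the static results of Theorem \ref{thrm:fair_allocation_static} and Corollary \ref{corollary:last_round} at the truncation boundary — are routine.
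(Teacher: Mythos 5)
There is a genuine gap, and it is exactly at the step you flag as the ``main obstacle'': your resolution of the tension between the fairness side (wanting $T_0$ large) and the complexity side (wanting $T_0$ small) is arithmetically wrong. With exact discounting, the pair $(R_1^t,R_2^t)$ takes $2^t$ distinct values after $t$ rounds (each binary allocation history gives a distinct discounted sum when $\delta<1$), so running the exact recursion over the first $T_0$ rounds costs $\Theta(2^{T_0})$ oracle calls. Your choice $T_0=\Theta\big(\log(1/c)/(1-\delta)\big)$, which is indeed forced if you want $\delta^{T_0}$ bounded by a constant below one, gives $2^{T_0}=\exp\big(\Theta(\log(1/c))/(1-\delta)\big)$, which is \emph{exponential} in $1/(1-\delta)$, not polynomial: a polynomial bound would be $\exp\big(d\log\frac{1}{1-\delta}\big)$, and $\frac{1}{1-\delta}$ is not $O\big(\log\frac{1}{1-\delta}\big)$. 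The only way to make $2^{T_0}$ polynomial is to take $T_0=\Theta\big(\log\frac{1}{1-\delta}\big)$, but then $\delta^{T_0}\to 1$ as $\delta\to 1$, the truncated block carries a vanishing fraction of the discounted weight, and your fairness and revenue guarantees in items 1--2 collapse. So as written, item 3 is not established, and no choice of $T_0$ rescues the plan of ``exact recursion on a truncated horizon, second-price auction afterwards.''

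The missing idea, which is how the paper proves the result, is a second layer of approximation inside the truncated horizon: keep $T_0=\log(\varepsilon)/\log(\delta)$ as in \cref{proposition:approx_delta_leq1}, but partition the first $T_0$ rounds into $T_0/\ell$ buckets of length $\ell$ with $\delta^{\ell}\simeq 1-\beta$, and pretend the discount factor is constant within each bucket. With a constant discount factor, only allocation \emph{counts} matter (the mechanism behind Fact \ref{fact:delta_equal_1_complexity}), so the per-bucket state space is polynomial in $\ell=\Theta\big(\beta/(1-\delta)\big)$ while the number of buckets, $\log(1/\varepsilon)/\log\frac{1}{1-\beta}$, is a constant depending only on $(\varepsilon,\beta)$, i.e., on $c$. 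This yields roughly $\big(\frac{1}{1-\delta}\big)^{\log(1/\varepsilon)/\beta}$ oracle calls --- genuinely $\mathrm{Poly}\big(\frac{1}{1-\delta}\big)$ with degree depending on $c$ --- at the price of an extra $(1-\beta)$ (respectively $(1-\beta)^2$) multiplicative loss in the revenue (respectively fairness) guarantee, since the true discounted weights and the bucketed weights differ by at most a $(1-\beta)$ factor. Your fairness/revenue accounting for the truncation itself is essentially the paper's early-stopping argument and is fine in spirit, but without the bucketed (``discontinuous discounting'') recursion the proposition's computational claim does not go through.
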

%%%%%%%%%%%%%%%%%%%%%%%%%%
This $\bx''$ modifies the approximation $\bx'$ presented in \cref{proposition:approx_delta_leq1} by introducing a discontinuous discounting technique. In $\bx''$, we partition the first $T_0$ rounds into buckets that use a common approximate discount factor such that the computational complexity of each bucket may be controlled in the same manner as  Fact \ref{fact:delta_equal_1_complexity}. A complete proof statement that relates the early stopping approximation with the discounting approximation that can be tuned to achieve an overall approximation level $c$ can be found in \cref{proof:approximation2}.
%%%%%%%%%%%%%%%%%%%%%%%%%%%
%%%%%%%%%%%%%%%%%%%%%%%%%%%
%%%%%%%%%%%%%%%%%%%%%%%%%%%
%%%%%%%%%%%%%%%%%%%%%%%%%%%
\section{Extension to Multiple Groups}%

We have primarily been working in a setting with two groups of buyers. In this section, we demonstrate how our insight regarding achieving fairness through subsidizing in the space of virtual values extends to the general case with $L$ groups of buyers. In particular, the following result generalizes our findings from the static case, as presented in \cref{thrm:fair_allocation_static}, to the multi-group setting. We will maintain all previously established notations, with the difference that the group index $i$ now belongs to the set $[L]$ rather than $[2]$. 
%%%%%%%%%%%%%%%%%%%%%%%%%%%
\begin{theorem}\label{theorem:multi_group_static}
Suppose that Assumption \ref{assumption:regularity} holds. For any $i \in [L],$ let $v_i \coloneqq \max_{k} v_{i,k}$ be the maximum value among buyers in group $i$. Then, the following results hold for the optimal allocation:
\begin{enumerate}[label=(\roman*)]
    \item If the item is allocated to group $i$, then it is allocated to the buyer in group $i$ with the highest value, i.e., if $x_{i,k} = 1$, then $v_{i,k} = v_i$.
    %%%%%%%%
    \item The allocation decision at the group level depends only on the maximum value of each of the $L$ groups, $\{ v_i\}_{i=1}^L$. For any $i$, let $G_i$ denote the $L$-tuples $(v_1, \hdots, v_L)$ for which the item is allocated to group $i$. Then,
    there exist nonnegative values $\eta_1, \hdots, \eta_L$ 
    such that (up to a measure-zero set) for $i \in [L]$ ,
    \begin{equation*}
        \label{eqn:fair_allocation_rule_static}
    \begin{aligned}
    G_i &= \left \{ (v_1, \hdots, v_L) \mid \phi_i(v_i) + \eta_i \geq 0 \text{ and }  \phi_i(v_i) + \eta_i \geq \phi_j(v_j) + \eta_j \text{ for } i \neq j  \right \}.
    \end{aligned}
    \end{equation*} 
    \end{enumerate}
\end{theorem}
Notice that the insight behind this result is very similar to that of \cref{thrm:fair_allocation_static} for two groups. Here, $\eta_i$ represents the subsidy to group $i$, and $\min_i \eta_i$ can be seen as the overall subsidy to society (similar to the $\gamma$ term in \cref{thrm:fair_allocation_static}).
%%%%%%%%%%%%%%%%%%%%%%%%%%%

For the dynamic case, the optimal fair mechanism once again is attained through backward induction. Observe that the result of \cref{corollary:last_round} generalizes to the $L$ group case with the modifications that the expected utilities at round $T$ are a function of the $L$-tuple of residual minimum allocations, $\mathbf{R}^T\coloneqq (R_1^T, \hdots R_L^T),$ and $F_\text{max}^T(v_1,\hdots, v_L) \coloneqq \prod_{i=1}^L F_i(v_i)^n$. The case where the auction is not feasible no matter the seller's allocation and the case where the auction is only feasible if the seller allocates to a particular group match those of the dynamic two group results. Next, we consider the case where the auction remains feasible if the seller allocates to one of multiple groups. 
First, let $\Delta^t_i(j, k)$ denote the loss in a buyer in group $i$'s expected utility in rounds $t+1$ and thereafter when the seller allocates to group $k$ instead of group $j$ in round $t$, and i.e., 
\begin{equation*}
    \begin{aligned}
        \Delta^t_i(j, k) \coloneqq \nu_{i}^{t+1}\left( (R_{j}^t-1)/\delta, R_{-j}^t/\delta \right) - \nu_{i}^{t+1}\left( (R_{k}^t-1)/\delta, R_{-k}^t/\delta \right)
    \end{aligned}
\end{equation*}
Similarly, let $\Delta_0(i, j)$ denote the loss in the seller's expected utility in rounds $t+1$ and thereafter when the seller allocates to group $j$ instead of group $i$ in round $t$, i.e., 
\begin{equation*}
    \begin{aligned}
        \Delta^t_0(i, j) \coloneqq \mu^{t+1}\left( (R_{i}^t-1)/\delta, R_{-i}^t/\delta \right) - \mu^{t+1}\left( (R_{j}^t-1)/\delta, R_{-j}^t/\delta \right)
    \end{aligned}
\end{equation*}
Now, the following result characterizes the allocation rule of the optimal mechanism in the case where the auction remains feasible if the seller allocates to any group. Note that this result can be easily modified for the case where it is feasible to allocate to any subset of at least two groups. 

%%%%%%%%%%%%%%%%%%%%%%%%%%%
\begin{theorem}\label{theorem:dynamic_multi}
Suppose Assumptions \ref{assumption:regularity} and \ref{assumption:no_reserve_price} hold. For a given $L$-tuple of residual minimum allocations, $\mathbf{R}^T\coloneqq (R_1^T, \hdots R_L^T)$, suppose suppose that, for all $i \in [L],$ $\mu^{t+1} \left(\left(R_{i}^t-1\right)/\delta, R_{-i}^t/\delta\right)$ is finite where $\left(\left(R_{i}^t-1\right)/\delta, R_{-i}^t/\delta\right)$ denotes the $L$-tuple updated according to Fact \ref{fact:resid_min_alloc_update}. Then, 
the seller allocates the item to the buyer with the highest value in group $i$ if $(v_1^t, \hdots, v_L^t) \in G_i^t$, where $v_j^t := \max_{k} v_{j,k}^t$ is the maximum value among buyers in group $j$ and $G_i^t$ is given by 
\begin{align}\label{eqn:G_i_t_multi}
G_i^t := \Big \{ (v_1^t, \hdots, v_L^t) ~\Big \vert~   \phi_i^t(v_i) - \phi_{j}^t(v_{j}) \geq  \delta n \sum_{\ell = 1}^L \Delta_{\ell}^t(j, i) + \delta \Delta_0^{t}(j, i) \: \forall j \in [L], j \neq i \Big \}.
\end{align}
\end{theorem}
%%%%%%%%%%%%%%%%%%%%%%%%%%%
We see that, once again, the optimal fair allocation among multiple groups involves subsidizing otherwise under-allocated groups. A full proof of this result is given in the appendix. 
%%%%%%%%%%%%%%%%%%%%%%%%%%%
%%%%%%%%%%%%%%%%%%%%%%%%%%%
%%%%%%%%%%%%%%%%%%%%%%%%%%%
%%%%%%%%%%%%%%%%%%%%%%%%%%%
\section{A Numerical Experiment}
Here, we present the results of a numerical experiment assessing the impact on utilities of varying the fairness constraints of each group. In particular, we implement the case where $\delta = 0.99$, $n = 1$, $v_1^t \sim \text{Uniform}(0.5, 1.5)$, and $v_2^t \sim \text {Uniform}(0, 1)$ for $t \in [2]$ for $T=2, 3, 4$. 

For each value of $T$, we consider combinations of fairness constraints on a discretized grid where $\alpha_1$ and $\alpha_2$ range over $(0, 0.5)$ with increments of $0.1$. For each pair $\alpha_1, \alpha_2$, we calculate the mean difference in utility between the optimal fair allocation at level $\alpha_1, \alpha_2$ and the unconstrained optimal allocation satisfying Assumption \ref{assumption:no_reserve_price} (i.e., $\alpha_1, \alpha_2 = 0$), for the seller and buyers over 10,000 iterations of the mechanism. Note that, for these distributions, when $T=2$ the average unconstrained allocation probabilities are $0.69$ and $0.31$ for groups one and two, respectively. The results for $T=2$ are reported in Figures \ref{fig:experiment_T2}-\ref{fig:experiment_T4}.

We see that the seller utility is decreasing in $\alpha_1, \alpha_2$ but only after the point at which the constraints bind. For group one, we see that, for a fixed $\alpha_1$, utility is decreasing in $\alpha_2$, since, to satisfy the fairness constraint, we re-allocate higher value regions to group two and allocate lower-value regions that would otherwise go unallocated to group one. In contrast, for group two, we see that, for a fixed $\alpha_2$, utility is relatively constant in $\alpha_1$ since the optimal allocation tends to reallocate to group one from the lower-value no-allocation region. 

\begin{figure}[t]
\centering
\begin{subfigure}{.33\textwidth}
  \centering
  \includegraphics[width=\linewidth]{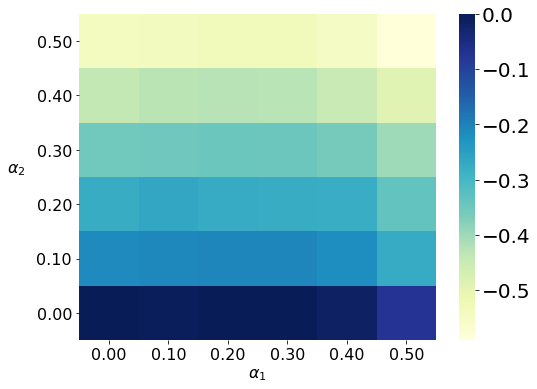}
  \caption{Seller utility}
  \label{fig:experiment_seller}
\end{subfigure}%
\begin{subfigure}{.33\textwidth}
  \centering
  \includegraphics[width=\linewidth]{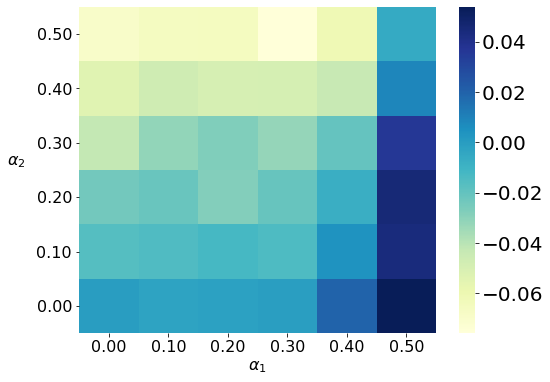}
  \caption{Group One utility}
  \label{fig:experiment_group1}
\end{subfigure}
\begin{subfigure}{.33\textwidth}
  \centering
  \includegraphics[width=\linewidth]{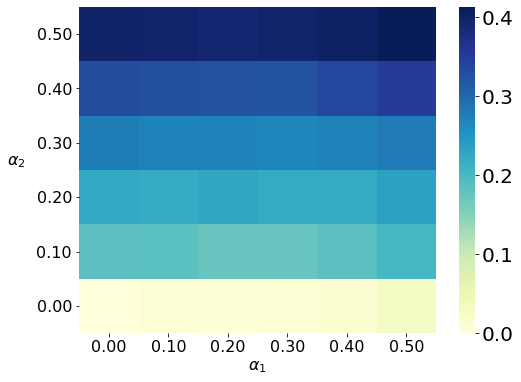}
  \caption{Group Two utility}
  \label{fig:experiment_group2}
\end{subfigure}
\caption{ Difference in utility relative to unconstrained optimal allocation for $T=2$}
\label{fig:experiment_T2}
\end{figure}
%%%%%%%%%%%%%%%%%%%%%%%%%%%
\begin{figure}[t]
\centering
\begin{subfigure}{.33\textwidth}
  \centering
  \includegraphics[width=\linewidth]{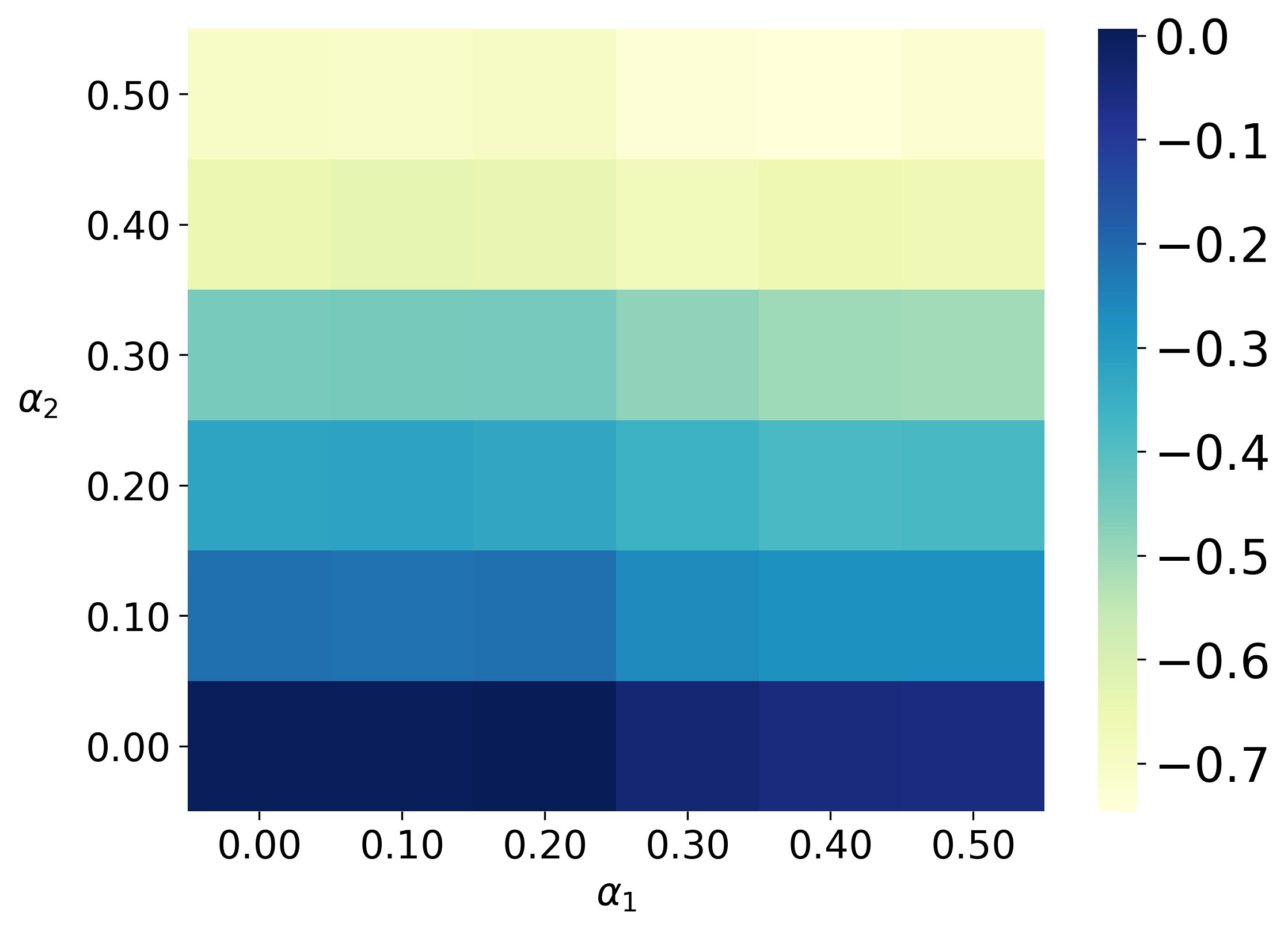}
  \caption{Seller utility}
  \label{fig:experiment_seller}
\end{subfigure}%
\begin{subfigure}{.33\textwidth}
  \centering
  \includegraphics[width=\linewidth]{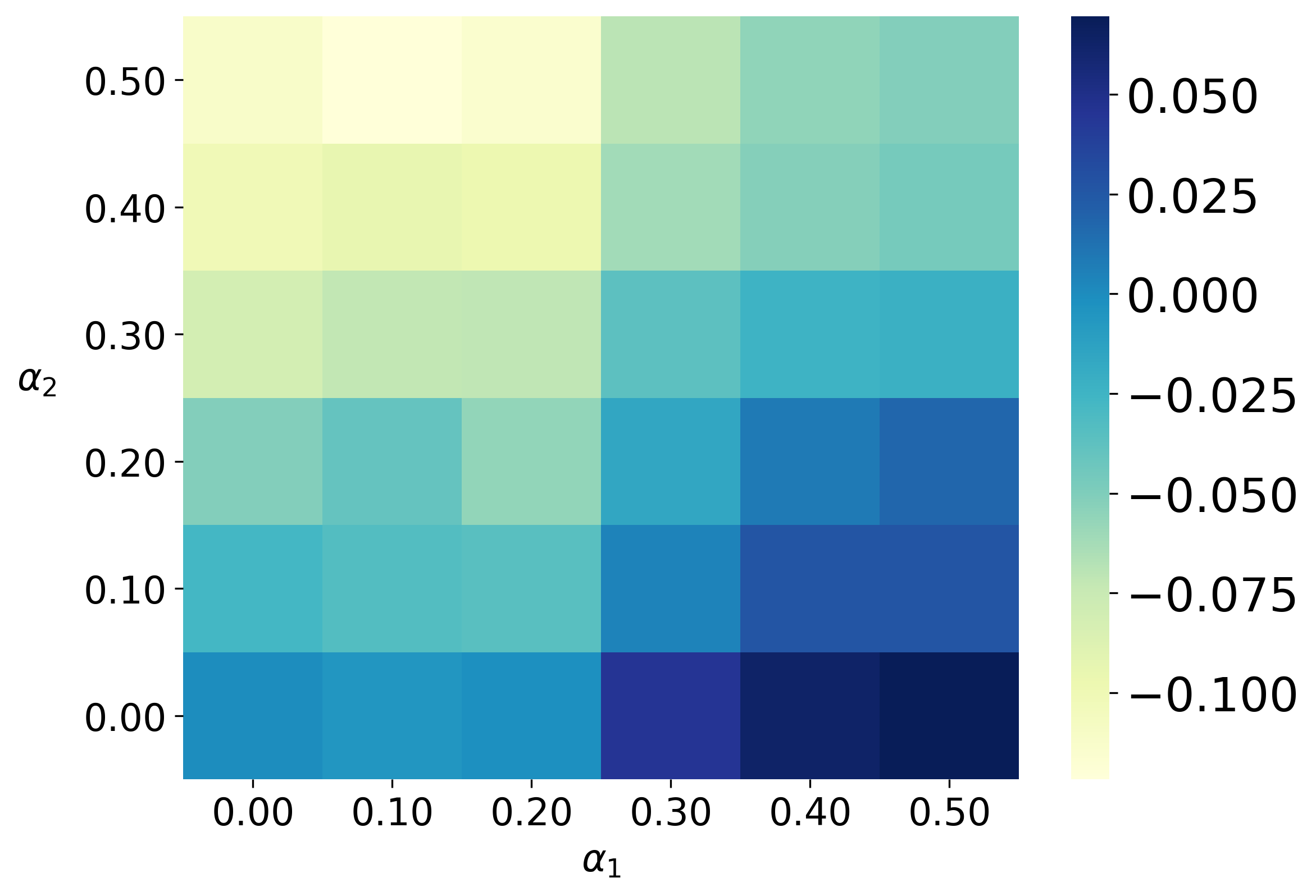}
  \caption{Group One utility}
  \label{fig:experiment_group1}
\end{subfigure}
\begin{subfigure}{.33\textwidth}
  \centering
  \includegraphics[width=\linewidth]{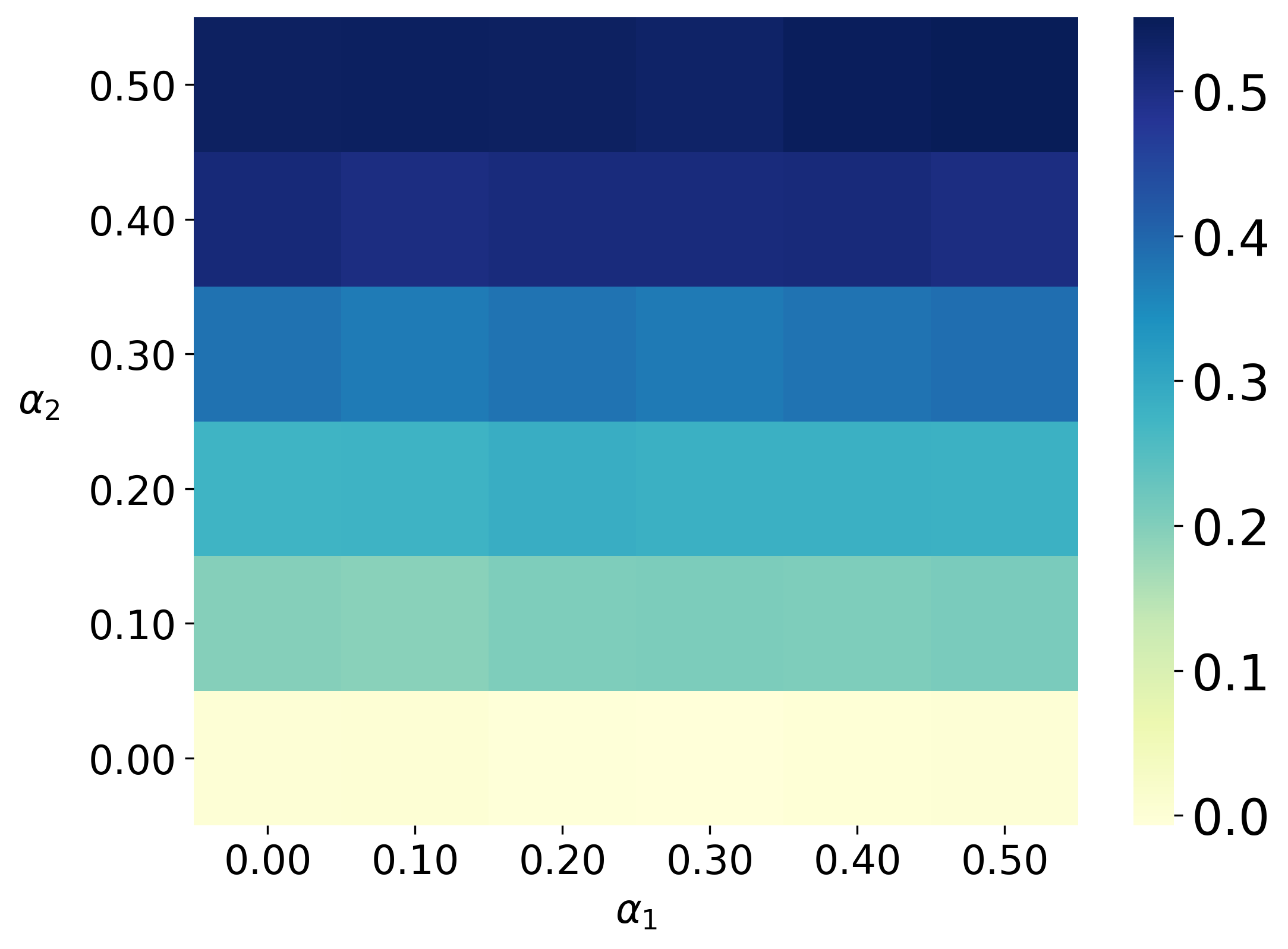}
  \caption{Group Two utility}
  \label{fig:experiment_group2}
\end{subfigure}
\caption{ Difference in utility relative to unconstrained optimal allocation for $T=3$}
\label{fig:experiment_T3}
\end{figure}
%%%%%%%%%%%%%%%%%%%%%%%%%%%
\begin{figure}[t]
\centering
\begin{subfigure}{.33\textwidth}
  \centering
  \includegraphics[width=\linewidth]{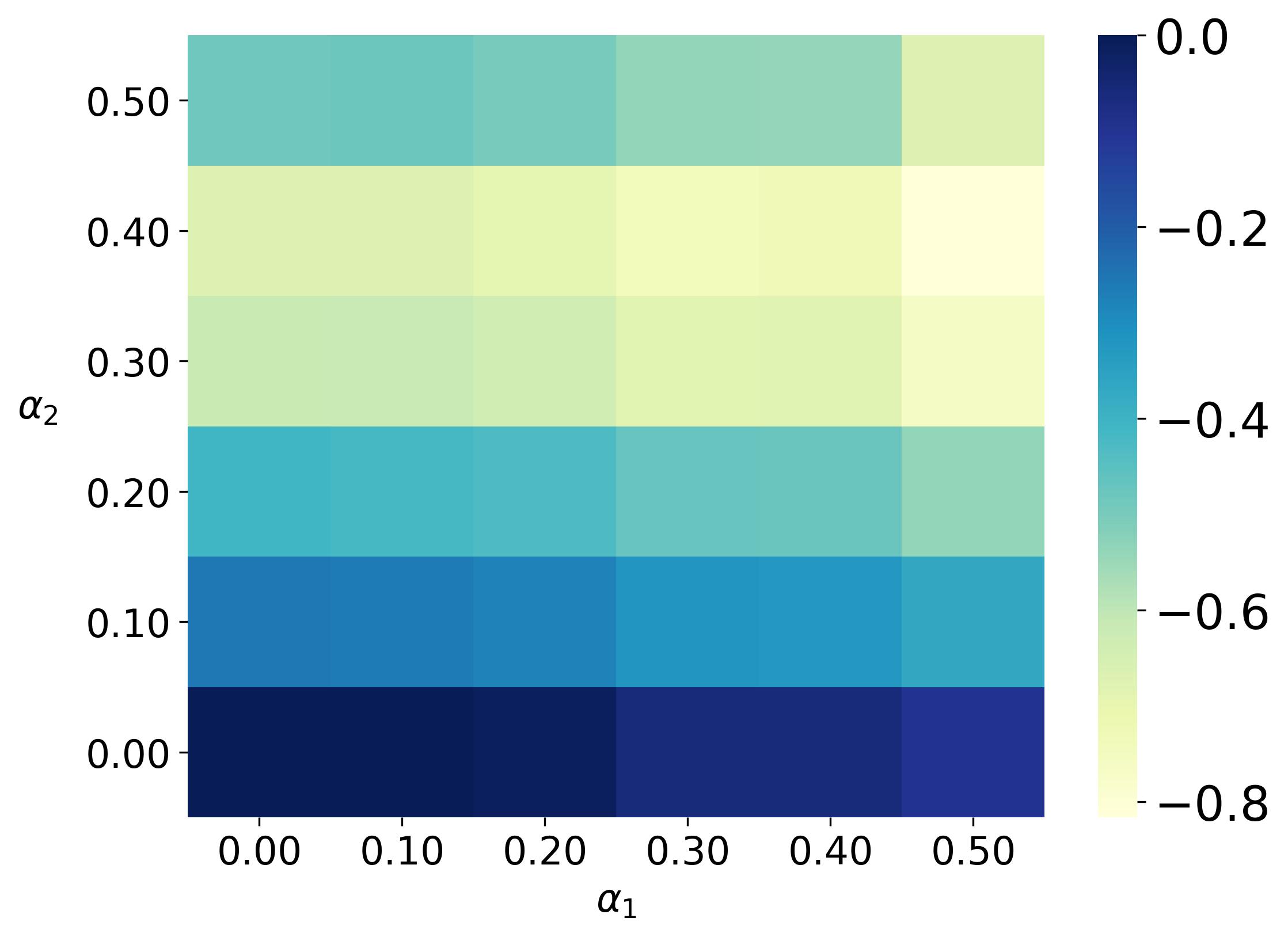}
  \caption{Seller utility}
  \label{fig:experiment_seller}
\end{subfigure}%
\begin{subfigure}{.33\textwidth}
  \centering
  \includegraphics[width=\linewidth]{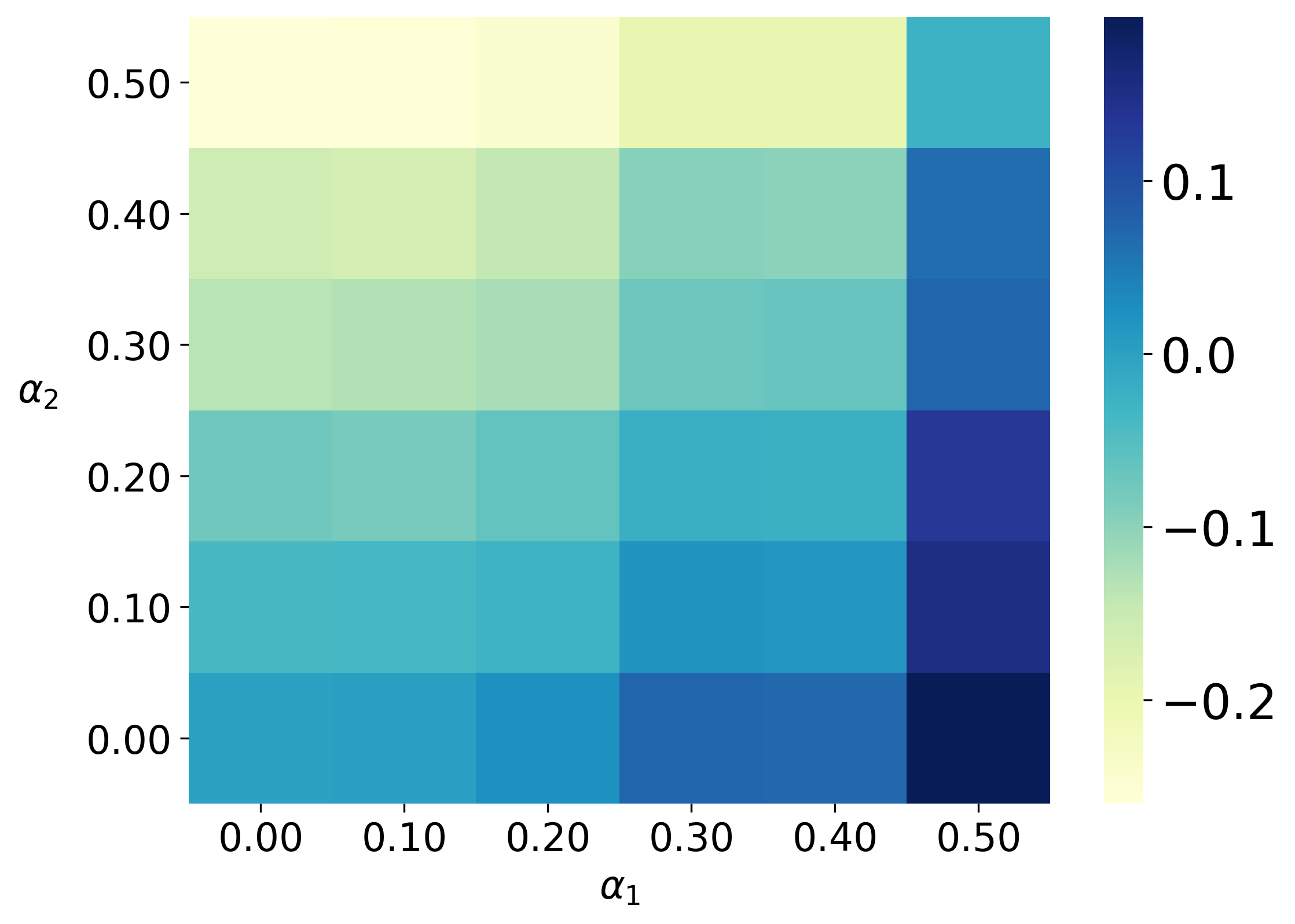}
  \caption{Group One utility}
  \label{fig:experiment_group1}
\end{subfigure}
\begin{subfigure}{.33\textwidth}
  \centering
  \includegraphics[width=\linewidth]{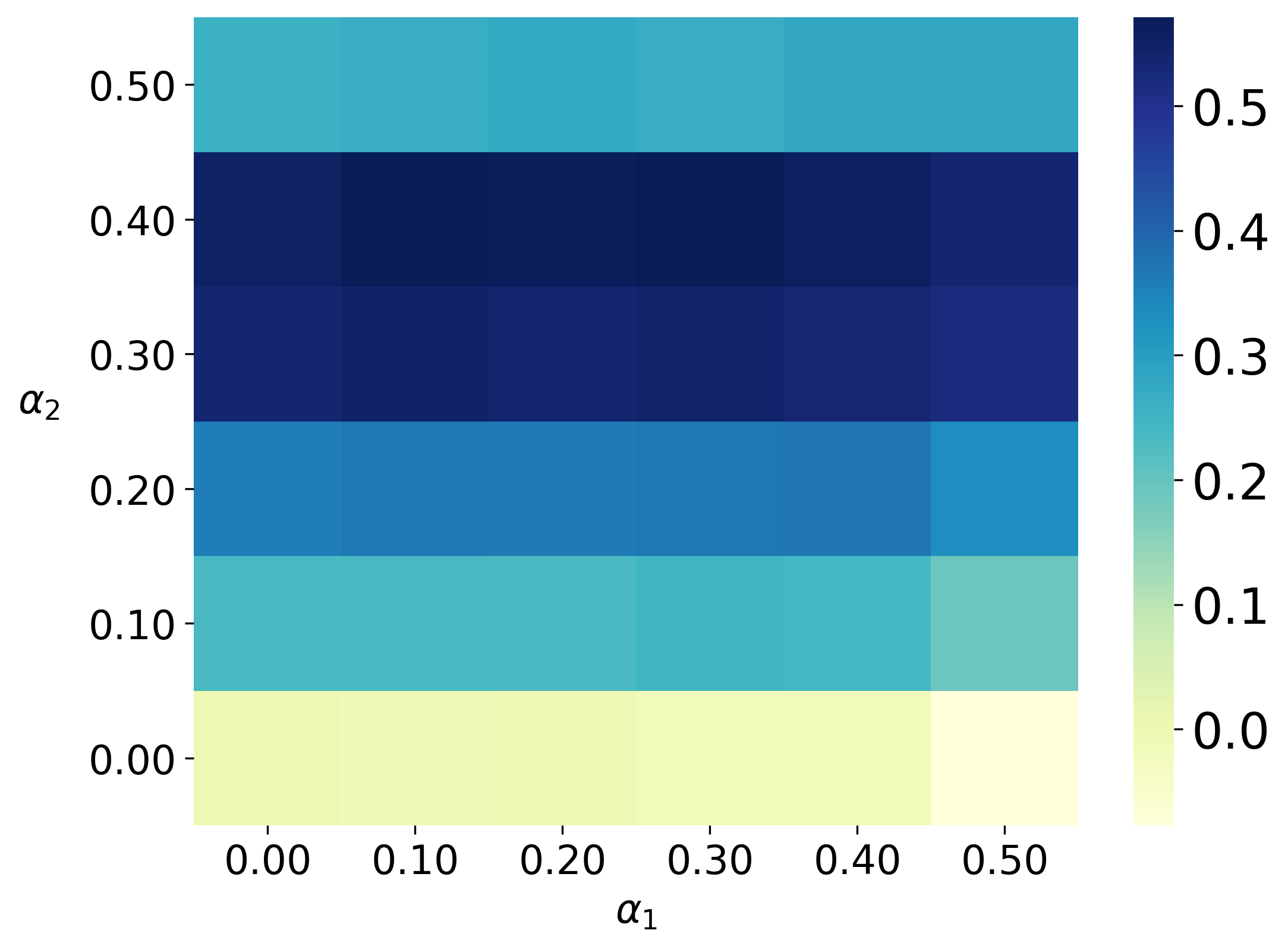}
  \caption{Group Two utility}
  \label{fig:experiment_group2}
\end{subfigure}
\caption{ Difference in utility relative to unconstrained optimal allocation for $T=4$}
\label{fig:experiment_T4}
\end{figure}

%%%%%%%%%%%%%%%%%%%%%%%%%%
%%%%%%%%%%%%%%%%%%%%%%%%%%
%%%%%%%%%%%%%%%%%%%%%%%%%%
\section{Conclusion and Future Directions}
Our paper has explored the integration of group fairness into a dynamic auction design setting. We characterized the revenue-maximizing allocation rule, showing it involves subsidization in favor of groups that would otherwise not be allocated the item enough. We further established that the payment function rewards buyers for participation when their group wins the item, while charging them an entry fee regardless of the allocation and that, as the number of buyers grows, the entry fee covers the participation bonus such that the seller does not pay out of pocket. Moreover, to address the computational complexities of the dynamic setting, we proposed an approximation scheme capable of achieving near-optimal fairness efficiently. 
An important direction for extensions of our work is studying fair dynamic allocations when buyer values are interdependent between buyers, or between rounds. Another area of interest for future work is to consider the \emph{price of fairness}, or the impact to buyer and seller utilities under the fairness constraint.

%%%%%%%%%%%%%%%%%%%%%%%%%%%%%%%%%%%%%%%%
%%%%%%%%%%%%%%%%%%%%%%%%%%%%%%%%%%%%%%%%
\section{Acknowledgement}
The authors thank Scott Kominers and Rakesh Vohra for insightful discussion and comments. Alireza Fallah acknowledges support from the European Research Council Synergy Program, the National Science Foundation under grant number DMS-1928930, and the Alfred P. Sloan Foundation under grant G-2021-16778. The latter two grants correspond to his residency at the Simons Laufer Mathematical Sciences Institute (formerly known as MSRI) in Berkeley, California, during the Fall 2023 semester. Michael Jordan acknowledges support from the Mathematical Data Science program of the Office of Naval Research under grant number N00014-21-1-2840 and the European Research Council (ERC-2022-SYG-OCEAN-101071601). Annie Ulichney's work is supported by the National Science Foundation Graduate Research Fellowship Program under Grant No. DGE 2146752. Any opinions, findings, and conclusions or recommendations expressed in this material are those of the author(s) and do not necessarily reflect the views of the National Science Foundation or the European Research Council.

%%%%%%%%%%%%%%%%%%%%%%%%%%%%%%%%%%%%%%%%
%%%%%%%%%%%%%%%%%%%%%%%%%%%%%%%%%%%%%%%%
\bibliography{References}
%%%%%%%%%%%%%%%%%%%%%%%%%%
%%%%%%%%%%%%%%%%%%%%%%%%%%
%%%%%%%%%%%%%%%%%%%%%%%%%%
\newpage
\appendix

\section{Extended Related Work}\label{sec:related_work}

\paragraph{Group Fairness} Where much of the classical fair division work focuses on individual fairness, our work builds upon those that extend fair allocation methods to the setting of group fairness for pre-existing groups. {Note that our approach to group-wise fairness aligns with that of group fairness in machine learning which enforces fairness across pre-existing groups. This is in contrast to the approach of a related literature on group-wise fair discrete division, which enforces fairness across all possible groups, as in, e.g., \citet{conitzer2019group} and \citet{aziz2019almost}} Related works addressing group allocative fairness consider notions of group fairness that are fundamentally different from our proportional sense, including group envy-freeness \citep[]{manurangsi2017asymptotic, kyropoulou2020almost, benabbou2019fairness} and democratic group fairness \citep{segal2019democratic}.

\paragraph{Fairness and Machine Learning} More broadly, our work joins a burgeoning literature on fair mechanism design with an algorithmic focus which draws from notions of fairness studied from computer science and machine learning \citep{finocchiaro2021bridging}. The work of \cite{kuo2020proportionnet} applies deep learning techniques to enforce fairness in auctions efficiently, and \cite{deng2022fairness} consider fairness amidst algorithmic bidding. Several works in this segment of the literature address the concern of fairness in online advertising auctions ~\citep[see, e.g.,][]{chawla2020individual, celis2019toward, ilvento2020multi, birmpas2021fair, hadiji2020diversity}.

\paragraph{Fair Matching, Ranking, and Assortment} The setting that motivates the consideration of fair resource allocation is similar to the setting addressed by works on fairness in matching markets, fair ranking, and fair assortment planning \citep{ma2023fairness1,chen2022fair, lu2023simple}. 

\paragraph{Constrained Buyers} Another group of related works studies the impact of budgetary and liquidity constraints on auction revenue~\citep[see, e.g.,][]{malakhov2008optimal, dobzinski2012multi}. Their motivation for considering budget constraints aligns with the setting where fair allocation guarantees are relevant. The setting of asymmetric budgetary constraints studied by \citet{pai2014optimal} motivates the relevance of our work's introduction of guarantees for relative allocative fairness between groups. We build upon this set of work by introducing a framework where the allocative limitations of both homogeneous and heterogeneous budget constraints can be simultaneously overcome as efficiently as possible.

\section{Proofs}\label{sec:proofs} 
In this appendix, we present proofs that are omitted from the body of the paper. 

\subsection{Proof of Theorem \ref{thrm:fair_allocation_static}}\label{proof:static_alloc}
Let $\mathcal{S}$ be the set of $(\phi_{i}(v_{i,k}))_{i,k}$ for which the seller allocates the item to one of the groups in the optimal fair allocation $\tilde{x}$, i.e.,
\begin{equation}
\mathcal{S} = \left \{ (\phi_{i}(v_{i,k}))_{i,k} \Big \vert
\exists i,k \text{ s.t. } \tilde{x}_{i,k}(\bv) = 1
\right \}.
\end{equation}
Recall that $v_i = \max_{k} v_{i,k}$.
We first establish that, given $\mathcal{S}$, the boundary of the optimal allocation is in the form of $\phi_1(v_1) = \phi_2(v_2) + \gamma$ for some $\gamma$. Note that the set $\mathcal{S}$ is measurable as the allocation function is measurable.  

Notice that the probability of $\mathcal{S}$ is at least $\alpha_1 + \alpha_2$. Since $f_1(\cdot)$ and $f_2(\cdot)$ are continuous, we can find an allocation rule that is confined to $\mathcal{S}$, satisfies the fairness constraint, and its boundary takes the form $\phi_1(v_1) = \phi_2(v_2) + \gamma$ for some $\gamma$ (we call such allocations as \textit{affine allocations}). That is, group one receives the object if and only if $\phi_1(v_1) \geq \phi_2(v_2) + \gamma$ and $(\phi_1(v_1), \phi_2(v_2)) \in \mathcal{S}$; similarly, group two receives the object if and only if $\phi_1(v_1) < \phi_2(v_2) + \gamma$ and $(\phi_1(v_1), \phi_2(v_2)) \in \mathcal{S}$. We further assume this allocation allocates the item to the buyer with highest value within each group. Denote this allocation by $x^*$. If there are multiple allocation rules of this form, we select the one whose corresponding $\gamma$ has the smallest absolute value. One can verify by inspection that such allocation is monotone, and hence, satisfies the EPIC condition with its corresponding payment identity. Without loss of generality, we may assume $\gamma > 0$, as the case for $\gamma < 0$ can be argued similarly.

If we had to allocate every pair in $\mathcal{S}$ but there were no fairness constraints, the optimal allocation would have been by checking to the buyer with the highest value, which from groups' point of view would mean the boundary $\phi_1(v_1) \lesseqgtr \phi_2(v_2)$. 
Let $A^*$ be the region that $x^*$ allocates differently from this unconstrained allocation over $\mathcal{S}$, i.e., 
\begin{equation} \label{eqn:A^star}
A^* = \left \{(\phi_{i}(v_{i,k}))_{i,k} \in \mathcal{S} \Big \vert \phi_1(v_1) \geq \phi_2(v_2) \geq \phi_1(v_1) - \gamma \right \}.
\end{equation}
We define $\tilde{A}$ similarly for the allocation $\tilde{x}$ as the set of values for which we allocate the item suboptimally. Also, for any $j \in [2]$, let $Q_j \subset \mathcal{S}$ be the region that is allocated to group $j$ in the optimal unconstrained allocation, i.e., 
\begin{equation}
Q_j = \left \{(\phi_{i}(v_{i,k}))_{i,k} \in \mathcal{S} \Big \vert \phi_j(v_j) \geq \phi_{-j}(v_{-j}) \right \}.    
\end{equation}
In particular, note that $A^* \subseteq Q_1$. We next make the following claim.
\begin{claim} \label{claim:static_proof_1}
The probability of $\tilde{A} \cap Q_1$ is lower bounded by the probability of $A^*$, i.e.,
\begin{equation}\label{eqn:claim_1_1}
\mathbb{P}_{\bv} (\tilde{A} \cap Q_1) \geq \mathbb{P}_{\bv} (A^*).    
\end{equation}
Moreover, equality can only occur if $\tilde{A} \subseteq Q_1$.
\end{claim}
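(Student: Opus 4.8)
The plan is to play $x^*$ and $\tilde x$ off against one another through their group-$2$ allocation probabilities. The key point is that $x^*$ is the \emph{minimal} affine reallocation confined to $\mathcal{S}$, so under $x^*$ group~$2$'s constraint binds, whereas $\tilde x$, being merely feasible, must also hand group~$2$ a probability of at least $\alpha_2$ --- but the only places in $\mathcal{S}$ where $\tilde x$ can collect group-$2$ mass are $Q_2$ (contributing at most $\mathbb{P}_{\bv}(Q_2)$) and the part of $Q_1$ on which it deviates from the unconstrained rule, namely $\tilde A\cap Q_1$. First I would establish $\mathbb{P}_{\bv}(A^*)=\alpha_2-\mathbb{P}_{\bv}(Q_2)$. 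Since $\mathbb{P}_{\bv}(Q_1)+\mathbb{P}_{\bv}(Q_2)=\mathbb{P}_{\bv}(\mathcal{S})\ge\alpha_1+\alpha_2$, the two group constraints cannot both fail at $\gamma=0$ (the ``allocate to the highest virtual value in $\mathcal{S}$'' rule); because the chosen $\gamma$ is strictly positive, it must be group~$2$ that is short at $\gamma=0$. Increasing $\gamma$ from $0$ monotonically and continuously (using continuity of $f_1,f_2$) transfers the strip $A^*$ from group~$1$ to group~$2$ while keeping group~$1$ feasible throughout $[0,\gamma]$, so minimality of $\gamma$ forces group~$2$'s constraint to bind exactly: $\mathbb{P}_{\bv}(Q_2)+\mathbb{P}_{\bv}(A^*)=\alpha_2$.

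Next I would decompose $\mathcal{S}=Q_1\sqcup Q_2$ (up to a null set) and bound the event $\{\tilde x\text{ allocates to group }2\}$. On $Q_1$ the unconstrained-within-$\mathcal{S}$ optimum allocates to group~$1$, so $\{\tilde x\to 2\}\cap Q_1\subseteq \tilde A\cap Q_1$; on $Q_2$ one has the trivial bound $\mathbb{P}_{\bv}(\{\tilde x\to 2\}\cap Q_2)\le\mathbb{P}_{\bv}(Q_2)$. Feasibility of $\tilde x$ gives $\mathbb{P}_{\bv}(\{\tilde x\to 2\})\ge\alpha_2$, hence
\[
\alpha_2\ \le\ \mathbb{P}_{\bv}(\tilde A\cap Q_1)+\mathbb{P}_{\bv}(Q_2),
\]
and combining with the previous paragraph yields $\mathbb{P}_{\bv}(\tilde A\cap Q_1)\ge\alpha_2-\mathbb{P}_{\bv}(Q_2)=\mathbb{P}_{\bv}(A^*)$, which is \eqref{eqn:claim_1_1}.

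For the equality claim, equality throughout the chain forces $\mathbb{P}_{\bv}(\{\tilde x\to 2\}\cap Q_2)=\mathbb{P}_{\bv}(Q_2)$, i.e.\ $\tilde x$ allocates to group~$2$ for almost every $(v_1,v_2)\in Q_2$; since on $Q_2$ allocating to the highest-value buyer of group~$2$ is precisely the unconstrained optimum (invoking part~(i), that the winning group's item goes to its top bidder), this means $\tilde A\cap Q_2$ is null, i.e.\ $\tilde A\subseteq Q_1$ up to a measure-zero set. The main obstacle is the first step --- turning ``smallest $|\gamma|$'' into ``group~$2$'s constraint binds under $x^*$'' --- which needs the monotonicity and continuity of the group allocation probabilities as functions of $\gamma$ together with the observation that group~$1$ is never the binding-infeasible group on $[0,\gamma]$; the rest is bookkeeping of null sets and consistent use of part~(i) so that $\tilde A$ and $A^*$ are exactly the ``wrong-group'' regions.
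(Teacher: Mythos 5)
Your proposal is correct and takes essentially the same approach as the paper: both arguments combine the feasibility of $\tilde{x}$ (which yields $\mathbb{P}_{\bv}(\tilde{A}\cap Q_1)+\mathbb{P}_{\bv}(Q_2)\geq\alpha_2$) with the minimality of $|\gamma|$ (which pins $\mathbb{P}_{\bv}(A^*)+\mathbb{P}_{\bv}(Q_2)$ at $\alpha_2$). The only difference is presentational: the paper invokes minimality via the contrapositive (``otherwise one could lower $\gamma$''), whereas you establish the binding equality directly through monotonicity and continuity in $\gamma$ and spell out the equality case, which the paper leaves implicit.
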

\begin{proof}
To see why this is the case, notice that since $\tilde{x}$ satisfies the fairness constraint, we should have
\begin{equation}
\mathbb{P}_{\bv}(\tilde{A} \cap Q_1) + \mathbb{P}_{\bv}(Q_2) \geq 
\mathbb{P}_{\bv}(\tilde{A} \cap Q_1) + \mathbb{P}_{\bv}(Q_2 \backslash \tilde{A}) \geq 
\alpha_2.    
\end{equation}
Now, if \eqref{eqn:claim_1_1} does not hold, then we would have
\begin{equation}
\mathbb{P}_{\bv} (A^*) + \mathbb{P}_{\bv}(Q_2) > \alpha_2,    
\end{equation}
but then this would mean that we can lower the $\gamma$ in allocation $x^*$ which contradicts its definition.
\end{proof}
Notice that the loss of allocation $x^*$ compared to the optimal unconstrained allocation is given by
\begin{equation} \label{eqn:loss_x_star}
\text{Loss}_{x^*} := \int_{A^*} (\phi_1(\max_{k} v_{1,k}) - \phi_2(\max_{k} v_{2,k}) ) dF(\bv)      
\end{equation}
Let also $\text{Loss}_{\tilde{x}} $ denote the loss of allocation $\tilde{x}$ compared to the optimal unconstrained allocation. This loss is lower bounded by
\begin{equation} \label{eqn:loss_x_tilde}
\text{Loss}_{\tilde{x}} \geq \int_{\tilde{A} \cap Q_1} (\phi_1(\max_{k} v_{1,k}) - \phi_2(\max_{k} v_{2,k}) ) dF(\bv).      
\end{equation}
Notice that, since $\tilde{x}$ is the optimal allocation, the right hand side of \eqref{eqn:loss_x_tilde} should be lower than \eqref{eqn:loss_x_star}: 
\begin{equation} \label{eqn:proof_static_3}
\int_{\tilde{A} \cap Q_1} (\phi_1(\max_{k} v_{1,k}) - \phi_2(\max_{k} v_{2,k}) ) dF(\bv)   \leq 
\int_{A^*} (\phi_1(\max_{k} v_{1,k}) - \phi_2(\max_{k} v_{2,k}) ) dF(\bv).    
\end{equation}
We next claim that this implies that $A^* = \tilde{A} \cap Q_1$.
\begin{claim}
We have $A^* = \tilde{A} \cap Q_1$ (up to a measure-zero set).     
\end{claim}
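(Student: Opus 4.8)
The plan is to split both $A^*$ and $\tilde{A}\cap Q_1$ along their common part and show that the two leftover pieces are null. Write $g := \phi_1(v_1)-\phi_2(v_2)$ with $v_i := \max_k v_{i,k}$, and set $B_1 := \tilde{A}\cap A^*$, $B_2 := \tilde{A}\cap(Q_1\setminus A^*)$, and $C := A^*\setminus\tilde{A}$. Since $A^*\subseteq Q_1$, these give the disjoint decompositions $A^* = B_1\cup C$ and $\tilde{A}\cap Q_1 = B_1\cup B_2$, so the claim reduces to proving $\mathbb{P}_{\bv}(B_2) = \mathbb{P}_{\bv}(C) = 0$. Note also that on $Q_1$ one has $g\geq 0$, that $A^* = \{g\leq\gamma\}\cap Q_1$, and hence $g>\gamma$ on $Q_1\setminus A^*$ while $0\leq g\leq\gamma$ on $A^*$; in particular $g>\gamma$ on $B_2$ and $0\leq g\leq\gamma$ on $C$.

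Next I would cancel the common piece $B_1$ from the two facts already in hand. Subtracting $\mathbb{P}_{\bv}(B_1)$ from Claim \ref{claim:static_proof_1} gives $\mathbb{P}_{\bv}(B_2)\geq\mathbb{P}_{\bv}(C)$, and subtracting $\int_{B_1} g\,dF(\bv)$ from \eqref{eqn:proof_static_3} gives $\int_{B_2} g\,dF(\bv)\leq\int_C g\,dF(\bv)$. Combining these with the two sign bounds above yields the chain
\[
\gamma\,\mathbb{P}_{\bv}(B_2)\;\leq\;\int_{B_2} g\,dF(\bv)\;\leq\;\int_{C} g\,dF(\bv)\;\leq\;\gamma\,\mathbb{P}_{\bv}(C)\;\leq\;\gamma\,\mathbb{P}_{\bv}(B_2),
\]
so every inequality must in fact be an equality. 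In particular $\int_{B_2}(g-\gamma)\,dF(\bv)=0$ with $g-\gamma>0$ throughout $B_2$, which forces $\mathbb{P}_{\bv}(B_2)=0$, and then $\mathbb{P}_{\bv}(C)\leq\mathbb{P}_{\bv}(B_2)=0$. Hence $A^* = B_1 = \tilde{A}\cap Q_1$ up to a measure-zero set. The degenerate case $\gamma=0$ is covered by the same chain, with $g=0$ on $C$ and $g>0$ on $B_2$.

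Most of this is bookkeeping: the partition, the cancellations, the final algebra. The genuine content sits in the two previously proved inequalities together with the observation that $g$, the pointwise cost of reallocating a group-$1$-favored profile away from group $1$, is bounded by $\gamma$ exactly on $A^*$, which is what makes $A^*$ the cheapest subset of $Q_1$ of measure $\mathbb{P}_{\bv}(A^*)$ to reassign, and therefore the only region an optimal $\tilde{x}$ can afford. The one step that deserves care is the measure-theoretic fact that a strictly positive integrand with vanishing integral is supported on a null set (true for any measure, e.g., via $B_2 = \bigcup_n (\{g-\gamma\geq 1/n\}\cap B_2)$), and the verification that the sign conditions on $g$ over $B_2$ and $C$ really do follow from the definitions of $A^*$, $Q_1$, and $\mathcal{S}$.
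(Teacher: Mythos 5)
Your proposal is correct and follows essentially the same route as the paper: you split $A^*$ and $\tilde{A}\cap Q_1$ along their common part, apply the measure inequality of Claim \ref{claim:static_proof_1} and the loss inequality \eqref{eqn:proof_static_3} to the leftover pieces, and exploit that $\phi_1(v_1)-\phi_2(v_2)$ exceeds $\gamma$ outside $A^*$ (within $Q_1$) but is at most $\gamma$ on $A^*$ to force both pieces to be null. Your handling of the final step via a strictly positive integrand with vanishing integral is a slightly more careful version of the paper's $\sup$/$\inf$ comparison, but it is the same argument in substance.
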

\begin{proof}
To simplify the notation, let $L(v_1, v_2) = \phi_1(v_1) - \phi_2(v_2)$. 
Suppose these two sets are not equal. This implies that there is some region $\tilde{B}$ outside region $A^*$ that is in region $\tilde{A} \cap Q_1$ and some region $B^*$ that is outside region $\tilde{A} \cap Q_1$ but inside region $A^*$. By Claim \ref{claim:static_proof_1}, we have:
\begin{equation} \label{eqn:probabilities_Bs}
\int_{\tilde{B}} dF(\bv) \geq  \int_{B^*} dF(\bv).   
\end{equation}
Next, by \eqref{eqn:proof_static_3}, we have 
\begin{equation}
    \label{eqn:lossrelationoptimalityproof}
    \begin{aligned}
        \int_{\tilde{B}} L(v_1, v_2) dF(\bv) \leq \int_{B^*} L(v_1, v_2) dF(\bv).
    \end{aligned}
\end{equation}
Notice that $L(v_1, v_2)$ is nonnegative over $\tilde{B}$ and $B^*$. Also, the fact that $\tilde{B} \cap A^*$ is empty (along with $\tilde{B} \subseteq Q_1$) means that $\sup_{B^*} L(v_1, v_2) \leq \gamma < \inf_{\tilde{B}} L(v_1, v_2)$. Next, notice that
\begin{equation*}
    \begin{aligned}
        \int_{B^*} L(v_1, v_2) dF(\bv) \leq \sup_{B^*} L(v_1, v_2) \int_{B^*} dF(\bv)
    \end{aligned}
\end{equation*}
and 
\begin{equation*}
    \begin{aligned}
        \int_{\tilde{B}} L(v_1, v_2)  dF(\bv) \geq \inf_{\tilde{B}} L(v_1, v_2) \int_{\tilde{B}} dF(\bv). 
    \end{aligned}
\end{equation*}
Combined with \eqref{eqn:probabilities_Bs}, these inequalities yield that 
    \begin{equation*}
        \begin{aligned}
            \int_{B^*} L(v_1, v_2) dF(\bv) \leq \int_{\tilde{B}} L(v_1, v_2) dF(\bv),
        \end{aligned}
    \end{equation*}
where the equality holds only if $\tilde{B}$ and $B^*$ are measure zero, which should be the case given \eqref{eqn:lossrelationoptimalityproof}. 
This shows that $\tilde{x}$ and $x^*$ are the same, up to a measure zero set. 
\end{proof}
This claim along with the equality condition of Claim \ref{claim:static_proof_1} shows that $A^*$ and $\tilde{A}$ are equal up to a measure-zero set. Hence, for a given $\mathcal{S}$, the optimal allocation is an affine allocation.

Now, this result was for a given $\mathcal{S}$. Using a similar argument, we can establish that the optimal set $\mathcal{S}$ should be in the form of 
\begin{equation}
\left \{ (\phi_1(v_1), \phi_2(v_2)) \Big \vert \phi_1(v_1) \geq -\eta_1 \text{ or } \phi_2(v_2) \geq -\eta_2 \right \}    
\end{equation}
for some $\eta_1 , \eta_2 \geq 0$.  We finally establish that $\eta_2 = \eta_1 + \gamma$.
Suppose that $\eta_2 < \eta_1 + \gamma$ in the optimal fair allocation. Then, there exists the following region of allocations to group one: 
$G = \{(\phi_1(v_1), \phi_2(v_2)) : \phi_2(v_2) \in (-\eta_2, -\eta_1 - \gamma), \phi_1(v_1) \in (-\eta_1, -\eta_1 - \gamma + \eta_2), \phi_2(v_2) \geq \phi_1(v_1) + \gamma \}$.  Notice that $G$  constitutes a triangle as depicted in the red region in Figure \ref{fig:delta_eps_reln_case1}. Consider modifying the allocation rule as follows. For some $g \in G$, do not allocate $g$. Instead, allocate a region of equal measure to group one beginning at $(-\eta_1, -\eta_2)$. This switch changes the loss by the amount $\eta_1 - \eta_2$. Then, we return to the original and total levels of allocation by reallocating another equal-measure region on the border $\phi_1(v_1) = \phi_2(v_2) - \gamma$ to group two. The second switch changes the loss by the amount $\gamma$. Therefore, the total change to the loss is $\eta_1 - \eta_2 + \gamma < 0$ by our initial assumption. Thus, this modification results in decreased loss to the seller, and it cannot be optimal.
%%%%%%%%%%%%%%%%%%%%%%%%%%%
\begin{figure}[t]
\centering
\begin{tikzpicture}[>=stealth]

\def\leneps{1.25}
\def\xmax{4}
\def\xmin{-4}
\def\ymax{4}
\def\ymin{-4}
\def\etaone{1}
\pgfmathsetmacro\etatwo{\etaone + \leneps}
\def\etatwooffset{0.5}
\pgfmathsetmacro\etatwoprime{\etaone + \leneps - \etatwooffset}

    \draw[->] (\xmin,0) -- (\xmax,0) node[right]{$\phi_1(v_1)$};
    \draw[->] (0,\ymin) -- (0,\ymax) node[above]{$\phi_2(v_2)$};

    \draw[-, white, name path = boundary1] (\xmin,\ymin) -- (\xmin,-\leneps);
    \draw[-, white, name path = boundary2] (\xmin,-\leneps)--(\xmin,0);
    \draw[-, white, name path = boundary3] (\xmin,0)--(\xmin,\ymax);
    \draw[-, white, name path = boundary4] (\xmin,\ymax)--(\xmax,\ymax);
    \draw[-, white, name path = boundary5] (\xmax,\ymax)--(\xmax,\ymax-\leneps);
    \draw[-, white, name path = boundary6] (\xmax,\ymax-\leneps)--(\xmax,\ymin)--(0,\ymin);
    \draw[-, white, name path = boundary7] (0,\ymin)--(\xmin,\ymin);

    % Draw and label point at x = -deltaone, y = -deltatwo
    \filldraw (-\etaone,0) circle (1pt);
    \node[above] at (-\etaone, 0) {$-\eta_1$};
    \filldraw (0, -\etatwoprime) circle (1pt);
    \node[right] at (0, -\etatwoprime) {$-\eta_2$};
    \filldraw (0, -\leneps) circle (1pt);
    \node[right] at (0, -\leneps) {$-\gamma$};
    \filldraw (0, -\etatwo) circle (1pt);
    \node[right] at (0, -\etatwo) {$-(\eta_1 + \gamma)$};
    
    \draw[dashed, ,->, black, name path = x_eq_y_pos] (0,0) -- (4,4) node[right]{$\phi_2(v_2) = \phi_1(v_1)$};
    \draw[dashed, ,->, black, name path = y_eq_0_neg] (0,0) -- (-\ymin,0);

    \draw[dashed, ,->, black, name path = x_eq_y_mineps_pos] (-\etaone,-\etatwo) -- (4,4-\leneps) node[right]{$\phi_1(v_1) = \phi_2(v_2) + \gamma$};

    \draw[dashed, ,->, lightgray, name path = x_eq_y_mineps_pos_neg] (-\etaone,-\etatwo) -- (-3,-3-\leneps);

    \draw[dashed, ,->, black, name path = y_eq_min_eps_neg] (0,-\etatwoprime) -- (\xmin,-\etatwoprime);

    \draw[dashed, ,->, black, name path = x_eq_neg_delta1] (-\etaone,0) -- (-\etaone,-4);

    \draw[dashed, ,->, black, name path = y_eq_neg_eps] (0, -\leneps) -- (-\xmax,-\leneps);

    \fill[green, opacity=0.2] (0, 0) -- (\xmax, \ymax) -- (\xmax, \ymax - \leneps) -- (0, -\leneps) -- (-\xmax, -\leneps) -- (-\xmax, 0) -- cycle;

    \fill[blue, opacity=0.2] (0, 0) -- (-\xmax, 0)-- (-\xmax, \ymax) -- (0, \ymax) -- (\xmax, \ymax) -- (0, 0) -- cycle;

    \fill[orange, opacity=0.2] (-\etaone, -\etatwo) -- (-\etaone, -\ymax) -- (0, -\ymax)-- (0, -\leneps) -- cycle;

    \fill[red, opacity=0.2] (-\etaone, -\etatwo) -- (-\etaone + \etatwooffset, -\etatwoprime) -- (-\etaone, -\etatwoprime) -- cycle;

    \fill[yellow, opacity=0.2](0, -\leneps) -- (\xmax, \ymax - \leneps) -- (\xmax, -\ymax) -- (0, -\ymax)-- cycle;

    \fill[green, opacity=0.2](0, -\leneps) -- (-\xmax, -\leneps) -- (-\xmax, -\etatwoprime) -- (-\etaone + \etatwooffset , -\etatwoprime ) -- cycle;

\end{tikzpicture}
\caption{The case where $\eta_2 < \eta_1 + \gamma$.}
\label{fig:delta_eps_reln_case1}
\end{figure}
%%%%%%%%%%%%%%%%%%%%%%%%%%%
Now, suppose that $\eta_2 > \eta_1 + \gamma$ in the optimal fair allocation. Then, there exists a region that is not allocated under the optimal mechanism defined by $H = \{(\phi_1(v_1), \phi_2(v_2)) : \phi_2(v_2) \in (-\eta_1 - \gamma, -\eta_1), \phi_1(v_1) \in (-\eta_2 - \gamma, -\eta_1), \phi_2(v_2) \leq \phi_1(v_1) + \gamma \}$. As before, $H$ is a triangle depicted in gray in Figure \ref{fig:delta_eps_reln_case2}. However, notice that the unallocated values in $H$ are closer to the unconstrained allocation boundary than the allocated set of virtual values $\{(\phi_1(v_1), \phi_2(v_2)): \phi_1(v_1) < =\eta_1, \phi_2(v_2) \in (-\eta_1 - \gamma, -\eta_2),  \phi_2(v_2) \geq \phi_1(v_1) - \gamma$ \}. Therefore, such an allocation cannot be the optimal fair allocation due to the result shown in the proof of Theorem \ref{thrm:fair_allocation_static} that loss is increasing in Euclidean distance from the unconstrained allocation boundary (\ref{fig:allocation_static}).
%%%%%%%%%%%%%%%%%%%%%%%%%%%    
\begin{figure}[t]
\centering
\begin{tikzpicture}[>=stealth]

\def\leneps{1.25}
\def\xmax{4}
\def\xmin{-4}
\def\ymax{4}
\def\ymin{-4}
\def\etaone{1}
\pgfmathsetmacro\etatwo{\etaone + \leneps}
\def\etatwooffset{0.5}
\pgfmathsetmacro\etatwoprime{\etaone + \leneps +\etatwooffset}

    \draw[->] (\xmin,0) -- (\xmax,0) node[right]{$\phi_1(v_1)$};
    \draw[->] (0,\ymin) -- (0,\ymax) node[above]{$\phi_2(v_2)$};

    \draw[-, white, name path = boundary1] (\xmin,\ymin) -- (\xmin,-\leneps);
    \draw[-, white, name path = boundary2] (\xmin,-\leneps)--(\xmin,0);
    \draw[-, white, name path = boundary3] (\xmin,0)--(\xmin,\ymax);
    \draw[-, white, name path = boundary4] (\xmin,\ymax)--(\xmax,\ymax);
    \draw[-, white, name path = boundary5] (\xmax,\ymax)--(\xmax,\ymax-\leneps);
    \draw[-, white, name path = boundary6] (\xmax,\ymax-\leneps)--(\xmax,\ymin)--(0,\ymin);
    \draw[-, white, name path = boundary7] (0,\ymin)--(\xmin,\ymin);

    % Draw and label point at x = -deltaone, y = -deltatwo
    \filldraw (-\etaone,0) circle (1pt);
    \node[above] at (-\etaone, 0) {$-\eta_1$};
    \filldraw (0, -\etatwoprime) circle (1pt);
    \node[right] at (0, -\etatwoprime) {$-\eta_2$};
    \filldraw (0, -\leneps) circle (1pt);
    \node[right] at (0, -\leneps) {$-\gamma$};
    \filldraw (0, -\etatwo) circle (1pt);
    \node[right] at (0, -\etatwo) {$-(\eta_1 + \gamma)$};
    
    \draw[dashed, ,->, black, name path = x_eq_y_pos] (0,0) -- (4,4) node[right]{$\phi_2(v_2) = \phi_1(v_1)$};
    \draw[dashed, ,->, black, name path = y_eq_0_neg] (0,0) -- (-\ymin,0);

    \draw[dashed, ,->, black, name path = x_eq_y_mineps_pos] (-\etaone-\etatwooffset,-\etatwo-\etatwooffset) -- (4,4-\leneps) node[right]{$\phi_1(v_1) = \phi_2(v_2) + \gamma$};

    \draw[dashed, ,->, lightgray, name path = x_eq_y_mineps_pos_neg] (-\etaone-\etatwooffset,-\etatwo-\etatwooffset) -- (-3,-3-\leneps);

    \draw[dashed, ,->, black, name path = y_eq_min_eps_neg] (0,-\etatwoprime) -- (\xmin,-\etatwoprime);

    \draw[dashed, ,->, black, name path = x_eq_neg_delta1] (-\etaone,0) -- (-\etaone,-4);

    \draw[dashed, ,->, black, name path = y_eq_neg_eps] (0, -\leneps) -- (-\xmax,-\leneps);

    \fill[green, opacity=0.2] (0, 0) -- (\xmax, \ymax) -- (\xmax, \ymax - \leneps) -- (0, -\leneps) -- (-\xmax, -\leneps) -- (-\xmax, 0) -- cycle;

    \fill[blue, opacity=0.2] (0, 0) -- (-\xmax, 0)-- (-\xmax, \ymax) -- (0, \ymax) -- (\xmax, \ymax) -- (0, 0) -- cycle;

    \fill[orange, opacity=0.2] (-\etaone, -\etatwo) -- (-\etaone, -\ymax) -- (0, -\ymax)-- (0, -\leneps) -- cycle;

    \fill[lightgray, opacity=0.2] (-\etaone, -\etatwo) -- (-\etaone - \etatwooffset, -\etatwoprime) -- (-\etaone, -\etatwoprime) -- cycle;

    \fill[yellow, opacity=0.2](0, -\leneps) -- (\xmax, \ymax - \leneps) -- (\xmax, -\ymax) -- (0, -\ymax)-- cycle;

    \fill[green, opacity=0.2](0, -\leneps) -- (-\xmax, -\leneps) -- (-\xmax, -\etatwoprime) -- (-\etaone - \etatwooffset , -\etatwoprime ) -- cycle;

\end{tikzpicture}
\caption{The case where $\eta_2 > \eta_1 + \gamma$.}
\label{fig:delta_eps_reln_case2}
\end{figure}
%%%%%%%%%%%%%%%%%%%%%%%%%%%
Observing that these two cases together imply equality concludes the proof. $\blacksquare$
%%%%%%%%%%%%%%%%%%%%%%%%%%%
%%%%%%%%%%%%%%%%%%%%%%%%%%%
%%%%%%%%%%%%%%%%%%%%%%%%%%%
\subsection{Proof of \cref{proposition:optimal_static}}
Let $\text{Loss}_{x}$ and $\text{Loss}_{\widehat{x}}$ denote the loss of allocations $x$ and $\Tilde{x}$ compared to the original unconstrained allocation, respectively. Notice that the change in loss under the modified allocation can be expressed as 
\begin{equation}
    \begin{aligned}
        \text{Loss}_{x} - \text{Loss}_{\widehat{x}} &= \int_{\widehat{G}_2} (\phi_1(\max_{k} v_{1,k}) - \phi_2(\max_{k} v_{2,k}) ) dF(\bv).
    \end{aligned}
\end{equation}
Since $\phi_1(\max_{k} v_{1,k}) - \phi_2(\max_{k} v_{2,k}) \in (0, \gamma)$ for every pair $(\max_{k} v_{1,k}, \max_{k} v_{2,k}) \in \widehat{G}_2$ and $\widehat{G}_2$ is a set with nonzero measure, we may conclude that $\text{Loss}_{x} - \text{Loss}_{\widehat{x}} > 0$ In other words, the seller's utility increases under the allocation $\widehat{x}$ as compared to that of allocation $x$. Therefore, for an optimal fair allocation where $\gamma > 0$, it must be that $\mathbb{P}(G_2) = \alpha_2$.

The proof is similar to show that, if $\eta_1 >0$, i.e., there is insufficient total allocation in the optimal unconstrained allocation, we have $\mathbb{P}(G_1) = \alpha_1$. As before, clearly $\mathbb{P}(G_1) \geq \alpha_1$, otherwise the fairness constraint is not satisfied. Suppose, for sake of contradiction, that $\mathbb{P}(G_1) \geq \alpha_1$. Then there is some subset 
\begin{equation}
    \begin{aligned}
        \widehat{G}_1 \subseteq \{(v_1, v_2)|-\eta_1 \leq \phi_1(v_1), \leq 0, \phi_2(v_2) \leq \phi_1(v_1) - \gamma\}
    \end{aligned}
\end{equation}
with nonzero measure such that $\mathbb{P}(G_1 \setminus \widehat{G}_2) \geq \alpha_1$. Notice that $\widehat{G}_1$ is some subset of the orange region in Figure \ref{fig:fair_allocation_static}. Once again, let $x$ be the original allocation and $\widehat{x}$ be the modified allocation where we do not allocate region $\widehat{G}_2$. The change in loss under the modified allocation is given by \begin{equation}
    \begin{aligned}
        \text{Loss}_{x} - \text{Loss}_{\widehat{x}} &= \int_{\widehat{G}_1} (\phi_1(\max_{k} v_{1,k}) - \phi_2(\max_{k} v_{2,k}) ) dF(\bv).
    \end{aligned}
\end{equation}
Since $\widehat{G}_1$ has nonzero measure and $\phi_2(v_2) \leq 0$ for all $(\max_{k} v_{1,k}, \max_{k} v_{2,k}) \in \widehat{G}_1$, it follows that $\text{Loss}_{\widehat{x}} - \text{Loss}_{x} > 0$. Once again, this contradicts the optimality of allocation $x$, and it must be that $\mathbb{P}(G_1) = \alpha_1$.

Observing that the second claim follows from exchanging the roles of groups 1 and 2 in the preceding work concludes the proof. $\blacksquare$
%%%%%%%%%%%%%%%%%%%%%%%%%%%
%%%%%%%%%%%%%%%%%%%%%%%%%%%
%%%%%%%%%%%%%%%%%%%%%%%%%%%
\subsection{Proof of \cref{corollary:last_round}}
Both results follow from an application of the following claim. 
%%%%%%%%%%%%%%%%%%%%%%%%%%%
\begin{claim}\label{claim:expect_x_times_f}
Let $f(v_{i,k}^T)$ be some function of $v_{i,k}^T$. Then
    \begin{equation}
        \begin{aligned}
            \mathbb{E}\left[x_{i,k}^T(\bb^T, \bh^T)  f(v_{i,k}^T)\right] &= \frac{1}{n}\int_{G_i} f(v_{i,k}^T) d F_\text{max}^T(v_1,v_2)
        \end{aligned}
    \end{equation}
\end{claim}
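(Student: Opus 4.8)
## Proof Proposal for Claim \ref{claim:expect_x_times_f}

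\textbf{Setup and strategy.} The plan is to compute the expectation by conditioning on the event that buyer $(i,k)$ holds the maximum value within group $i$, and exploiting the symmetry across the $n$ buyers in that group. The key structural facts from Theorem \ref{thrm:fair_allocation_static} are: (a) the item goes to group $i$ precisely when the pair of group maxima $(v_1, v_2)$ lies in $G_i$, and (b) conditional on allocation to group $i$, the item goes to the buyer in that group with the highest value. So $x_{i,k}^T(\bb^T, \bh^T) = 1$ if and only if $(v_1^T, v_2^T) \in G_i$ \emph{and} $v_{i,k}^T = v_i^T = \max_{\ell} v_{i,\ell}^T$.

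\textbf{Main steps.} First I would write
\[
\mathbb{E}\left[x_{i,k}^T f(v_{i,k}^T)\right] = \mathbb{E}\left[ \mathbbm{1}\!\left((v_1^T, v_2^T)\in G_i\right)\,\mathbbm{1}\!\left(v_{i,k}^T = \max_{\ell} v_{i,\ell}^T\right) f(v_{i,k}^T)\right],
\]
ignoring the measure-zero event of ties (values are drawn from continuous full-support densities, so $\mathbb{P}(v_{i,k}^T = v_{i,\ell}^T) = 0$ for $k \ne \ell$). Next, sum this identity over $k = 1, \dots, n$: by symmetry each term is equal, so the sum is $n$ times the left-hand side. On the right-hand side, the events $\{v_{i,k}^T = \max_\ell v_{i,\ell}^T\}$ for $k \in [n]$ partition the probability space up to a null set, so summing collapses the indicator on $k$ and leaves
\[
n\,\mathbb{E}\left[x_{i,k}^T f(v_{i,k}^T)\right] = \mathbb{E}\left[\mathbbm{1}\!\left((v_1^T, v_2^T)\in G_i\right) f(v_i^T)\right],
\]
where $v_i^T = \max_\ell v_{i,\ell}^T$. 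Finally, I would observe that the right-hand expectation depends on the primitive values only through the pair of group maxima $(v_1^T, v_2^T)$, whose joint CDF is $F_\text{max}^T(v_1, v_2) = F_1^T(v_1)^n F_2^T(v_2)^n$ by independence across buyers (as recorded in \eqref{eqn:CDF_max}); hence it equals $\int_{G_i} f(v_i^T)\, dF_\text{max}^T(v_1, v_2)$. Dividing by $n$ gives the claim.

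\textbf{Anticipated obstacle.} The only delicate point is justifying that the allocation really is determined by $(v_1^T, v_2^T)$ and the ``highest value within the group'' rule — but this is exactly the content of parts (i) and (ii) of Theorem \ref{thrm:fair_allocation_static}, which we may assume. One should also be slightly careful that $f$ may be a function for which the integral is taken in the sense of a signed measure (e.g. $f = \phi_i^T$ can be negative), so ``expectation'' here is really shorthand for the integral against $dF_\text{max}^T$; since everything is finite on the bounded support $[\underline{v}_i^T, \bar{v}_i^T]$ under Assumption \ref{assumption:regularity}, Fubini/linearity causes no trouble. The symmetry argument — that relabeling buyers within a group is a measure-preserving bijection leaving $x_{i,\cdot}^T$ and the value distribution invariant — is the crux, and it is routine given the i.i.d. structure within each group.
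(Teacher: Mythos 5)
Your argument is correct and follows essentially the same route as the paper's: both rely on Theorem \ref{thrm:fair_allocation_static} to reduce $x_{i,k}^T$ to the event that $(v_1,v_2)\in G_i$ and buyer $(i,k)$ is the within-group maximum, and both use the exchangeability of the $n$ i.i.d.\ buyers to extract the factor $1/n$ before integrating against $F_\text{max}^T$. The only difference is presentational — the paper conditions on $\{v_{i,k}^T=\max_\ell v_{i,\ell}^T\}$ with probability $1/n$, whereas you sum the identity over $k$ and divide by $n$, which is the same exchangeability step written slightly more carefully.
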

%%%%%%%%%%%%%%%%%%%%%%%%%%%
First, by the law of total probability and observing that $x_{i,k}^T = 0$ if $v_{i,k}^T < \max_k v_{i,k}^T$, 
\begin{equation}
    \begin{aligned}
        \mathbb{E}\left[x_{i,k}^T(\bb^T, \bh^T)  f(v_{i,k}^T)\right] &= \mathbb{E}\left[x_{i,k}^T(\bb^T, \bh^T)  f(v_{i,k}^T) \mid v_{i,k}^T = \max_k v_{i,k}^T\right] \mathbb{P}\left(v_{i,k}^T = \max_k v_{i,k}^T\right)
    \end{aligned}
\end{equation}
Next, since $v_{i,k}^T \overset{i.i.d.}{\sim} F_{i}^T,$ $\mathbb{P}(v_{i,k}^T = \max_k v_{i,k}) = \frac{1}{n}$. Also observe that $x_{i,k}^T = 0$ for $(v_1, v_2) \notin G_i$. Thus, 
\begin{equation}
        \begin{aligned}
            \mathbb{E}\left[x_{i,k}^T(\bb^T, \bh^T)  f(v_{i,k}^T)\right] &= \frac{1}{n}\mathbb{E}\left[f(v_{i,k}^T) \mathbbm{1}\{v_{i,k}^T = \max_k v_{i,k}^T, (v_1, v_2) \in G_i\}\right]\\
            &=\frac{1}{n}\int_{G_i} f(v_{i,k}^T) d F_\text{max}^T(v_1,v_2).
        \end{aligned}
    \end{equation}
Now, we use the preceding claim to evaluate the expected utility of a buyer in group $i$ denoted $\nu_i^T(R_1^T, R_2^T)$. From \eqref{eqn:buyer_k_utility_t},
\begin{equation}\label{eqn:buyer_utility_T}
    \begin{aligned}
        \nu_i^T(R_1^T, R_2^T)=\mathbb{E} \left[ \mathcal{U}_{i,k}^T(v_{i,k}^T; \bb^T, \bh^T)\right]=\mathbb{E} \left[ x_{i,k}^T(\bb^T, \bh^T) v_{i,k}^T - p_{i,k}^T(\bb^T, \bh^T)\right].
    \end{aligned}
\end{equation}
The expected payment of a buyer in group $i$ can equivalently be expressed as
\begin{equation}\label{eqn:paymentequiv}
    \begin{aligned}
        \mathbb{E}\left[p_{i,k}^T(\bb^T, \bh^T)\right]=\mathbb{E}\left[x_{i,k}^T(\bb^T, \bh^T)  \phi_i^T(v_{i,k}^T)\right]
    \end{aligned}
\end{equation}
as shown in \cite[Lemma~13.11]{AGT}. Substitution of this result into \cref{eqn:buyer_utility_T} yields 
\begin{equation}\label{eqn:expec_util_buyer}
    \begin{aligned}
        \nu_i^T(R_1^T, R_2^T)=\mathbb{E} \left[ x_{i,k}^T(\bb^T, \bh^T) \left(v_{i,k}^T - \phi_i^T(v_{i,k}^T)\right)\right].
    \end{aligned}
\end{equation}
From here, applying Claim \ref{claim:expect_x_times_f} to the function $v_{i,k}^T - p_{i,k}^T(\bb^T, \bh^T)$ yields 
the desired result. 

Next, we show the second part of the claim. By Definition \ref{eqn:seller_utility_static}, we may express the expected utility of the seller in group $i$ denoted $\mu^T(R_1^T, R_2^T)$ as
\begin{equation}
    \begin{aligned}
        \mathbb{E} \left[\phi_{1}(v_{1,k}) x_{1,k}(\bv) + \phi_{2}(v_{2,k}) x_{2,k}(\bv)\right].
    \end{aligned}
\end{equation}
By linearity of expectation and an application of Claim \ref{claim:expect_x_times_f} to functions $\phi_{1}(v_{1,k})$ and $\phi_{2}(v_{2,k})$ yields the desired result. $\blacksquare$
%%%%%%%%%%%%%%%%%%%%%%%%%%%
%%%%%%%%%%%%%%%%%%%%%%%%%%%
%%%%%%%%%%%%%%%%%%%%%%%%%%%
\subsection{Proof of \cref{proposition:dynamic_one_group}} 
First, notice that the seller must allocate the item to a member of the group, otherwise the auction becomes infeasible. Therefore, this case reduces to a static Vickrey auction with no reserve price where the revenue-maximizing allocation and payments are characterized by \cite{myerson1981optimal} as follows. To optimize seller revenue, the seller allocates to the highest bidder who pays an amount equal to the second highest bid. 

By definition, we may express the expected utility as
\begin{equation}
    \begin{aligned}
        \nu_{i}^t(R_1^t, R_2^t) = \mathbb{E}\left[\sum_{\tau = t}^{T} \delta^{\tau - t} \mathcal{U}_{i, k}^{\tau}\right] = \mathbb{E}\left[\mathcal{U}_{i, k}^{t}\right] + \delta\mathbb{E}\left[\sum_{\tau = t + 1}^{T} \delta^{\tau - (t + 1)} \mathcal{U}_{i, k}^{\tau}\right].
    \end{aligned}
\end{equation}
Since the seller allocates the item to a buyer in group $i$ with probability 1 to maintain feasibility and by Fact \ref{fact:resid_min_alloc_update}, 
\begin{equation}
    \begin{aligned}
      \delta\mathbb{E}\left[\sum_{\tau = t + 1}^{T} \delta^{\tau - (t + 1)} \mathcal{U}_{i, k}^{\tau}\right] = \delta \nu_{i}^{t+1}((R_{i}^t - 1)/\delta, R_{-i}^t/\delta).
    \end{aligned}
\end{equation}
From \eqref{eqn:expec_util_buyer}, \eqref{eqn:virtual_value}, and the observation that this setting reduces to a Vickrey auction: 
\begin{equation}
    \begin{aligned}
        \mathbb{E} \left[ x_{i,k}^t(\bb^t, \bh^t) \left(v_{i,k}^t - \phi_i^t(v_{i,k}^t)\right)\right] = \mathbb{E} \left[ \frac{1-F_i^t(v_{i,k}^t)}{f_i^t(v_{i,k}^t)} \mathbbm{1}\left\{ v_{i,k}^t = \max_j v_{i,j}^t\right\}\right].
    \end{aligned}
\end{equation}
Since $v_{i,k}^t$ is distributed according to $F_i^t(v_{i,k}^t)$ for all $k \in [n]$, $\mathbb{P}\left(v_{i,k}^t = \max_j v_{i,j}^t \right) = (F_{i}^t(v))^{n-1}$. Therefore, 
\begin{equation}
    \begin{aligned}
        \mathbb{E} \left[ x_{i,k}^t(\bb^t, \bh^t) \left(v_{i,k}^t - \phi_i^t(v_{i,k}^t)\right)\right]  = \int (1-F_{i}^t(v))(F_{i}^t(v))^{n-1} dv
    \end{aligned}
\end{equation}
and the result follows. 

Next, we evaluate the expected utility of buyer $-i$ that does not receive the item. Since buyer $-i$ does not receive the item with probability 1, their expected utility reduces to the discounted expected utility over future rounds, i.e., 
\begin{equation}
    \begin{aligned}
        \nu_{-i}^t(R_1^t, R_2^t) =  \delta\mathbb{E}\left[\sum_{\tau = t + 1}^{T} \delta^{\tau - (t + 1)} \mathcal{U}_{i, k}^{\tau}\right].
    \end{aligned}
\end{equation}
The result follows by applying Fact \ref{fact:resid_min_alloc_update} to see that 
\begin{equation}
    \begin{aligned}  \delta\mathbb{E}\left[\sum_{\tau = t + 1}^{T} \delta^{\tau - (t + 1)} \mathcal{U}_{i, k}^{\tau}\right] = \delta \nu_{-i}^{t+1}((R_1^t - 1)/\delta, R_2^t/\delta).
    \end{aligned}
\end{equation}

Last, we evaluate the seller's expected utility. By Definition \ref{eqn:seller_utility_static} and since the good is allocated to the maximum valuation buyer in group $i$ with probability 1, 
\begin{equation}
    \begin{aligned}
        \mu^t(R_1^t, R_2^t) = \mathbb{E}\left[\sum_{\tau = t}^{T} \delta^{\tau - t} \left [ \sum_{i \in [2],k \in [n]} \phi^{\tau}_{i}(v^{\tau}_{i,k}) x^{\tau}_{i,k}(\bv) \right] \right] = \mathbb{E} \left[ \phi_{i}^{t}\left(\max_{k} v^{t}_{i,k}\right)\right] + \delta \mu^{t+1}((R_i^t - 1)/\delta, R_{-i}^t/\delta).
    \end{aligned}
\end{equation}

Recalling that $\max_{k} v^{t}_{i,k} \sim F_i^t(v)^n$, we can express the first term as 
\begin{equation}
    \begin{aligned}
        \mathbb{E} \left[ \phi_{i}^{t}\left(\max_{k} v^{t}_{i,k}\right)\right] =  \int \phi_i^t(v) d F_i^t(v)^n
    \end{aligned}
\end{equation}
and the result follows. $\blacksquare$
%%%%%%%%%%%%%%%%%%%%%%%%%%%
%%%%%%%%%%%%%%%%%%%%%%%%%%%
%%%%%%%%%%%%%%%%%%%%%%%%%%%
\subsection{Proof of \cref{theorem:dynamic_both_feasible}} \label{proof:theorem:dynamic_both_feasible}
Notice that the variable 
\begin{equation}
\sum_{\ell=1}^n x_{i,\ell}^t(\bv^t).    
\end{equation}
determines whether group $i$ receives the item or not. Now, suppose buyer $(i,k)$ submits bid $b$. Given Assumption \ref{assumption:no_reserve_price}, we can write the utility of buyer $(i,k)$ as:
\begin{equation}\label{eqn:dynamic_proof_1}
x_{i,k}^t(b, \bv_{-(i,k)}^t) v_{i,k}^t - p_{i,k}^t(b, \bv_{-(i,k)}^t)
- \delta \left(\sum_{\ell=1}^n x_{i,\ell}^t(b, \bv_{-(i,k)}^t) \right) \Delta_i^t(R_1^t, R_2^t) + 
\delta \nu_i^{t+1}(R_i^t/\delta, (R_{-i}^t-1)/\delta).
\end{equation}
%%%%
Notice that the last term $\delta \nu_i^{t+1}(R_i^t/\delta, (R_{-i}^t-1)/\delta)$ is a constant, independent of the values and the bid. Hence, we can drop this constant from the buyer's decision making. Now, assuming $\bv_{-(i,k)}^t$ is fixed, we can write the adjusted utility as 
\begin{equation} \label{eqn:dynamic_proof_2}
x_{i,k}^t(b, \bv_{-(i,k)}^t) v_{i,k}^t - \tilde{p}_{i,k}^t(b, \bv_{-(i,k)}^t),
\end{equation}
where 
\begin{equation} \label{eqn:dynamic_proof_3}
\tilde{p}_{i,k}^t(b, \bv_{-(i,k)}^t) := p_{i,k}^t(b, \bv_{-(i,k)}^t)
+ \delta (\sum_{\ell=1}^n x_{i,\ell}^t(b, \bv_{-(i,k)}^t)) \Delta_i^t(R_1^t, R_2^t).   
\end{equation}
Using this representation, we can interpret round $t$-th auction as a one round auction. Hence, given \cref{proposition:static_EPIC}, for an allocation to be EPIC, $x_{i,k}^t(b, \bv_{-(i,k)}^t) $ should be (weakly) monotone in $b$ and we also should have 
\begin{equation}
\tilde{p}_{i,k}^t(b, \bv_{-(i,k)}^t) = b~x_{i,k}^t(b, \bv_{-(i,k)}^t) - \int_{\underline{v}_i^t}^b x_{i,k}^t(z, \bv_{-(i,k)}^t) dz + c_{i,k}(\bv_{-(i,k)}^t),   
\end{equation}
for some function $c_{i,k}(\cdot)$ which is independent of $v$. As a result, for any $i \in [2]$ and $k \in [n]$, the payment is given by
\begin{align} \label{eqn:dynamic_proof_4}
& p_{i,k}^t(v, \bv_{-(i,k)}^t)  = \\
& v~x_{i,k}^t(v, \bv_{-(i,k)}^t) - \int_{\underline{v}_i^t}^v x_{i,k}^t(z, \bv_{-(i,k)}^t) dz 
- \delta (\sum_{\ell=1}^n x_{i,\ell}^t(v, \bv_{-(i,k)}^t)) \Delta_i^t(R_1^t, R_2^t) + c_{i,k}(\bv_{-(i,k)}^t).  \nonumber  
\end{align}
The seller wants to choose $c_{i,k}(\cdot)$ as large as possible, but the IR constraint puts an upper bound on it. Substituting the payment \eqref{eqn:dynamic_proof_4} into \eqref{eqn:dynamic_proof_1}, the buyer utility is given by
\begin{equation}\label{eqn:dynamic_proof_5}
\int_{\underline{v}_i^t}^{v_{i,k}^t} x_{i,k}^t(z, \bv_{-(i,k)}^t) dz
+ \delta \nu_i^{t+1}(R_i^t/\delta, (R_{-i}^t-1)/\delta) - c_{i,k}(\bv_{-(i,k)}^t).
\end{equation}
%%%%
Now, to check the IR constraint, suppose buyer $(i,k)$ skips round $t$, and denote the optimal allocation in that case by $\tilde{\bx}^t$. Hence, the expected utility of buyer $(i,k)$ from time $t+1$ onwards is given by
\begin{equation}\label{eqn:dynamic_proof_6}
 \delta \nu_i^{t+1}(R_i^t/\delta, (R_{-i}^t-1)/\delta) - \delta (\sum_{\ell \neq k} \tilde{x}_{i,\ell}^t(\bv_{-(i,k)}^t)) \Delta_i^t(R_1^t, R_2^t).
\end{equation}
The IR constraint implies that the expectation of \eqref{eqn:dynamic_proof_5} minus \eqref{eqn:dynamic_proof_6} over $\bv_{-(i,k)}^t$ should be nonnegative for any $v_{i,k}^t$. Thus, we should have 
\begin{equation}\label{eqn:IR_proof_i}
\mathbb{E}_{\bv_{-(i,k)}^t}[c_{i,k}(\bv_{-(i,k)}^t)] \leq \delta \Delta_i^t(R_1^t, R_2^t) ~\mathbb{E}_{\bv_{-(i,k)}^t}\left[ \sum_{\ell \neq k} \tilde{x}_{i,\ell}^t(\bv_{-(i,k)}^t) \right],
\end{equation}
and so, the seller sets the function $c_{i,k}(\cdot)$ to achieve the equality case.
Note that the right hand side is only a function of $i$. We denote it by $c_i^t$.
We next make the following claim regarding the expected payment of buyer $(i,k)$ to the seller in this case.
\begin{claim} \label{claim:expected_payment_dynamic_1}
The expected payment of buyer $(i,k)$ to the seller at round $t$ is given by
\begin{equation}
\mathbb{E}_{\bv^t} \left[ \phi_i^t(v_{i,k}^t) x_{i,k}^t(\bv^t) 
- \delta (\sum_{\ell=1}^n x_{i,\ell}^t(\bv^t)) \Delta_i^t(R_1^t, R_2^t)
\right ] + c_i^t.
\end{equation}
\end{claim}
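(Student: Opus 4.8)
The plan is to compute the expected payment of buyer $(i,k)$ directly from the payment formula \eqref{eqn:dynamic_proof_4}, taking expectation over the full value vector $\bv^t$, and to show that the first two terms $v_{i,k}^t x_{i,k}^t - \int_{\underline{v}_i^t}^{v_{i,k}^t} x_{i,k}^t(z,\cdot)\,dz$ have expectation equal to $\mathbb{E}[\phi_i^t(v_{i,k}^t) x_{i,k}^t(\bv^t)]$, while the remaining terms are already in the desired form. The key identity I would invoke is the standard revenue-equivalence computation (as in \cite[Lemma~13.11]{AGT}, also used in the proof of \cref{corollary:last_round}): for any EPIC allocation rule,
\begin{equation}
\mathbb{E}_{\bv^t}\!\left[ v_{i,k}^t x_{i,k}^t(\bv^t) - \int_{\underline{v}_i^t}^{v_{i,k}^t} x_{i,k}^t(z,\bv_{-(i,k)}^t)\,dz \right] = \mathbb{E}_{\bv^t}\!\left[ \phi_i^t(v_{i,k}^t)\, x_{i,k}^t(\bv^t) \right].
\end{equation}
This follows from Fubini's theorem applied to the integral term and the definition $\phi_i^t(v) = v - (1-F_i^t(v))/f_i^t(v)$; since that reduction is entirely routine and has already appeared in this paper, I would cite it rather than reprove it.

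With that in hand, the computation is short. Starting from \eqref{eqn:dynamic_proof_4}, take $\mathbb{E}_{\bv^t}$ of both sides. The first two terms give $\mathbb{E}_{\bv^t}[\phi_i^t(v_{i,k}^t) x_{i,k}^t(\bv^t)]$ by the identity above. The third term, $-\delta(\sum_{\ell=1}^n x_{i,\ell}^t(v_{i,k}^t, \bv_{-(i,k)}^t))\Delta_i^t(R_1^t,R_2^t)$, is left as is (note $\Delta_i^t$ is a constant, not random, once $(R_1^t,R_2^t)$ is fixed), yielding $-\delta\,\Delta_i^t(R_1^t,R_2^t)\,\mathbb{E}_{\bv^t}[\sum_{\ell=1}^n x_{i,\ell}^t(\bv^t)]$; I would write this inside the single expectation bracket to match the claimed form. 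The last term is the constant $c_{i,k}(\bv_{-(i,k)}^t)$, whose expectation over $\bv_{-(i,k)}^t$ equals $c_i^t$ by the definition right after \eqref{eqn:IR_proof_i} (the seller sets $c_{i,k}(\cdot)$ so that \eqref{eqn:IR_proof_i} holds with equality, and the common value of $\mathbb{E}_{\bv_{-(i,k)}^t}[c_{i,k}(\bv_{-(i,k)}^t)]$ was named $c_i^t$). Collecting these three contributions gives exactly the claimed expression.

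I do not anticipate a serious obstacle here — the statement is essentially bookkeeping that packages \eqref{eqn:dynamic_proof_4} together with the revenue-equivalence identity and the definition of $c_i^t$. The one point requiring a line of care is making sure the conditioning is handled correctly: $c_{i,k}$ depends only on $\bv_{-(i,k)}^t$, so $\mathbb{E}_{\bv^t}[c_{i,k}(\bv_{-(i,k)}^t)] = \mathbb{E}_{\bv_{-(i,k)}^t}[c_{i,k}(\bv_{-(i,k)}^t)] = c_i^t$, and one should note that $x_{i,\ell}^t$ for $\ell \neq k$ also contributes to the sum $\sum_{\ell=1}^n x_{i,\ell}^t$ appearing in the payment of buyer $(i,k)$, which is why the third term sums over all $n$ members of group $i$ rather than just buyer $k$ — this is a direct consequence of the adjusted-payment definition \eqref{eqn:dynamic_proof_3}. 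Beyond that, the proof is a two-line substitution.
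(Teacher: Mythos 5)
Your proposal is correct and follows essentially the same route as the paper, which likewise obtains the claim by taking expectations in \eqref{eqn:dynamic_proof_4}, invoking the classical virtual-value payment identity (Lemma~13.11 of \cite{AGT}), and using the definition of $c_i^t$ from the equality case of \eqref{eqn:IR_proof_i}. Your explicit bookkeeping of the $\Delta_i^t$ term and of $\mathbb{E}_{\bv^t}[c_{i,k}(\bv_{-(i,k)}^t)] = c_i^t$ simply spells out what the paper leaves as a citation.
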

The proof of this claim follows from the classical derivation of payment in static mechanism design (see \cite{AGT}[Lemma 13.11] for instance) along with the characterization of payment in \eqref{eqn:dynamic_proof_4}. As a consequence, the seller's utility at round $t$ is given by 
\begin{equation} \label{eqn:seller_utility_dynamic_1}
\mathbb{E}_{\bv^t} \left[ \sum_{i=1}^2 \sum_{k=1}^n \phi_i^t(v_{i,k}^t) x_{i,k}^t(\bv^t) 
- \sum_{i=1}^2 n \delta \Delta_i^t(R_1^t, R_2^t) \sum_{\ell=1}^n x_{i,\ell}^t(\bv^t)\right ]  + nc_1^t + n c_2^t.   
\end{equation}
Therefore, the seller's expected utility for time $t$ onwards is given by 
\begin{small}
\begin{equation} \label{eqn:seller_utility_dynamic_2}
\mathbb{E}_{\bv^t} \left[ \sum_{i=1}^2 \sum_{k=1}^n \phi_i^t(v_{i,k}^t) x_{i,k}^t(\bv^t) 
+ \sum_{i=1}^2 
\delta \left ( \mu^{t+1}((R_{i}^t-1)/\delta, R_{-i}^t/\delta) - 
n \Delta_i^t(R_1^t, R_2^t) \right )
\sum_{\ell=1}^n x_{i,\ell}^t(\bv^t) \right ]  + nc_1^t + n c_2^t.
\end{equation}
\end{small}
Notice that the seller wants to maximize \eqref{eqn:seller_utility_dynamic_2}. First, notice that, since $\phi_i^t(\cdot)$ is increasing by Assumption \ref{assumption:regularity}, if we allocate the item to group $i$, we would allocate it to the buyer with the highest value which we denote its value by $v_i^t = \max_k v_{i,k}^t$. 

Now, we need to determine whether the item is allocated to the buyer with the highest value in group one or to the buyer with the highest value in the second group. Given \eqref{eqn:seller_utility_dynamic_2}, it is straightforward to see that the item is allocated to group one if 
\begin{align}
& \phi_1^t(v_1^t) - \phi_2^t(v_2^t) \geq \\
& n \delta \left ( \Delta_1^t(R_1^t, R_2^t) - \Delta_2^t(R_1^t, R_2^t)\right ) + \delta \left( 
\mu^{t+1}((R_{2}^t-1)/\delta, R_{1}^t/\delta) - \mu^{t+1}((R_{1}^t-1)/\delta, R_{2}^t/\delta)
\right ), \nonumber
\end{align}
and otherwise it is allocated to the second group. Notice that this allocation is indeed monotone, and hence, along with the above payment, it is an EPIC and IR allocation. This gives us the set $G_i^t$ in the theorem's statement. Finally, to derive the update of interim functions, notice that the expected utility of seller follows from \eqref{eqn:seller_utility_dynamic_2}. The expected utility of buyers can also be derived from Claim \ref{claim:expected_payment_dynamic_1} similar to the proof of Corollary \ref{corollary:last_round}.
%%%%%%%%%%%%%%%%%%%%%%%%%%%%%%%%%%%%
%%%%%%%%%%%%%%%%%%%%%%%%%%%%%%%%%%%%
\subsubsection{Characterization of $\zeta_i^t$}
Recall that $c_i^t$ is given by
\begin{equation}
c_i^t = \delta \Delta_i^t(R_1^t, R_2^t) \zeta_i^t,
\end{equation}
where 
\begin{equation}
\zeta_i^t = \mathbb{E}_{\bv_{-(i,k)}^t}\left[ \sum_{\ell \neq k} \tilde{x}_{i,\ell}^t(\bv_{-(i,k)}^t) \right]     
\end{equation}
is the probability of group $i$ winning the object when they participate with one fewer buyer. Notice that, an argument similar to the one we made above implies that in the case where we have $n-1$ buyers from group $i$ and $n$ buyers from the other group, group $i$ wins the item if $(\max_{\ell \neq k} v_{i,\ell}^t, \max_{\ell} v_{(-i), \ell}^t) \in \tilde{G}_i^t$ with
\begin{align}\label{eqn:G_i_tilde}
\tilde{G}_i^t := \Big \{ (v_1, v_2) ~\Big \vert~  \phi_i^t(v_i) - \phi_{-i}^t(v_{-i}) \geq (n-1) \delta \Delta_i^t(R_1^t, R_2^t) - n \delta \Delta_{-i}^t(R_1^t, R_2^t) + (-1)^i \delta \Delta_0^t(R_1^t, R_2^t) \Big \}.
\end{align}
As a result, we have
\begin{equation}\label{eqn:zeta_defn}
\zeta_i^t = \int_{\tilde{G}_i^t} d(F_i^t(v_i))^{n-1} (F_{-i}^t(v_{-i}))^{n}.   
\end{equation}
%%%%%%%%%%%%%%%%%%%%%%%%%%%%%%%%%%%%
%%%%%%%%%%%%%%%%%%%%%%%%%%%%%%%%%%%%
\subsubsection{Relaxing Assumption \ref{assumption:no_reserve_price}} \label{sec:relax_assumption_allocation}
Revisiting our proof shows that we can relax Assumption \ref{assumption:no_reserve_price} and repeat the proof steps. In particular, when not allocating to either of the two groups is feasible, the utility of buyer $(i,k)$ in \eqref{eqn:dynamic_proof_1} will change to
\begin{align}\label{eqn:dynamic_proof_1_p}
& x_{i,k}^t(b, \bv_{-(i,k)}^t) v_{i,k}^t - p_{i,k}^t(b, \bv_{-(i,k)}^t)
+ \delta \nu_i^{t+1}(R_i^t/\delta, R_{-i}^t/\delta) \nonumber \\
& - \delta (\sum_{\ell=1}^n x_{i,\ell}^t(b, \bv_{-(i,k)}^t))\Delta'_{i,i}(R_1^t, R_2^t)
-  \delta (\sum_{\ell=1}^n x_{(-i),\ell}^t(b, \bv_{-(i,k)}^t))\Delta'_{i,-i}(R_1^t, R_2^t),
\end{align}
with 
\begin{align}
\Delta'_{i,i}(R_1^t, R_2^t) &= \nu_i^{t+1}(R_i^t/\delta, R_{-i}^t/\delta) - \nu_i^{t+1}((R_i^t-1)/\delta, R_{-i}^t/\delta), \\
\Delta'_{i,-i}(R_1^t, R_2^t)  &= \nu_i^{t+1}(R_i^t/\delta, R_{-i}^t/\delta) - \nu_i^{t+1}(R_i^t/\delta, (R_{-i}^t-1)/\delta).
\end{align}
We can follow similar steps to derive the payment and allocations. We do not repeat the proof steps here as they follow a similar argument.
%%%%%%%%%%%%%%%%%%%%%%%%%%%
%%%%%%%%%%%%%%%%%%%%%%%%%%%
%%%%%%%%%%%%%%%%%%%%%%%%%%%
\subsection{Proof of Proposition \ref{prop:seller_payment}}
\label{proof:seller_payment}
%%%%%%%%%%%%%%%%%%%%%%%%%%%
We begin by observing that, by independence, 
    \begin{equation*}
        \begin{aligned}
         \mathbb{E}\left[ n \delta \sum_{i=1}^2 \Delta_i^t(R_1^t, R_2^t) (\zeta_i^t - \mathbbm{1}(i^*=i))\right]  = n \delta \sum_{i=1}^2   \Delta_i^t(R_1^t, R_2^t) (\zeta_i^t - \mathbb{P}(G_i^t))
        \end{aligned}
    \end{equation*}
    %%%%%%%%%%%%%%%%%%%%%%%%%%
    Next, observe the following lemma. \begin{lemma}\label{lemma:zeta_minus_pgi}
        \begin{equation*}
            \begin{aligned}
                \zeta_i^t - \mathbb{P}(G_i^t) \geq \frac{-1}{n}
            \end{aligned}
        \end{equation*}
    \end{lemma}
    %%%%%%%%%%%%%%%%%%%%%%%%%%
    \begin{proof}
    By \cref{eqn:zeta_defn}
        \begin{equation*}
            \begin{aligned}
                \zeta_i^t - \mathbb{P}(G_i^t) = \int_{\tilde{G}_i^t} d(F_i^t(v_i))^{n-1} (F_{-i}^t(v_{-i}))^{n} - \int_{G_i^t} d \left(F_i^t(v_i) \right)^n \left(F_{-i}^t(v_{-i}) \right)^n. 
            \end{aligned}
        \end{equation*}
        where $G_i^t$ is defined in \cref{eqn:G_i_t}.
        Letting $h(v_i, v_{-i}) \coloneqq n^2 \left(F_i^t(v_i) \right)^{n-1} f_i^t(v_i)  \left(F_{-i}^t(v_{-i}) \right)^{n-1} f_{-i}^t(v_{-i})$, note that we may equivalently express 
        \begin{equation*}
            \begin{aligned}
                \zeta_i^t - \mathbb{P}(G_i^t) = \int_{\tilde{G}_i^t} \frac{n-1}{n} \frac{1}{F_i^t(v_i)} h(v, v_{-i}) d v_i d v_{-i} - \int_{G_i^t} h(v, v_{-i}) d v_i d v_{-i}
            \end{aligned}
        \end{equation*}
        where $\tilde{G}_i^t$ is defined in \cref{eqn:G_i_tilde}. Next, by comparing the thresholds that define regions $G_i^t$ and $\tilde{G}_i^t$ in equations \cref{eqn:G_i_t} and \cref{eqn:G_i_tilde}, we can see that 
            \begin{equation*}
                \begin{aligned}
                    G_i^t \subseteq \tilde{G}_i^t.
                \end{aligned}
            \end{equation*}
    It follows from this observation and the fact that $F_i^t(v_i) \leq 1$ for all $v_i$ that
     \begin{equation*}
            \begin{aligned}
                \zeta_i^t - \mathbb{P}(G_i^t) \geq \int_{ 
            {G}_i^t} \frac{n-1}{n} \frac{1}{F_i^t(v_i)} h(v, v_{-i}) d v_i d v_{-i} - \int_{G_i^t} h(v, v_{-i}) d v_i d v_{-i}  = \frac{-\mathbb{P}(G_i^t)}{n}  
            \end{aligned}
        \end{equation*}
    \end{proof}
    %%%%%%%%%%%%%%%%%%%%%%%%%%
    Now, by \cref{lemma:zeta_minus_pgi}, 
    \begin{equation*}
        \begin{aligned}
            \mathbb{E}\left[ \delta \Delta_i^t(R_1^t, R_2^t) (\zeta_i^t - \mathbbm{1}(i^*=i))\right] \geq -\delta \left(   \Delta_1^t(R_1^t, R_2^t) + \Delta_2^t(R_1^t, R_2^t)\right)
        \end{aligned}
    \end{equation*}
    Observe that, from here, it suffices to show that $\Delta_i^t(R_1^t, R_2^t) \leq \mathcal{O}(\frac{1}{n})$ for all $i$. Instead, we show the following stronger lemma to conclude the proof.
    %%%%%%%%%%%%%%%%%%%%%%%%%%
    \begin{lemma}\label{lemma:nu_bound}
    For any $i$ and $t$
        \begin{equation*}
            \begin{aligned}
                \nu_i^t(R_1^t, R_2^t) \leq \mathcal{O}\left(\frac{1}{n}\right)
            \end{aligned}
        \end{equation*}
    \end{lemma}
    %%%%%%%%%%%%%%%%%%%%%%%%%%
    \begin{proof}
        We demonstrate the claim by backwards induction. Recall that round $T$ is simply the static case. Let $i^*$ denote the index of the winner's group and let $k^*$ denote the index of the winner in group $i^*$. Observe that, in round $T$, a buyer $(i,k)$'s utility is given by 
        \begin{equation*}
            \begin{aligned}
                \mathbbm{1}(i^*=i)\mathbbm{1}(k^*=k) \left( v^T_{i,k} - \max_{i, -k} v^T_{i,k} \right) \leq V. 
            \end{aligned}
        \end{equation*}
        Since buyer values within group $i$ are i.i.d., it follows that 
        \begin{equation}\label{eqn:nu_bound_V}
            \begin{aligned}
                \nu_{i}^T(R_1^t, R_2^t) \leq \frac{V}{n}
            \end{aligned}
        \end{equation}
        for all $i, k$.
        Now, supposing that the claim holds for round $t+1$, observe that, in round $t$, the seller faces one of two cases: either (1) they must allocate to a buyer in a particular group such that the auction remains feasible or (2) they may allocate to either group. We show that the claim holds in both cases. 

        In case 1, combining \cref{eqn:nu_bound_V} and \cref{eqn:dynamic_one_group_updates_nu_i} yields 
        \begin{equation*}
            \begin{aligned}
                \nu_{i}^t(R_1^t, R_2^t) \leq \frac{V}{n} + \delta \nu_{i,k}^{t+1}(R_1^t, R_2^t)
            \end{aligned}
        \end{equation*}
        and the claim follows by our inductive hypothesis. 

        In case 2, combining \cref{eqn:nu_bound_V} and \cref{eqn:dynamic_two_groups_updates_nu} yields
        \begin{equation*}
            \begin{aligned}
                \nu_{i}^t(R_1^t, R_2^t) \leq \frac{V}{n} + \delta \nu_{i,k}^{t+1}(R_1^t, R_2^t) - \Delta_i^t(R_1^t, R_2^t) \zeta_i^t \leq \frac{V}{n} + \delta \nu_{i,k}^{t+1}(R_1^t, R_2^t)
            \end{aligned}
        \end{equation*}
        where the second inequality follows from Assumption \ref{assumption:pos_Delta}.

    It follows that, in either case, 
    \begin{equation*}
        \begin{aligned}
            \nu_{i}^t(R_1^t, R_2^t) \leq \frac{V}{n} \sum_{\tau = 0}^{T-t} \delta^{\tau} \leq \frac{V}{n} \frac{1}{1-\delta}
        \end{aligned}
    \end{equation*}
    \end{proof}
    %%%%%%%%%%%%%%%%%%%%%%%%%%
    $\blacksquare$
%%%%%%%%%%%%%%%%%%%%%%%%%%%
%%%%%%%%%%%%%%%%%%%%%%%%%%%
%%%%%%%%%%%%%%%%%%%%%%%%%%%
\subsection{Relaxation of Assumption \ref{assumption:regularity} by Ironing}
\label{proof:ironing}
%%%%%%%%%%%%%%%%%%%%%%%%%%%
Here we demonstrate that the regularity assumption can be relaxed using the \emph{ironing} technique \cite{myerson1981optimal}. For $i \in [2]$, let $h_i(.)$ be the virtual value function in the quantile space, i.e., $h_i(q) = \phi_i(F_i^{-1}(q))$ for any $q \in [0,1]$. Also, let $H_i$ be its cumulative virtual value function, i.e, $H_i(q) = \int_0^q h_i(q') dq'$. Now, let us recall the ironing technique from \cite{myerson1981optimal}. 

We define $G_i: [0,1] \to \mathbb{R}$ as the convex hull of $H_i$, i.e., the largest convex function underestimator of $H_i$, and denote its derivative by $g'_i(\cdot)$. Now, the ironed virtual value function $\tilde{\phi}_i(\cdot)$ is simply defined as $\tilde{\phi}_i(v) = g_i(F_i(v))$. Notice that, given that we have dropped Assumption \ref{assumption:regularity}, $\phi_i(\cdot)$ is not necessarily monotone. However, given the ironing procedure, $\tilde{\phi}_i(\cdot)$ is monotone. In fact, when Assumption \ref{assumption:regularity} holds, i.e., for regular distributions, $H_i(\cdot)$ is convex itself, and hence $g_i = h_i$ which implies $\tilde{\phi}_i = \phi_i$. 

Now, as \cite{myerson1981optimal} establishes, the seller's revenue can be cast as 
$$\mathbb{E}_{\pmb{v}} \left [ \sum_{i \in [2],k \in [n]} \phi_{i}(v_{i,k}) x_{i,k}(\pmb{v}) \right] = 
\mathcal{R} - \mathcal{E}
$$
with 
$$
\mathcal{R} := \mathbb{E}_{\pmb{v}} \left [ \sum_{i \in [2],k \in [n]} \tilde{\phi}_{i}(v_{i,k}) x_{i,k}(\pmb{v}) \right]$$
and 
$$
\mathcal{E} := \sum_{i \in [2],k \in [n]} \mathbb{E}_{\pmb{v}_{-(i,k)}} \left [ \int_{\underline{v}_i}^{\bar{v}_i} (H_i(F_i(v)) - G_i(F_i(v))) ~dx_{i,k}(v, \pmb{v}_{-(i,k)}) \right ]. 
$$

Note that the seller's problem is to optimize $\mathcal{R} - \mathcal{E}$ subject to the fairness constraint. As \cite{myerson1981optimal} shows, $\mathcal{E}$ is non-negative. Hence, if we maximize $\mathcal{R}$ subject to the fairness constraint and show that, for the optimal allocation, $\mathcal{E}$ happens to be zero, then that allocation is also the maximizer of $\mathcal{R} - \mathcal{E}$ subject to the fairness constraint.

Now, notice that since $\tilde{\phi}_i$ is monotone, our current result gives us the allocation that maximizes $\mathcal{R}$ subject to the fairness constraint, which, as Theorem \ref{thrm:fair_allocation_static} and \ref{theorem:dynamic_both_feasible} suggest, is in the form of $\max_{k} \tilde{\phi}_1(v_{1,k}) - \max_{k} \tilde{\phi}_2(v_{2,k}) \lesseqgtr \gamma$ for some $\gamma$.

Therefore, it suffices to show that for allocations in this form, $\mathcal{E}$ is zero. Notice that $\mathcal{E}$ can only be positive when $H_i(F_i(v)) > G_i(F_i(v))$ and $dx_{i,k}(v, \pmb{v}_{-(i,k)}) > 0$. However, by the definition of the convex hull, when $H_i(F_i(v)) > G_i(F_i(v))$, then $G_i$ is linear in a neighborhood of $v$, which means $g'_i(v) = 0$, implying that $\tilde{\phi}_i(v)$ is constant in a neighborhood of $v$. Now, as the value of others is fixed and the virtual value of buyer $(i,k)$ is also constant in a neighborhood of $v$, given the above form of the allocation, $x_{i,k}(v, \pmb{v}_{-(i,k)})$ also remains constant in that neighborhood, which implies $dx_{i,k}(v, \pmb{v}_{-(i,k)}) = 0$. This completes the proof that $\mathcal{E} = 0$.
%%%%%%%%%%%%%%%%%%%%%%%%%%%%%%%%
%%%%%%%%%%%%%%%%%%%%%%%%%%%%%%%%
%%%%%%%%%%%%%%%%%%%%%%%%%%%%%%%%
\subsection{Proof of \cref{proposition:approx_delta_leq1}}
\label{proof:approximation1}
We first define $\bx'$ as follows: We assume that the auction is designed to run for $T_0$ rounds (for a chosen value of $T_0$ determined later) and use the recursive functions developed earlier to determine the optimal fair allocation under the  ex post fairness constraint at level $\alpha_i - \varepsilon$ for group $i$ (computed over the first $T_0$ rounds). For the remaining $T-T_0$ rounds, we conduct the standard second-price auction.

First, we characterize the approximation $\bx'$ in relation to the exact fair allocation. Let ${{x^*}^t_i}$ denote the allocation to group $i$ at time $t$ under the optimal allocation that satisfies the fairness constraint \eqref{eqn:fairness_constraint} at a level $\alpha_i$ for $i \in \{1, 2\}$. Next, we consider the fairness guarantees of $\bx^* \coloneqq \{ {x^*}^t_i\}_{t=1}^T$ with early stopping. 

\begin{claim}\label{claim:earlystopping_T0}
    If the allocation $\bx^*$ is stopped at time  $T_0 \coloneqq \log(\varepsilon)/\log(\delta) \leq T$,  then $\{{x^*}^t_i\}_{t=1}^{T_0}$ satisfies the ex post fairness constraint at a level $(\alpha_i - \varepsilon)$.
\end{claim}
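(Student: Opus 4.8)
The plan is to fix the ex ante ($t=1$) instance of the fairness constraint, split the discounted allocation mass generated by the optimal $T$-horizon mechanism $\bx^*$ into its first $T_0$ rounds and its tail, bound the tail by a geometric series, and observe that passing from horizon $T$ to horizon $T_0$ only enlarges the normalizing factor $\frac{1-\delta}{1-\delta^{T}}$. The residual slack $\varepsilon$ in $\alpha_i-\varepsilon$ is precisely what is needed to absorb the tail, provided $T_0$ is large enough that $\delta^{T_0}\le\varepsilon$.

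\textbf{Key steps.} First I would invoke the monotonicity recorded below \eqref{eqn:fairness_constraint}: since \eqref{eqn:fairness_constraint} holding at the terminal round (for every history) implies it at all earlier rounds, it is enough, for the $T_0$-horizon problem, to verify the constraint at $t=T_0$ for every realization of the first $T_0-1$ allocations; as the computation is identical to that of the ex ante instance $t=1$, I describe the latter. Write $S_1:=\mathbb{E}\big[\sum_{\tau=1}^{T_0}\delta^{\tau-1}\sum_{k=1}^n {x^*}_{i,k}^{\tau}\big]$ and $S_2:=\mathbb{E}\big[\sum_{\tau=T_0+1}^{T}\delta^{\tau-1}\sum_{k=1}^n {x^*}_{i,k}^{\tau}\big]$. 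Feasibility of $\bx^*$ at level $\alpha_i$ gives $\frac{1-\delta}{1-\delta^{T}}(S_1+S_2)\ge\alpha_i$. Since at most one item is sold per round, $\sum_k {x^*}_{i,k}^{\tau}\le 1$, so $S_2\le\sum_{\tau=T_0+1}^{T}\delta^{\tau-1}=\frac{\delta^{T_0}-\delta^{T}}{1-\delta}$ and hence $\frac{1-\delta}{1-\delta^{T}}S_2\le\frac{\delta^{T_0}-\delta^{T}}{1-\delta^{T}}\le\delta^{T_0}$, which yields $\frac{1-\delta}{1-\delta^{T}}S_1\ge\alpha_i-\delta^{T_0}$. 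Finally, $T_0<T$ and $\delta\in(0,1)$ give $1-\delta^{T_0}\le 1-\delta^{T}$, so $\frac{1-\delta}{1-\delta^{T_0}}S_1\ge\frac{1-\delta}{1-\delta^{T}}S_1\ge\alpha_i-\delta^{T_0}\ge\alpha_i-\varepsilon$, i.e.\ the truncated allocation meets \eqref{eqn:fairness_constraint} at level $\alpha_i-\varepsilon$ over $T_0$ rounds.

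\textbf{Main obstacle.} Each individual inequality is routine; the two steps needing care are getting the two normalizing constants to point the right way (handled by $1-\delta^{T_0}\le 1-\delta^{T}$, which holds since $T_0<T$) and the tail estimate $\frac{\delta^{T_0}-\delta^{T}}{1-\delta^{T}}\le\delta^{T_0}$ (equivalent to $\delta^{T_0}\le 1$). The one quantitative point that must be pinned down precisely is that the prescribed $T_0$ really forces $\delta^{T_0}\le\varepsilon$ --- equivalently that $T_0\ge\log\varepsilon/\log\delta$ --- which is exactly the calibration the choice of $T_0$ is meant to supply, and it is the step I would check most carefully.
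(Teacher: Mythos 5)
Your proposal is correct and follows essentially the same route as the paper's proof: split the discounted allocation into the first $T_0$ rounds and the tail, bound the tail geometrically using $\sum_k x_{i,k}^\tau \le 1$ and the calibration $\delta^{T_0}\le\varepsilon$, and use $T_0\le T$ to compare the normalizations $\tfrac{1-\delta}{1-\delta^{T_0}}$ and $\tfrac{1-\delta}{1-\delta^{T}}$. You also correctly identify that the stated $T_0=\delta/\varepsilon$ should be read as the calibration $\delta^{T_0}=\varepsilon$ (i.e., $T_0=\log\varepsilon/\log\delta$), which is exactly what the paper's own argument uses.
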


\begin{proof}
    Observe that, since the allocation $\bx^*$ satisfies the fairness constraint at level $\alpha_i$, the following inequality holds:
\begin{equation}\label{eqn:avg_allocation_fairness}
    \begin{aligned}
        \sum_{t=1}^{T_0}  \delta^{t-1} {x^*}^t_i +  \mathbb{E} \left [ \sum_{t=T_0+1}^T  \delta^{t-1} {x^*}^t_i \right ] \geq \alpha_i \sum_{t=0}^{T-1} \delta^{t}.
    \end{aligned}
\end{equation}
Now, since ${x^*}^t_i \in \{0, 1\}$ for all $t$, we can see that 
\begin{equation*}
    \begin{aligned}
        \sum_{t=T_0+1}^T \delta^{t-1} {x^*}^t_i \leq \delta^{T_0} \sum_{t=0}^{T - 1 - T_0} \delta^{t} &= \varepsilon \sum_{t=0}^{T - 1 - T_0} \delta^{t}
    \end{aligned}
\end{equation*}
where the second equality follows from the fact that $\delta^{T_0} = \varepsilon$ by construction.  From here, we can see that
\begin{equation*}
    \begin{aligned}
    \frac{\sum_{t=1}^{T_0}  \delta^{t-1} {x^*}^t_i }{\sum_{t=0}^{T_0 -1} \delta^t } \geq \alpha_i \frac{\sum_{t=0}^{T-1} \delta^{t}}{{\sum_{t=0}^{T_0 -1} \delta^t }} - \varepsilon \frac{\sum_{t=0}^{T - 1 - T_0} \delta^{t}}{\sum_{t=0}^{T_0 -1} \delta^t } \geq (\alpha_i - \varepsilon) \frac{\sum_{t=0}^{T-1} \delta^{t}}{\sum_{t=0}^{T_0 -1} \delta^t } \geq  \alpha_i - \varepsilon
    \end{aligned}
\end{equation*}
where the last two inequalities follow from the fact that $\delta^t >0$ for all $t$ and that $T_0 \leq T$. It follows that
\begin{equation}\label{eqn:approx_fairness_T0_rounds}
    \begin{aligned}
        \sum_{t=1}^{T_0}  \delta^{t-1} {x^*}^t_i \geq (\alpha_i - \varepsilon) \sum_{t=0}^{T_0 -1} \delta^t. 
    \end{aligned}
\end{equation}
\end{proof}

We now use this claim to show the following claim regarding the fairness guarantee of $\bx'$. 

\begin{claim}
    Over $T$ rounds, $\bx'$ satisfies the fairness condition at a level $(1 - \varepsilon)(\alpha_i - \varepsilon)$ for $i \in \{1, 2\}$.
\end{claim}

\begin{proof}
    Analogously to the proof of the preceding claim, we demonstrate the claim by showing 
     \begin{equation}\label{eqn:fairness_approx_1}
     \begin{aligned}
         \sum_{t=1}^T \delta^{t-1} {x'}^t_i \geq (1 - \varepsilon) (\alpha_i - \varepsilon) \sum_{t=0}^{T-1} \delta^t.
     \end{aligned}
    \end{equation}
    Next, by expanding the left hand side and applying both the result of Claim  \ref{claim:earlystopping_T0} and the observation that, by construction of $\bx'$,  ${x'}^t_i = {x^*}^t_i$ for $t \in [T_0]$, we can see that  
    \begin{equation}\label{eqn:fairness_approx_2}
     \begin{aligned}
         \sum_{t=1}^{T_0} \delta^{t-1} {x'}^t_i + \sum_{t=T_0+1}^{T} \delta^{t-1} {x'}^t_i = \sum_{t=1}^{T_0} \delta^{t-1} {x^*}^t_i + \sum_{t=T_0+1}^{T} \delta^{t-1} {x'}^t_i\geq (\alpha_i - \varepsilon) \sum_{t=0}^{T_0 -1} \delta^t + \sum_{t=T_0+1}^{T} \delta^{t-1} {x'}^t_i
     \end{aligned}
    \end{equation}

    In other words, it suffices to show
    \begin{equation}\label{eqn:fairness_approx_3}
        \begin{aligned}
            (\alpha_i - \varepsilon) \sum_{t=0}^{T_0 -1} \delta^t + \sum_{t=T_0+1}^{T} \delta^{t-1} {x'}^t_i \geq (1 - \varepsilon) (\alpha_i - \varepsilon) \sum_{t=0}^{T-1} \delta^t = 
            (\alpha_i - \varepsilon) \sum_{t=0}^{T-1} \delta^t - \varepsilon (\alpha_i - \varepsilon) \sum_{t=0}^{T-1} \delta^t.
        \end{aligned}
    \end{equation}
    By rearranging, observe that it is equivalent to show that 
    \begin{equation}\label{eqn:fairness_approx_4}
        \begin{aligned}
             \sum_{t=T_0+1}^{T} \delta^{t-1} {x'}^t_i  = 
            (\alpha_i - \varepsilon) \sum_{t=T_0}^{T-1} \delta^t - \varepsilon (\alpha_i - \varepsilon) \sum_{t=0}^{T-1} \delta^t
        \end{aligned}
    \end{equation}
    
    Now, since $\delta \in (0, 1)$ and $T \leq T_0$, we can observe that 
     \begin{equation*}
         \begin{aligned}
             \delta^{T_0} \frac{1 - \delta^T}{1 - \delta} \geq \frac{\delta^{T_0} - \delta^T}{1 - \delta}.
         \end{aligned}
     \end{equation*}
     Substituting $\delta^T_0 = \varepsilon$ and recognizing these expressions as geometric series yields
     \begin{equation*}
         \begin{aligned}
             \varepsilon \sum_{t=0}^{T-1} \delta^t \geq \sum_{t = T_0}^{T-1} \delta^t. 
         \end{aligned}
     \end{equation*}
    Now, noticing that $\alpha_i - \varepsilon \geq 0$ and rearranging, we see that 
    \begin{equation*}
        \begin{aligned}
            0 \geq (\alpha_i - \varepsilon)\sum_{t = T_0}^{T-1} \delta^t - \varepsilon(\alpha_i - \varepsilon) \sum_{t=0}^{T-1} \delta^t.
        \end{aligned}
    \end{equation*}

  Notice that \Cref{eqn:fairness_approx_4} follows by the fact that since $\delta^{t-1}{x'}^t_i \geq 0$ for all $t$. The claim follows.
\end{proof}

Next, we consider the computational guarantee of $x'$. Since the fair allocation algorithm is recursive, its computational complexity increases exponentially in the number of rounds. In the approximation, we conduct exactly $T_0$ rounds of the fair allocation, therefore, the computational complexity can be bounded as follows:
\begin{equation*}
    \begin{aligned}
        \mathcal{O}\left( 2^{T_0}\right) \leq \mathcal{O}\left( e^{T_0}\right)  = \mathcal{O}\left(\varepsilon^{1/\log(\delta)} \right) =\mathcal{O}\left( \frac{1}{\varepsilon}^{\frac{1}{\log(1/\delta)}}\right).
    \end{aligned}
\end{equation*}

Finally, we consider seller utility guarantees of $\bx'$. First, we note that, by Claim \ref{claim:earlystopping_T0}, the seller utility up to time $T_0$ under $\bx'$ is at least that of $\bx^*$ up to $T_0$ because the set of allocations over which $\bx'$ is optimal over the first $T_0$ rounds in terms of seller utility is a subset of those allocations over which $\bx^*$ is optimal. Second, since, under $\bx'$ we perform an unconstrained second-price auction in the remaining $T - T_0$ rounds, it follows that $\bx'$, must also guarantee a seller utility that is at least that of the optimal fair allocation $\bx^*$ over the second interval. Therefore, $\bx'$ guarantees a seller utility that is at least that of $\bx^*$ over all $T$ rounds. $\blacksquare$

%%%%%%%%%%%%%%%%%%%%%%%%%%%
%%%%%%%%%%%%%%%%%%%%%%%%%%%
%%%%%%%%%%%%%%%%%%%%%%%%%%%
\subsection{Proof of \cref{proposition:approx_delta_approx1}}
\label{proof:approximation2}
%%%%%%%%%%%%%%%%%%%%%%%%%%%
First, we give the full proposition statement: 
\begin{proposition}\label{proposition:approx_delta_approx1_full}
Suppose Assumptions \ref{assumption:regularity}, \ref{assumption:no_reserve_price} , and \ref{assumption:oracle} hold and that $\delta < 1$. Then, for any $\varepsilon \in (0, \min_i(\alpha_i)), \beta > 1 - \delta$, there exists an approximation to the optimal allocation $\bx''$ with the following properties: 
\begin{enumerate}
    \item $\bx''$ satisfies the fairness constraint over all rounds  at a level of at least $(1-\varepsilon)(1-\beta)^2(\alpha_i - \varepsilon)$.
    \item $\bx''$ guarantees that the seller's total utility is at least $(1 - \beta)$ of that of the optimal allocation.
    \item $\bx''$ can be computed by calling the oracle $\mathcal{O}\left( {\frac{1}{\varepsilon}}^{\frac{1}{\beta} \log\left(\frac{1}{1 - \delta}\right)+1}\right)$ times.
\end{enumerate}
\end{proposition}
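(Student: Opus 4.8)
The plan is to build $\bx''$ by composing two approximation ideas: the early-stopping truncation already used for $\bx'$ in \cref{proposition:approx_delta_leq1}, and a new ``discontinuous discounting'' step that replaces the true discount factor $\delta$ by a piecewise-constant surrogate on a small number of buckets, so that within each bucket the residual minimum allocation pair $(R_1^t,R_2^t)$ takes only polynomially many values (as in the $\delta=1$ reasoning of Fact \ref{fact:delta_equal_1_complexity}). Concretely, I would first fix the stopping time $T_0 := \log(\delta)/\log(\varepsilon)$ exactly as in the proof of \cref{proposition:approx_delta_leq1}, so that $\delta^{T_0}=\varepsilon$ and, by Claim \ref{claim:earlystopping_T0}, running the exact fair mechanism only on rounds $1,\dots,T_0$ already costs the seller nothing and loses only an $\varepsilon$-slack in the fairness level. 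Then I would partition $\{1,\dots,T_0\}$ into consecutive buckets $B_1,\dots,B_m$, and on bucket $B_j$ I would run the recursive mechanism of Section 4 but with $\delta$ replaced by a constant $\hat\delta_j$ equal to the value of $\delta$ at (say) the left endpoint of $B_j$; choosing the buckets so that $\delta$ varies by at most a factor $(1-\beta')$ across each bucket (for a suitable $\beta'$ tied to $\beta$) makes each bucket behave like the $\delta=1$ case up to a $(1-\beta')$ multiplicative distortion.

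The key steps, in order, would be: (1) record the truncation lemma (Claim \ref{claim:earlystopping_T0}) and its seller-utility and fairness consequences, unchanged from \cref{proof:approximation1}; (2) show that replacing $\delta$ by $\hat\delta_j$ on bucket $B_j$ perturbs both the seller's objective and each group's accumulated discounted allocation multiplicatively by at most $(1+O(\beta'))$ per bucket, using that $\hat\delta_j/\delta^t \in [1,(1-\beta')^{-1}]$ for $t\in B_j$ and that all the interim functions $\mu^t,\nu_i^t$ are built by the recursions \eqref{eqn:dynamic_two_groups_updates} from nonnegative integrals against $dF_{\max}^t$; (3) bound the number of distinct $(R_1^t,R_2^t)$ values that arise inside a bucket of length $L$ when the within-bucket discount is the single constant $\hat\delta_j$ — by Fact \ref{fact:resid_min_alloc_update} with constant discount, $R_i$ after $\ell$ rounds lies in a set of size $O(2^{\ell})$ but, crucially, when scaled back to the bucket's common factor it only takes $O(\text{poly}(L))$ effective values (this is the analogue of the ``$T$ values'' count in Fact \ref{fact:delta_equal_1_complexity}); (4) choose $L = O(\beta^{-1})$ so that the factor-$(1-\beta)$-per-bucket bound, compounded over $m = T_0/L = O(\beta^{-1}\log(1/\varepsilon)\cdot(\dots))$ buckets, still telescopes to an overall $(1-\beta)$ seller-utility guarantee and $(1-\varepsilon)(1-\beta)^2$ fairness guarantee after combining with the truncation slack; (5) tally the oracle calls: each bucket costs $\text{poly}$ in its internal state count, namely $O((1/\varepsilon)^{(1/\beta)\log(1/(1-\delta))})$ after substituting $T_0$ and $L$, which is the stated bound. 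Finally I would translate from $\beta,\varepsilon$ to the constant $c<\delta^2$ in the short statement of \cref{proposition:approx_delta_approx1} by choosing $\varepsilon$ and $\beta$ small enough that $(1-\varepsilon)(1-\beta)^2 \ge c/\alpha_i \cdot \alpha_i$ componentwise and $(1-\beta)\ge c$, which is possible precisely because $c<\delta^2\le 1$; the oracle count then becomes $\text{Poly}(1/(1-\delta))$ with degree and coefficients depending on $c$.

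The main obstacle I expect is step (2)–(3): controlling how the surrogate discount factor distorts the \emph{feasibility structure} of the residual constraint, not just the objective value. Because $R_i^t$ enters the recursion \eqref{eqn:residual_fairness_constraint} through division by $\delta^{t-1}$, swapping $\delta\mapsto\hat\delta_j$ changes which $(R_1^t,R_2^t)$ pairs are feasible at all (regime I versus II versus III in Section 4), and one must argue that the buckets can be chosen so that the surrogate mechanism's feasible set is contained in (or close enough to) the true one, and that the small mismatch can be absorbed into the $(1-\varepsilon)$ fairness slack already built in by the truncation. A careful accounting of how the bucket-wise rescaling interacts with Fact \ref{fact:resid_min_alloc_update}, together with the observation that reducing every target from $\alpha_i$ to $\alpha_i-\varepsilon$ creates exactly the ``room'' needed to absorb the per-bucket rounding of $R_i^t$ onto a polynomial grid, is the crux of the argument; the seller-utility half is comparatively routine once the feasible sets are nested, since dropping to a smaller feasible set of allocations can only weakly lower the attained optimum and the last $T-T_0$ rounds contribute an unconstrained second-price auction that dominates.
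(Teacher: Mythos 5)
Your construction is essentially the paper's: truncate at $T_0$ with $\delta^{T_0}=\varepsilon$, lower the targets to $\alpha_i-\varepsilon$, replace the cumulative discount on the first $T_0$ rounds by a piecewise-constant bucketed weight with bucket length $\ell$ chosen so $\delta^{\ell}\simeq 1-\beta$, run the exact recursion under the surrogate weights, and finish with an unconstrained second-price auction; your final guarantees and oracle count also match. However, two steps as you state them would not go through. In step (4) you invoke a ``factor-$(1-\beta)$-per-bucket bound, compounded over $m=T_0/L$ buckets'' that ``telescopes to an overall $(1-\beta)$.'' If the distortion genuinely compounded per bucket you would get $(1-\beta)^{m}\approx\varepsilon$ of the optimum, not $(1-\beta)$. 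The correct argument, which the paper packages in the single inequality \eqref{eqn:discount_relationship}, is that for every round $t$ the ratio between the surrogate weight $\delta^{\ell k}$ (for $t$ in bucket $k$) and the true weight $\delta^{t-1}$ lies in $[1,(1-\beta)^{-1}]$ uniformly, because each round belongs to exactly one bucket; summing nonnegative per-round quantities therefore loses a single global factor $(1-\beta)$, with no compounding at all. You state this uniform ratio in step (2), so the repair is to deploy it globally, but step (4) as written fails.

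The second, more substantive omission is that you never specify the fairness level at which the surrogate (bucketed-discount) recursion is solved, and you propose to absorb the discounting mismatch ``into the $(1-\varepsilon)$ slack already built in by the truncation.'' That slack is fully consumed by Claim \ref{claim:earlystopping_T0} and cannot do double duty. The paper's construction lowers the surrogate targets by an additional factor to $(1-\beta)(\alpha_i-\varepsilon)$ with respect to the bucketed weights; this extra reduction is precisely what makes the truncated optimum $\bx^*$ feasible for the surrogate problem (the bucketed weights over-weight both the allocation sum and the normalizer), which is what the seller-utility comparison hinges on, and converting back to the true weights then yields the $(1-\beta)^2(\alpha_i-\varepsilon)$ fairness level on the first $T_0$ rounds, hence $(1-\varepsilon)(1-\beta)^2(\alpha_i-\varepsilon)$ overall. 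Relatedly, your containment direction is backwards: for the utility guarantee you need the surrogate feasible set to \emph{contain} the truncation of $\bx^*$ (so optimizing over it does at least as well), not to be contained in the true feasible set; the fairness guarantee is then the weight-ratio conversion, not a set inclusion. A minor slip: the bucket length should be $\ell\approx\beta/\log(1/\delta)\approx\beta/(1-\delta)$ rather than $O(\beta^{-1})$, although your final count of oracle calls is consistent with the correct choice.
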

Note that $\varepsilon, \beta$ can be chosen to achieve a given $c \leq \delta^2$ such that the proof statement in the body holds.
%%%%%%%%%%%%%%%%%%%%%%%%%%%

Now, we proceed with the proof by first defining $x''$ as follows: As in the approximation $\bx'$ presented in \cref{proposition:approx_delta_leq1}, we assume under $\bx''$ that the fair auction runs over $T_0$ rounds (the same value $T_0$ which we detail later) where we partition these $T_0$ rounds into $T_0/\ell$ buckets of $\ell$ rounds (we later also define $\ell$ in detail) and approximate the discount factor as being constant within each bucket. Then, within each bucket, we recursively calculate the fair allocation at level $(1 - \beta)(1-\alpha_i)$ with respect to the approximated discontinuous discounting scheme. For the remaining $T - T_0$ rounds, we conduct a standard second-price auction. 

In particular, as before, we take $T_0  \coloneqq \frac{\log(\varepsilon)}{\log(\delta)}$ and we choose $\ell$ such that $\delta^\ell \simeq 1 - \beta$. Without loss of generality, we assume $T_0/\ell$, $\ell$ are integers. In the $k$th bucket, we approximate the discount factor of each round in the bucket with $\delta^{k-1}$ for $k \in [T_0/\ell]$. 

Now, we characterize the allocation under $\bx''$ in relation to the exact fair allocation $\bx^*$ as defined in the proof of \cref{proposition:approx_delta_leq1} to show its fairness guarantee.  First, observe the following relationship between the true discount factors and those of the discontinuous discounting scheme in $\bx''$:
\begin{equation}\label{eqn:discount_relationship}
    \begin{aligned}
        \sum_{t=0}^{T_0-1} \delta^t \geq (1 - \beta) \sum_{k=1}^{T_0/\ell} \ell \cdot \delta^{\ell (k-1)}.
    \end{aligned}
\end{equation}
Therefore, by Claim \ref{claim:earlystopping_T0} and our construction of the discontinuous discounting scheme, $\bx^*$ satisfies the fairness constraint at the level $(1-\beta)(\alpha_i-\varepsilon)$ 
 with respect to the discontinuous discounting scheme over the first $T_0$ rounds, i.e.,
\begin{equation*}
    \begin{aligned}
        \sum_{k=0}^{T_0/\ell-1}  \delta^{\ell k} \sum_{t = k \ell + 1}^{k \ell} {x_i^*}^t \geq \sum_{t=1}^{T_0} \delta^{t-1} {x_i^*}^t \geq (1 - \beta)(\alpha_i - \varepsilon) \sum_{k=1}^{T_0/\ell} \ell \cdot \delta^{\ell (k-1)}.
    \end{aligned}
\end{equation*}
From here, we characterize $\bx''$ as the allocation such that, in the first $T_0$ rounds, we perform the fair allocation procedure that satisfies fairness constraint at a level $ (1 - \beta) (\alpha_i - \varepsilon)$ with respect to the discontinuous discounting scheme and a standard second-price auction in the remaining $T-T_0$ rounds. It follows that $\bx''$ satisfies
\begin{equation*}
    \begin{aligned}
        \sum_{t=0}^{T_0-1} \delta^t {x_i''}^t \geq (1 - \beta) \sum_{k=0}^{T_0/\ell-1}  \delta^{\ell k} \sum_{t = k \ell + 1}^{k \ell} {x_i''}^t \geq (1 - \beta)^2 (\alpha_i - \varepsilon) \sum_{t=0}^{T_0-1} \delta^t.
    \end{aligned}
\end{equation*}
In other words, over the first $T_0$ rounds, $\bx''$ satisfies the fairness constraint at a level of at least $(1 - \beta)^2 (\alpha_i - \varepsilon)$ with respect to the original discounting scheme. By the same manner as the proof of \cref{proposition:approx_delta_leq1}, it follows that, over $T$ rounds, $\bx''$ guarantees fairness at the level at least $(1-\varepsilon)(1 - \beta)^2(\alpha_i - \varepsilon)$. Finally, by \eqref{eqn:discount_relationship} and the same logic as the proof of \cref{proposition:approx_delta_leq1}, we conclude that the seller's utility is at least $(1 - \beta)$ that of $\bx^*$ under $\bx''$ over $T$ rounds. $\blacksquare$
%%%%%%%%%%%%%%%%%%%%%%%%%%%
\subsection{Proof of \cref{theorem:multi_group_static}}
\label{proof:multi_group_static}
%%%%%%%%%%%%%%%%%%%%%%%%%%%

The multi-group result follows by generalizing from the proof of \cref{thrm:fair_allocation_static} in \cref{proof:static_alloc}. 

First, fix $i, j \in [L]$ such that $i \neq j$. Let $\mathcal{S}$ be the set of $(\phi_{i}(v_{i,k}))_{i,k}$ for which the seller allocates the item to buyers in either group $i$ or group $j$ in the optimal allocation $\tilde{x}$, i.e.,
\begin{equation}\label{eqn:S_multi}
\mathcal{S} = \left \{ (\phi_{i}(v_{i,k}))_{i,k} \Big \vert
\exists k \text{ s.t. } \tilde{x}_{i,k}(\bv) = 1 \text{ or } \exists k \text{ s.t. } \tilde{x}_{j,k}(\bv) = 1
\right \}.
\end{equation}

In the following lemma, we see that the allocation rule that defines the allocation boundary between groups $i$ and $j$ in the space of virtual values is an \emph{affine allocation}, as defined in \cref{proof:static_alloc}.

\begin{lemma}
     For any $i$ and $j \in [L]$ , there exists $\eta_{i,j}$ such that, if the item is allocated to one of the groups $i$ or $j$, then it is allocated to group $i$ if and only if $\phi_i(v_i) - \phi_j(v_j) \geq \eta_{i,j}$
\end{lemma}

This lemma follows from the proof of \cref{proof:static_alloc} where we let, without loss of generality, groups $i$ and $j$ play the role of groups 1 and 2, respectively and use the definition of $\mathcal{S}$ given by \cref{eqn:S_multi}.

Next, in the following lemma, we establish the relationship of the parameters that define the optimal allocation. 

\begin{lemma}
    Suppose, without loss of generality, that $\eta_{i,j} \geq 0$ and $\eta_{j,k} \geq 0$. Then, we have $\eta_{i,k} = \eta_{i,j} + \eta_{j,k}$.
\end{lemma}

This result follows by a similar argument as is made in  \cref{proof:static_alloc}.
%%%%%%%%%%%%%%%%%%%%%%%%%%%
\subsection{Proof of Theorem \ref{theorem:dynamic_multi}} 
\label{proof:multi_group_dynamic}
%%%%%%%%%%%%%%%%%%%%%%%%%%%

 This proof extends the arguments used in the proof of \cref{theorem:dynamic_both_feasible}. Recall that the variable
\begin{equation}
\sum_{\ell=1}^n x_{i,\ell}^t(\bv^t).    
\end{equation}
indicates whether group $i$ wins the item in round $t$. By Assumption \ref{assumption:no_reserve_price}, notice that, for any $i \in [L], k \in [n]$, we can write the utility of buyer $(i,k)$ as 
\begin{equation}\label{eqn:dynamic_multi_proof_1}
    x_{i,k}^t(b, \bv_{-(i,k)}^t) v_{i,k}^t - p_{i,k}^t(b, \bv_{-(i,k)}^t) + \delta \sum_{j=1}^L \sum_{\ell = 1}^n x_{j, \ell}^t\left( b , \bv_{-(i,k)}^t\right) \nu_i^{t+1}\left( (R_{j}^t-1)/\delta, R_{-j}^t/\delta \right).
\end{equation}
We can again write the adjusted utility as 
\begin{equation} 
x_{i,k}^t(b, \bv_{-(i,k)}^t) v_{i,k}^t - \tilde{p}_{i,k}^t(b, \bv_{-(i,k)}^t),
\end{equation}
where 
\begin{equation}
\tilde{p}_{i,k}^t(b, \bv_{-(i,k)}^t) := p_{i,k}^t(b, \bv_{-(i,k)}^t)
- \delta \sum_{j=1}^L \sum_{\ell = 1}^n x_{j, \ell}^t\left( b , \bv_{-(i,k)}^t\right) \nu_i^{t+1}\left( (R_{j}^t-1)/\delta, R_{-j}^t/\delta \right).   
\end{equation}

By the same argument as the proof of \cref{theorem:dynamic_both_feasible}, 
\begin{equation}\label{eqn:dynamic_multi_proof_p}
    \begin{aligned}
        & p_{i,k}^t(v, \bv_{-(i,k)}^t) = v~x_{i,k}^t(v, \bv_{-(i,k)}^t)- \\ & \int_{\underline{v}_i^t}^v x_{i,k}^t(z, \bv_{-(i,k)}^t) dz + \delta \sum_{j=1}^L \sum_{\ell = 1}^n x_{j, \ell}^t\left( v , \bv_{-(i,k)}^t\right) \nu_i^{t+1}\left( (R_{j}^t-1)/\delta, R_{-j}^t/\delta \right) + c_{i,k}(\bv_{-(i,k)}^t),
    \end{aligned}
\end{equation}
where $c_{i,k}(\cdot)$ is some function that is independent of $v$. By the same argument as the proof of \cref{theorem:dynamic_both_feasible}, the expected payment of buyer $(i,k)$ to the seller at round $t$ is therefore given by
\begin{equation}
\mathbb{E}_{\bv^t} \left[ \phi_i^t(v_{i,k}^t) x_{i,k}^t(\bv^t) 
+ \delta \sum_{j=1}^L \sum_{\ell = 1}^n x_{j, \ell}^t\left(\bv^t\right) \nu_i^{t+1}\left( (R_{j}^t-1)/\delta, R_{-j}^t/\delta \right),
\right ] + c_i^t,
\end{equation}
where $c_i^t$ denotes the group-wise value for $c_{i,k}(\cdot)$ that satisfies the IR constraint with equality. The seller's utility at round $t$ is, therefore, 
\begin{equation}
    \begin{aligned}
        \mathbb{E}_{\bv^t} \left[ \sum_{i=1}^L \sum_{k=1}^n \phi_i^t(v_{i,k}^t) x_{i,k}^t(\bv^t) 
+  n \delta \sum_{j=1}^L \sum_{\ell=1}^n x_{i,\ell}^t(\bv^t) \left( \sum_{i=1}^L \nu_i^{t+1}\left( (R_{j}^t-1)/\delta, R_{-j}^t/\delta \right)\right) \right ]  + n \sum_{i=1}^L c_i^t.
    \end{aligned}
\end{equation}
It follows that the seller's utility from time $t$ onwards is
\begin{align}
&\mathbb{E}_{\bv^t} \left[ \sum_{i=1}^L \sum_{k=1}^n \phi_i^t(v_{i,k}^t) x_{i,k}^t(\bv^t) + \right. \nonumber \\
& \left.
\delta \sum_{j=1}^L \sum_{\ell=1}^n x_{j,\ell}^t(\bv^t) \left( n \sum_{i=1}^L \nu_i^{t+1}\left( (R_{j}^t-1)/\delta, R_{-j}^t/\delta \right) + \mu^{t+1}\left((R_{j}^t-1)/\delta, R_{-j}^t/\delta\right) \right) \right ]  + n \sum_{i=1}^L c_i^t.
\end{align}
As in the proof of \cref{theorem:dynamic_both_feasible}, it follows that we allocate to group $i \in [L]$ in round $t$ in the following region $G_i^t$:
\begin{equation}\label{eqn:G_i_t_multi}
    \begin{aligned}
        \hspace{-3cm} G_i^t := \Big \{ (v_1^t, \hdots, v_L^t) ~\Big \vert~   \phi_i^t(v_i) - \phi_{j}^t(v_{j}) \geq  \delta n \sum_{\ell = 1}^L \Delta_{\ell}^t(j, i) + \delta \Delta_0^{t}(j, i) \: \forall j \in [L], j \neq i \Big \}.
    \end{aligned}
\end{equation}

\end{document}